\pdfoutput=1
\documentclass[a4paper,UKenglish,thm-restate,pdfa]{lipics-v2021-modified}

\usepackage{mathtools}

\bibliographystyle{alpha}

\setlength{\emergencystretch}{2em}
%Complexity Classes
\def\P{\ensuremath{\mathrm{P}}}
\def\NP{\ensuremath{\mathrm{NP}}}
\def\DP{\ensuremath{\mathrm{DP}}}
\def\coDP{\ensuremath{\mathrm{coDP}}}
\def\BH{\ensuremath{\mathrm{BH}}}
\def\coBH{\ensuremath{\mathrm{coBH}}}
\def\E{\ensuremath{\mathrm{E}}}
\def\EE{\ensuremath{\mathrm{EE}}}
\def\NE{\ensuremath{\mathrm{NE}}}
\def\NEE{\ensuremath{\mathrm{NEE}}}
\def\FP{\ensuremath{\mathrm{FP}}}
\def\UP{\ensuremath{\mathrm{UP}}}
\def\DisjNP{\ensuremath{\mathrm{DisjNP}}}

\def\coNP{\ensuremath{\mathrm{coNP}}}
\def\coNE{\ensuremath{\mathrm{coNE}}}
\def\coNEE{\ensuremath{\mathrm{coNEE}}}

\def\TFNP{\ensuremath{\mathrm{TFNP}}}

\def\TAUT{\ensuremath{\mathrm{TAUT}}}
\def\SAT{\ensuremath{\mathrm{SAT}}}

\def\TFNP{\ensuremath{\mathrm{TFNP}}}
\def\BPP{\ensuremath{\mathrm{BPP}}}
\def\PSPACE{\ensuremath{\mathrm{PSPACE}}}

%Complete Sets
\def\NPC{\ensuremath{\mathrm{C}}}
\def\DPC{\ensuremath{\mathrm{D}}}

%Sets
\def\N{\ensuremath{\mathrm{\mathbb{N}}}}

%codewords
\newcommand{\codecoDP}[4]{\langle 0^{#1}, 0^{#2}, #3, 0^{#4} \rangle}
\renewcommand{\c}[3]{\langle 0^{#1}, 0^{4(|#2|^{#1}+#1+|#3|)}, #2, #3 \rangle}

%Hypothesis

%Abbreviations

\renewcommand{\iff}{\,\mathop{\Longleftrightarrow}\,}

%Operators, Relators and function names
\DeclareMathOperator{\dom}{dom}
\DeclareMathOperator{\img}{img}
\DeclareMathOperator{\supp}{supp}

\def\leqmp{\ensuremath{\leq_\mathrm{m}^\mathrm{p}}}
\def\psim{\ensuremath{\leq^\mathrm{p}}}
\def\sqsubsetneq{\mathrel{\sqsubseteq\kern-0.92em\raise-0.15em\hbox{\rotatebox{313}{\scalebox{1.1}[0.75]{\(\shortmid\)}}}\scalebox{0.3}[1]{\ }}}
\def\sqsupsetneq{\mathrel{\sqsupseteq\kern-0.92em\raise-0.15em\hbox{\rotatebox{313}{\scalebox{1.1}[0.75]{\(\shortmid\)}}}\scalebox{0.3}[1]{\ }}}

%Sondersachen
\newcommand{\card}[1]{|#1|_c}

%Variable
\newcommand\x{0}

%if-Blöcke
\newcommand{\ifblock}[2]{\if\x0{#1}\fi\if\x1{#2}\fi} %Verwendung: Erster Teil für TR, zweiter Teil für MFCS

\hideLIPIcs

\title{Upward Translation of Optimal and P-Optimal Proof Systems in the Boolean Hierarchy over NP}
\titlerunning{Upward Translation of (P-)Optimal Proof Systems in the Boolean Hierarchy over NP}

\author{Fabian Egidy}{Julius-Maximilians-Universität Würzburg, Germany}{fabian.egidy@uni-wuerzburg.de}{https://orcid.org/0000-0001-8370-9717}{supported by the German Academic Scholarship Foundation.}
\author{Christian Glaßer}{Julius-Maximilians-Universität Würzburg, Germany}{christian.glasser@uni-wuerzburg.de}{}{}
\author{Martin Herold}{Max-Planck-Institut für Informatik, Saarbrücken, Germany}{mherold@mpi-inf.mpg.de}{}{Funded by the Deutsche Forschungsgemeinschaft (DFG, German Research
Foundation) – 399223600.
}
\authorrunning{F. Egidy, C. Glaßer, and M. Herold}
\Copyright{Fabian Egidy, Christian Glaßer, and Martin Herold}
\ccsdesc{Theory of computation~Proof complexity}
\ccsdesc{Theory of computation~Oracles and decision trees}
\keywords{Computational Complexity, Boolean Hierarchy, Proof Complexity, Proof Systems, Oracle Construction}
\nolinenumbers

\begin{document}
\maketitle
\begin{abstract}
We study the existence of optimal and p-optimal proof systems for classes in the Boolean hierarchy over $\NP$. Our main results concern $\DP$, i.e., the second level of this hierarchy:
\begin{itemize}
\item If all sets in $\DP$ have p-optimal proof systems, then all sets in $\coDP$ have p-optimal proof systems.
\item The analogous implication for optimal proof systems fails relative to an oracle.
\end{itemize}
As a consequence, we clarify such implications for all classes $\mathcal{C}$ and $\mathcal{D}$ in the Boolean hierarchy over $\NP$: either we can prove the implication or show that it fails relative to an oracle.

Furthermore, we show that the sets $\SAT$ and $\TAUT$ have p-optimal proof systems, if and only if all sets in the Boolean hierarchy over $\NP$ have p-optimal proof systems which is a new characterization of a conjecture studied by Pudlák.

\end{abstract}

\section{Introduction}
This paper contributes to the study of proof systems initiated by Cook and Reckhow \cite{cr79}. A proof system for a set $L$ is a polynomial-time computable function $f$ whose range is $L$. Cook and Reckhow motivate the study of proof systems with the $\NP = \coNP$ question: they consider propositional proof systems (pps), i.e., proof systems for the set of propositional tautologies ($\TAUT$). They show that there exists a pps with polynomially bounded proofs if and only if $\NP = \coNP$. This approach to the $\NP = \coNP$ question is called the Cook-Reckhow program \cite{bus96}. To obtain $\NP \not = \coNP$ one can either show that optimal pps (i.e., pps with at most polynomially longer proofs than any other pps) do not exist or show that a specific pps is optimal and has a non-polynomial lower bound on the length of proofs. This connection led to the investigation of upper and lower bounds for different pps \cite{kra19} as well as the existence of optimal and p-optimal\footnote{A stronger notion of optimal. We write (p-)optimal when the statement holds using optimal as well as p-optimal.} proof systems for general sets.

The latter question was explicitly posed by Krajíček and Pudlák \cite{kp89} in the context of finite consistency. They revealed the following connection between both concepts: optimal pps exist if and only if there is a finitely axiomatized theory $S$ that proves for every finitely axiomatized theory $T$ the statement ``$T$ has no proof of contradiction of length $n$'' by a proof of polynomial length in $n$. If optimal pps exist, then a weak version of Hilbert's program is possible, i.e., proving the ``consistency up to some feasible length of proofs'' of all mathematical theories \cite{pud96}. We refer to Krajíček \cite{kra95} and Pudlák \cite{pud98} for details on the relationship between proof systems and bounded arithmetic. More recently, Pudlák \cite{pud17} draws new connections of (p-)optimal proof systems and statements about incompleteness in the finite domain. 

Furthermore, proof systems have shown to be tightly connected to promise classes, especially pps to the class of disjoint $\NP$-pairs, called $\DisjNP$. Initiated by Razborov \cite{raz94}, who showed that the existence of p-optimal pps implies the existence of complete sets in $\DisjNP$, many further connections were investigated. More generally, Köbler, Messner and Torán \cite{kmt03} show that the existence of p-optimal proof systems for sets of the polynomial-time hierarchy imply complete sets for promise classes like $\UP, \NP \cap \coNP$, and $\BPP$. Beyersdorff, Köbler, and Messner \cite{bkm09} and Pudlák \cite{pud17} connect proof systems to function classes by showing that p-optimal proof systems for $\SAT$ imply complete sets for $\TFNP$. Questions regarding non-deterministic function classes can be characterized by questions about proof systems \cite{bkm09}. Beyersdorff \cite{bey04, bey06, bey07, bey10}, Beyersdorff and Sadowski \cite{bs11} and Glaßer, Selman, and Zhang \cite{gsz07, gsz09} show further connections between pps and disjoint $\NP$-pairs.

The above connections to important questions of complexity theory, bounded arithmetic, and promise classes motivate the investigation of the question ``which sets do have optimal proof systems'' posed by Messner \cite{mes00}. Krajíček and Pudlák \cite{kp89} were the first to study sufficient conditions for pps by proving that $\NE = \coNE$ implies the existence of optimal pps and $\E = \NE$ implies the existence of p-optimal pps. Köbler, Messner, and Torán \cite{kmt03} improve this result to $\NEE = \coNEE$ for optimal pps and $\EE = \NEE$ for p-optimal pps. Sadowski \cite{sad02} shows different characterizations for the existence of optimal pps, e.g., the uniformly enumerability of the class of all easy subsets of $\TAUT$. In certain settings one can prove the existence of optimal proof systems for different classes: e.g., by allowing one bit of advice \cite{ck07}, considering randomized proof systems \cite{hi10, hir10}, or using a weak notion of simulation \cite{ps10}. 

Messner \cite{mes00} shows that all nonempty\footnote{By our definition, $\FP$-functions are total, thus the empty set has no proof system. For the rest of this paper, we omit the word ``nonempty'' when referring to proof systems for all sets of a class, since this is only a technicality.} sets in $\P$ but not all sets in $\E$ have p-optimal proof systems. Similarly, all sets in $\NP$ but not all sets in $\coNE$ have optimal proof systems. Therefore, when going from smaller to larger complexity classes, there has to be a tipping point such that all sets contained in classes below this point have (p-)optimal proof systems, but some set contained in all classes above this point has no (p-)optimal proof systems. Unfortunately, oracle constructions tell us that for many classes between $\P$ and $\E$ (resp., $\NP$ and $\coNE$) the following holds: with relativizable proofs one can neither prove nor refute that p-optimal (resp., optimal) proof systems exist (e.g.\ $\coNP$ \cite{bgs75, kp89} and $\PSPACE$ \cite{bgs75, dek76}). Thus, with the currently available means it is not possible to precisely locate this tipping point, but we can rule out certain regions for its location. For this, we investigate how the existence of \mbox{(p-)optimal} proof system for all sets of the class $\mathcal{C}$ ``translate upwards'' to all sets of a class $\mathcal{D}$ with $\mathcal{C} \subseteq \mathcal{D}$. This rules out tipping points between $\mathcal{C}$ and $\mathcal{D}$.

\subparagraph*{Our Contribution.}  
Motivated by Messner's general question, we study the existence of \mbox{(p-)optimal} proof systems for classes inside the Boolean hierarchy over $\NP$. We use the expression ``the class $\mathcal{C}$ has (p-)optimal proof systems'' for ``all sets of a class $\mathcal{C}$ have \mbox{(p-)optimal} proof systems''. We say that two classes $\mathcal{C}$ and $\mathcal{D}$ are equivalent with respect to (p-)optimal proof systems if $\mathcal{C}$ has (p-)optimal proof systems if and only if $\mathcal{D}$ has (p-)optimal proof systems.

For the classes of the Boolean hierarchy over $\NP$, denoted by $\BH$, we identify three equivalence classes for p-optimal proof systems and three other classes for optimal proof systems. We also show that the classes of the bounded query hierarchy over $\NP$ are all equivalent for p-optimal proof systems and we identify two equivalence classes for optimal proof systems. Moreover, we show that relativizable techniques cannot prove all identified equivalence classes to coincide. These results follow from our main theorems:
\begin{enumerate}[(i)]
\item If $\DP$ has p-optimal proof systems, then $\coDP$ has p-optimal proof systems.
\item There exists an oracle relative to which $\coNP$ has p-optimal proof systems and $\coDP$ does not have optimal proof systems.
\end{enumerate}
Using the result by Köbler, Messner, and Torán that (p-)optimality is closed under intersection~\cite{kmt03} and two oracles by Khaniki \cite{kha22}, we obtain the equivalence classes visualized in Figure \ref{fig:results}, which cannot be proved to coincide with relativizable proofs.

\begin{figure}[h]
\begin{centering}
\includegraphics[width=0.8\textwidth]{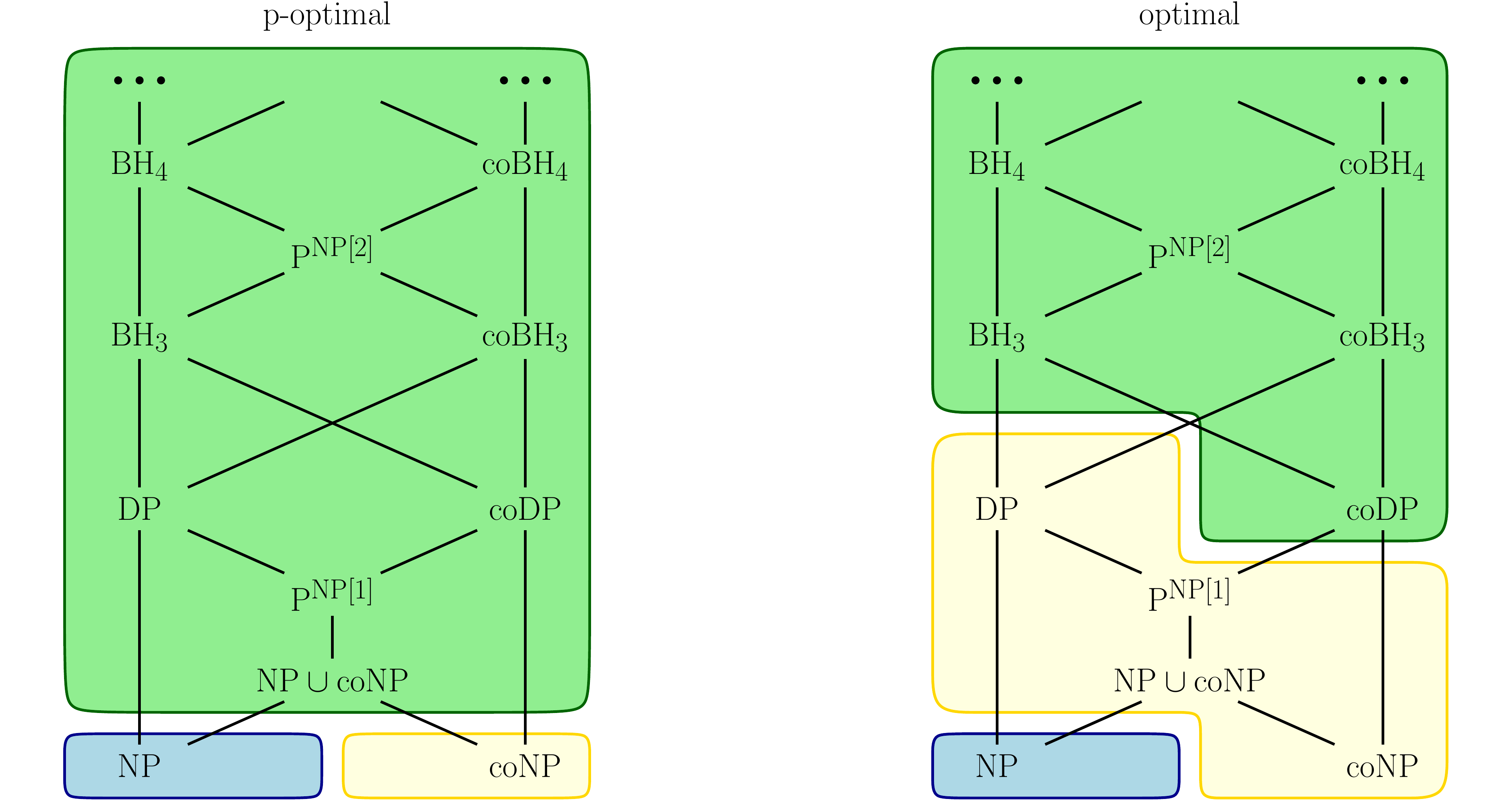}
  \caption{Equivalence classes for p-optimal proof systems (left) and optimal proof systems (right) in the Boolean hierarchy over $\NP$ and the bounded query hierarchy over $\NP$.}
  \label{fig:results}
\end{centering}
\end{figure}
This clarifies all questions regarding relativizably provable translations of (p-)optimal proof systems for classes in the Boolean hierarchy over $\NP$ and the bounded query hierarchy over $\NP$. We cannot expect to prove any further translations with the currently available means, because for every such translation there is an oracle against it. So we are dealing with an interesting situation: while p-optimal proof systems for $\DP$ {\em relativizably imply} p-optimal proof systems for $\coDP$, this does not hold for optimal proof systems. Similarly, all classes of the bounded query hierarchy over $\NP$ are equivalent with respect to p-optimal proof systems, but $\P^{\NP [1]}$ and $\P^{\NP [2]}$ cannot be shown to be equivalent with respect to optimal proof systems by a relativizable proof. The result drastically limits the potential locations of a tipping point in the $\BH$ and the bounded query hierarchy over $\NP$. They can only occur between two classes belonging to two different equivalence classes.

Furthermore, our results provide a new perspective on an hypothesis related to feasible versions of Gödel's incompleteness theorem: Pudlák \cite{pud17} studies several conjectures about incompleteness in the finite domain by investigating the (un)provability of sentences of some specific form in weak theories. These conjectures can also be expressed as the non-existence of complete sets in promise classes or non-existence of (p-)optimal proof systems for sets. Pudlák considers the conjecture $\mathsf{CON} \lor \mathsf{SAT}$ stating that $\TAUT$ does not have p-optimal proof systems or $\SAT$ does not have p-optimal proof systems. Khaniki \cite{kha22} proves this conjecture to be equivalent to $\mathsf{RFN}_1$, which is another conjecture considered by Pudlák. Our results show that both conjectures are equivalent to the statement that $\BH$ does not have p-optimal proof systems.

\section{Preliminaries}\label{sec:prelim}

Let $\Sigma \coloneqq \{0,1\}$ be the default alphabet and $\Sigma ^*$ be the set of finite words over $\Sigma$. We call subsets of $\Sigma^*$ languages and sets of languages classes. We denote the length of a word $w \in \Sigma^*$ by $|w|$\ifblock{ and the length of a set of words $A \subseteq \Sigma ^*$ by $\ell (A) \coloneqq \sum _{w \in A}|w|$. The empty word has length $0$ and is denoted by $\varepsilon$}{}. The $i$-th letter of a word $w$ for $0 \leq i < |w|$ is denoted as $w(i)$, i.e., $w=w(0)w(1)\cdots w(|w|-1)$. \ifblock{If $v$ is a (strict) prefix of $w$, we write $v \sqsubseteq w$ ($v \sqsubsetneq w$) or $w \sqsupseteq v$ ($w \sqsupsetneq v$).}{}

The set of all (positive) natural numbers is denoted as $\N$ ($\N^+$). We write the empty set as $\emptyset$. We identify $\Sigma^*$ with $\N$ through the polynomial time computable and invertible bijection $\Sigma^* \to \N; w \mapsto \sum _{i<|w|}(1+w(i))2^i$. This is a variant of the dyadic representation. Thus, we can treat words from $\Sigma^*$ as numbers from $\N$ and vice versa, which allows us to use notations, relations and operations of words for numbers and vice versa (e.g.\ we can define the length of a number by this).\ifblock{ Note that $x < y$ if and only if $x < _{\text{lex}} y$ for $x,y \in \Sigma ^*$, where $<_{\text{lex}}$ is the quasi-lexicographic ordering relation for words.}{} We resolve the ambiguity of $0^i$ and $1^i$ by always interpreting them as words from $\Sigma^*$. The cardinality of a set $A$ is denoted as $\card{A}$. For $\circ \in \{<, \leq, =, \geq, >\}$, a set $A \subseteq \Sigma^*$ and a number $n \in \N$ we define $A^{\circ n} \coloneqq \{w \in A \mid |w| \circ n\}$. For a clearer notation we use $\Sigma ^{\circ n}$ as ${\Sigma^*}^{\circ n}$ and $\Sigma^n$ for $\Sigma^{=n}$. The operators $\cup$, $\cap$, and $\setminus$ denote the union, intersection and set-difference. We denote the complement of a set $A$ relative to $\Sigma ^*$ as $\overline{A} \coloneqq \Sigma^* \setminus A$.

\ifblock{The domain and image of a function $f$ are denoted as $\dom (f)$ and $\img (f)$. The support of $f$ is defined as $\supp (f) \coloneqq \{x \in \dom (f) \mid f(x) \not = 0\}$. A function $f$ is injective on its support, if, for any $a,b \in \supp (f)$, $f(a)=f(b)$ implies $a=b$. A function $f'$ is an extension of a function $f$, denoted as $f \sqsubseteq f'$, if $f(x) = f'(x)$ for all $x \in \dom (f)$. If $x \notin \dom (f)$, then $f \cup \{x \mapsto y\}$ denotes the extension $f'$ of $f$ such that $f'(z) = f(z)$ for $z \not = x$ and $f'(x)=y$. For a (not necessarily injective) function $f$ the function $f^{-1}$ denotes its inverse function, i.e., for $y \in \img (f)$ it holds that $f(f^{-1}(y))=y$.}{}

\ifblock{}{The image of a function $f$ is denoted as $\img (f)$.} Let $\langle \cdot \rangle \colon \bigcup _{i \geq 0} \N^i \to \N$ be an injective polynomial time computable and invertible pairing function such that $|\langle u_1, \dots , u_n \rangle | = 2(|u_1| + \cdots + |u_n| + n)$. The logarithm to the base $2$ is denoted as $\log$. Furthermore, we define polynomial functions $p_i \colon \N \to \N$ for $i \in \N^+$ by $p_i(x) \coloneqq x^i+i$. \ifblock{The function computing the maximum element of a finite subset of $\N$ is denoted as $\max$.}{}

We use the default model of a Turing machine in the deterministic as well as in the non-deterministic variation, abbreviated by DTM and NTM respectively. The language decided by a Turing machine $M$ is denoted as $L(M)$. For a number $s \in \N$ the language of words that are accepted by a Turing machine $M$ in $s$ computation steps is denoted as $L^s(M)$. We use Turing transducer to compute functions. For a Turing transducer $F$ we
write $F(x)=y$ when on input $x$ the transducer outputs $y$. A Turing transducer $F$ computes a total function and we sometimes refer to the function computed by $F$  as ``the function $F$''. Let $\{F_i\}_{i \in \N}$ and $\{G_i\}_{i \in \N}$ be standard enumerations of polynomial time
Turing transducers. Let $\{N_i\}_{i \in \N}$ be a standard enumeration of
non-deterministic polynomial time Turing machines with the special property that
$N_0$ is the machine that always rejects and $N_1$ is the machine that always
accepts, that is $L(N_0) = \emptyset$ and $L(N_1) = \N$. The runtime of $F_i$,
$G_i$ and $N_i$ is bounded by $p_i$. 
\begin{proposition}\label{prop:universalTM}
    There is a Turing machine $M$ and a Turing transducer $F$ such that for all
    $i,s,x\in \N$ the following properties hold:
    \begin{itemize}
    \item $\langle i,x,0^s\rangle \in L(M) \Leftrightarrow x\in L^s(N_i)$
    \item $F(\langle i,x,0^s\rangle) \coloneqq \begin{cases}
            \langle 1, F_i(x) \rangle &\text{if } F_i(x) \text{ stops within
            } s \text{ steps}\\
            \langle 0,0\rangle &\text{else}
        \end{cases}$
 
    \item Both machines run in time $O(|i|s \log s)$.
    \end{itemize}
\end{proposition}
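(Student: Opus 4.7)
The plan is to build $M$ and $F$ as standard efficient universal simulators. On input $\langle i, x, 0^s \rangle$, both machines first use the polynomial-time invertible pairing function to decode $i$, $x$, and $0^s$; the decoded objects are laid out on dedicated work-tapes so that $i$ becomes the description of the machine to be simulated, $x$ its input, and $s$ a step budget. Since the pairing function is polynomial-time computable and invertible, this preprocessing fits comfortably inside the target bound $O(|i| s \log s)$.

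Next, $M$ simulates $N_i$ on $x$ for up to $s$ steps, and $F$ simulates $F_i$ on $x$ for up to $s$ steps. I would use a Hennie-Stearns style simulation: the description of $N_i$ (resp.\ $F_i$) sits on one tape, the simulated configuration on another, and a step counter on a third. One simulated step then costs $O(|i|)$ for the table lookup on the description tape plus an amortised $O(\log s)$ for the head movement on the simulated tape, so $s$ steps cost a total of $O(|i| s \log s)$. $M$ accepts iff the simulated computation of $N_i$ accepts within $s$ steps, and $F$ tests whether $F_i$ halts within $s$ steps; if so it copies the output and emits $\langle 1, F_i(x)\rangle$, otherwise it emits $\langle 0,0\rangle$. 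Checking the step counter against $s$ (stored in unary) adds only constant amortised overhead per simulated step, so the same bound applies to $F$.

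The main obstacle is the $\log s$ factor rather than a linear overhead in $s$: a naive universal simulator cannot in general access an arbitrary cell of a simulated tape in constant time, which would give an $O(s)$ slowdown per step. This is exactly what the classical efficient universal simulation of Hennie and Stearns resolves, and I would invoke it as a black box rather than reconstruct the tape-reorganisation argument. Granting that ingredient, the first two bullet points of the proposition are just the correctness of the simulation, and the third bullet point is precisely the running-time bound obtained by summing the per-step cost.
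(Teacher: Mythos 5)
The paper states Proposition~\ref{prop:universalTM} without proof, treating efficient universal simulation as a known fact, so there is no proof in the text to compare against. Your argument --- decode the pairing, then simulate with the Hennie--Stearns tape reorganisation so that each simulated step costs $O(|i|)$ for the transition-table lookup plus amortised $O(\log s)$ for tape access --- is precisely the classical route to an $O(|i|\,s\log s)$ bound, and invoking Hennie--Stearns as a black box is appropriate here.

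One point you brush past: you claim the decoding of $\langle i,x,0^s\rangle$ fits inside $O(|i|\,s\log s)$ because the pairing function is polynomial-time computable, but that only bounds decoding by a polynomial in the \emph{input} length, which grows with $|x|$. With the paper's concrete encoding $x$ precedes $0^s$, so even locating the step budget costs $\Omega(|x|)$, which exceeds $|i|\,s\log s$ once $|x|$ dwarfs $s$. This is an imprecision inherited from the proposition's statement rather than a defect of your Hennie--Stearns core: in every place the proposition is invoked in the paper (for example the runtime estimates in the proof of Theorem~\ref{thrm:C2popt}), the budget satisfies $s\ge|x|$, so the input length is $O(|i|+s)$ and the bound holds. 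A fully careful write-up would either note this implicit assumption $|x|\le s$ or add an $O(n)$ term for reading the $n$-bit input.
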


$\FP$, $\P$, and $\NP$ denote standard complexity classes \cite{pap81}. For a class $\mathcal{C}$ define $\mathrm{co}\mathcal{C} \coloneqq \{A \subseteq \Sigma^* \mid \overline{A} \in \mathcal{C}\}$. We define the Boolean hierarchy over $\NP$ inductively. Let $\mathcal{C}$ and $\mathcal{D}$ be arbitrary complexity classes. First, we define boolean operators on classes:
\begin{align*}
\mathcal{C} \land \mathcal{D} \coloneqq&\  \{A \cap B \mid A \in \mathcal{C} \land B \in \mathcal{D}\}\\
\mathcal{C} \lor \mathcal{D} \coloneqq&\  \{A \cup B \mid A \in \mathcal{C} \lor B \in \mathcal{D}\}
\end{align*}
Then $\BH _1 \coloneqq \NP$, $\BH_{2k} \coloneqq \coNP \land \BH _{2k-1}$, $\BH_{2k+1} \coloneqq \NP \lor \BH_{2k}$, and $\BH \coloneqq \bigcup _{k\geq 1} \BH_k$ where $\BH _2$ is called $\DP$ and $\BH$ is called Boolean hierarchy over $\NP$. We want to emphasize that $\DP = \NP \land \coNP$ and $\coDP = \NP \lor \coNP$. Wagner \cite{wag87} showed that $\BH_k \subseteq \BH_{k+1}$ and $\BH_k \subseteq \coBH_{k+1}$. The classes $\P^{\NP [k]}$ for $k \in \N ^+$ contain all sets that can be accepted by a polynomial time Turing machine that queries at most $k$ elements from an $\NP$-set. The resulting hierarchy $\P^{\NP[1]}, \P^{\NP [2]}, \dots$ is called bounded query hierarchy over $\NP$. Beigel \cite{bei91} shows that $\BH _{2^k-1} \cup \coBH_{2^k-1} \subseteq \P^{\NP[k]} \subseteq \BH_{2^k} \cap \coBH _{2^k}$.

We use the common polynomial time many-one reducibility for sets $A,B \subseteq \Sigma ^*$, i.e., $A \leqmp B$ if there exists an $f \in \FP$ such that $x \in A \Leftrightarrow f(x) \in B$. For a class $\mathcal{C}$ and some problem $A$, we say that $A$ is hard for $\mathcal{C}$ if for all $B \in \mathcal{C}$ it holds $B \leqmp A$. The set $A$ is called complete for $\mathcal{C}$ if $A \in \mathcal{C}$ and $A$ is hard for $\mathcal{C}$. We define the following complete problems for $\NP$ and $\DP$.
\begin{align*}
\NPC \coloneqq&\ \{\langle 0^i, x, 0^p \rangle \mid i \in \N, x \in \Sigma ^* \mbox{ and } x \in L^p(N_i)\}\\
\DPC \coloneqq&\ \{\langle 0^i, 0^j, x, 0^p \rangle \mid i,j \in \N, x \in \Sigma ^* \mbox{ and } x \in L^p(N_i) \cap \overline{L^p(N_j)}\}\\
\DPC ' \coloneqq&\ \DPC \cup \{w \mid \nexists i,j \in \N, x \in \Sigma ^* \colon w = \langle 0^i, 0^j, x, 0^p \rangle\}
\end{align*}
It is easy to see that $\NPC$ is $\NP$-complete and $\DPC$ and $\DPC '$ are $\DP$-complete. Furthermore, their complements are complete for $\coNP$ and $\coDP$ respectively. The purpose of $\DPC '$ is that $\overline{\DPC '}$ consists only of words of the form $\langle 0^i, 0^j, x, 0^p \rangle$, which simplifies some arguments in section \ref{sec:implication}. Let $N_{\NPC}$ denote the polynomial time machine with $L(N_{\NPC}) = \NPC$.

We use proof systems for sets defined by Cook and Reckhow \cite{cr79}. They define a function $f \in \FP$ to be a proof system for $\img (f)$. Furthermore:
\begin{itemize}
\item A proof system $g$ is (p-)simulated by a proof system $f$, denoted by $g \leq f$ (resp., $g \psim f$), if there exists a total function $\pi$ (resp., $\pi \in \FP$) and a polynomial $p$ such that $|\pi(x)| \leq p(|x|)$ and $f(\pi(x)) = g(x)$ for all $x \in \Sigma^*$. In this context the function $\pi$ is called simulation function. Note that $g \psim f$ implies $g \leq f$.
\item A proof system $f$ is (p-)optimal for $\img(f)$, if $g \leq f$ (resp., $g \psim f$) for all $g \in \FP$ with $\img (g) = \img (f)$.
\item A complexity class $\mathcal{C}$ has (p-)optimal proof systems, if every $A \in \mathcal{C}$ with $A \not = \emptyset$ has a (p-)optimal proof system.
\item We say that (p-)optimal proof systems translate from a class $\mathcal{C}$ to $\mathcal{D}$ if the existence of (p-)optimal proof systems for $\mathcal{C}$ implies their existence for $\mathcal{D}$.
\end{itemize}
By the following result of Köbler, Messner and Torán \cite{kmt03}, we can prove or refute the existence of (p-)optimal proof systems for a class $\mathcal{C}$ by proving or refuting the existence of such proof systems for a complete set of $\mathcal{C}$.
\begin{proposition}[\cite{kmt03}]\label{prop:closure reducibility}
If $A \subseteq \Sigma^*$ has a (p-)optimal proof system and $\emptyset \not = B \leqmp A$, then $B$ has a (p-)optimal proof system.
\end{proposition}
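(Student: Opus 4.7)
The plan is to show that a (p-)optimal proof system for $B$ can be built out of a (p-)optimal proof system $f$ for $A$ together with a reduction witnessing $B \leqmp A$. Fix a reduction $r \in \FP$ with $x \in B \iff r(x) \in A$, and fix some $b_0 \in B$ (possible since $B \neq \emptyset$). Define a candidate proof system $h$ for $B$ by
\[
  h(\langle x,\pi\rangle) = \begin{cases} x & \text{if } f(\pi) = r(x), \\ b_0 & \text{otherwise.}\end{cases}
\]
Clearly $h \in \FP$. For the image: any valid output is either $b_0 \in B$ or some $x$ with $r(x) = f(\pi) \in A$, hence $x \in B$; conversely, for any $x \in B$ we have $r(x) \in A = \img(f)$, so picking any $f$-proof $\pi$ of $r(x)$ gives $h(\langle x,\pi\rangle)=x$. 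Thus $\img(h)=B$.

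Next I would show $h$ is (p-)optimal. Let $g \in \FP$ be an arbitrary proof system for $B$. The goal is to produce, given $w$, a short proof $\sigma(w)$ with $h(\sigma(w)) = g(w)$. The natural choice is $\sigma(w) = \langle g(w), \pi_w\rangle$ where $\pi_w$ is an $f$-proof of $r(g(w))$, so the task reduces to obtaining such a $\pi_w$ efficiently. For this I would construct an auxiliary proof system $g' \in \FP$ for $A$ by ``tagging'' inputs:
\[
  g'(\langle 0,w\rangle) := r(g(w)), \qquad g'(\langle 1,v\rangle) := f(v).
\]
Then $\img(g') \subseteq A$ and $\img(g') \supseteq \img(f) = A$, so $g'$ is a proof system for $A$. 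By (p-)optimality of $f$, there is a simulation function $\mu$ (total if $f$ is optimal; in $\FP$ if $f$ is p-optimal) and a polynomial $p$ with $|\mu(y)|\le p(|y|)$ and $f(\mu(y))=g'(y)$ for all $y$.

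Finally, I would set $\sigma(w) := \langle g(w), \mu(\langle 0,w\rangle)\rangle$. Then $f(\mu(\langle 0,w\rangle)) = g'(\langle 0,w\rangle) = r(g(w))$, so $h(\sigma(w)) = g(w)$; and $|\sigma(w)|$ is polynomial in $|w|$ because $g \in \FP$ and $\mu$ satisfies the length bound. In the p-optimal case $\mu \in \FP$, so $\sigma \in \FP$, giving $g \psim h$; in the optimal case $\mu$ is merely total, so $\sigma$ is total with polynomial-length outputs, giving $g \leq h$. Since $g$ was arbitrary, $h$ is (p-)optimal. I do not expect a genuine obstacle here; the only point requiring care is ensuring that $\img(g') = A$ exactly (rather than just a subset of $A$), which is why I would combine $r \circ g$ with $f$ on tagged inputs instead of using $r \circ g$ alone.
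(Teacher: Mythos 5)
Your argument is correct and follows the standard route for this result (the paper itself only cites~\cite{kmt03} and does not reproduce a proof). The two load-bearing ideas are both present and handled properly: the proof system $h$ that bundles a $B$-candidate $x$ with an $f$-proof of $r(x)$, and—crucially—the ``tagged union'' construction of $g'$ so that $\img(g') = A$ exactly rather than merely a subset; you rightly flag this as the one point where a na\"{\i}ve approach (taking $g' = r\circ g$) would fail, since $r\circ g$ need not be surjective onto $A$ and then $f$'s (p-)optimality would not apply. The length and efficiency bookkeeping for $\sigma$ is also right: $|g(w)|$ is polynomial in $|w|$ because $g\in\FP$, and $|\mu(\langle 0,w\rangle)|$ is polynomially bounded because $\mu$ is a simulation function, so $\sigma$ has polynomially bounded output and is in $\FP$ exactly when $\mu$ is. The only cosmetic gap is that you should stipulate default outputs for $h$ and $g'$ on inputs not of the expected paired form (e.g.\ $b_0$ for $h$, some fixed element of $A$ for $g'$), so that both are total $\FP$-functions as required; this does not affect the argument.
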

\begin{corollary}\label{cor:closure reducibility}
If $A \subseteq \Sigma ^*$ is a hard set for some class $\mathcal{C}$ and $A$ has a (p-)optimal proof system, then $\mathcal{C}$ has (p-)optimal proof systems.
\end{corollary}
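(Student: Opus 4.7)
The plan is to derive this as a direct consequence of Proposition \ref{prop:closure reducibility}. The corollary is essentially a uniform restatement of the proposition: instead of passing from one concrete set $A$ to one reducible set $B$, we pass from a hard set $A$ to every (nonempty) member of its class. So the proof is just a short unpacking of definitions, chained with the proposition.

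Concretely, I would fix an arbitrary nonempty $B \in \mathcal{C}$ and aim to show that $B$ has a (p-)optimal proof system; since $B$ was arbitrary, this gives (p-)optimal proof systems for $\mathcal{C}$ by definition. To produce such a proof system for $B$, I invoke the hardness of $A$ for $\mathcal{C}$, which gives $B \leqmp A$. Together with the hypothesis that $A$ has a (p-)optimal proof system and that $B \neq \emptyset$, the assumptions of Proposition \ref{prop:closure reducibility} are met, and the proposition directly yields a (p-)optimal proof system for $B$.

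The only subtlety worth mentioning is the nonemptiness side condition in Proposition \ref{prop:closure reducibility}: recall from the preliminaries that $\FP$-functions are total, so the empty set has no proof system at all, and the definition of ``$\mathcal{C}$ has (p-)optimal proof systems'' is stated only for nonempty sets of $\mathcal{C}$. Hence restricting attention to nonempty $B \in \mathcal{C}$ is both necessary and harmless. I do not foresee any real obstacle; the entire argument is a few lines of unpacking and a single application of the proposition.
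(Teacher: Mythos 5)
Your argument is correct and is exactly the intended one-line derivation: fix a nonempty $B\in\mathcal{C}$, use hardness to get $B\leqmp A$, and apply Proposition~\ref{prop:closure reducibility}. The paper leaves the corollary without an explicit proof for precisely this reason, and your handling of the nonemptiness side condition matches the convention stated in the preliminaries.
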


Furthermore, it was shown by Köbler, Messner, and Torán \cite{kmt03} that the class of sets having (p-)optimal proof systems is closed under intersection. This result can easily be extended to the operator $\land$ for complexity classes.
\begin{proposition}[\cite{kmt03}]\label{prop:closure intersection}
If $A,B \subseteq \Sigma^*$, $A \cap B \not = \emptyset$ and both sets have a (p-)optimal proof system, then $A \cap B$ has a (p-)optimal proof system.
\end{proposition}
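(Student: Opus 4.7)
The plan is to start with (p-)optimal proof systems $f$ for $A$ and $g$ for $B$, then construct a candidate proof system $h$ for $A \cap B$ as follows: on input $\langle x, y \rangle$, compute $f(x)$ and $g(y)$; if they agree, output the common value, otherwise output a fixed element $z_0 \in A \cap B$ (which exists by hypothesis). This $h$ is clearly in $\FP$, its image is contained in $A \cap B$ (since any output value equals both some $f(x) \in A$ and some $g(y) \in B$, or equals $z_0$), and conversely every $z \in A \cap B$ is hit, because $z \in A = \img(f)$ and $z \in B = \img(g)$ give an $(x,y)$ with $f(x) = z = g(y)$.

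The main work is verifying (p-)optimality of $h$. Given an arbitrary proof system $h' \in \FP$ with $\img(h') = A \cap B$, I cannot directly apply optimality of $f$ to $h'$ because $\img(h') \neq A$. The standard workaround is to ``pad'' $h'$ up to a proof system for $A$: define $\tilde f(w)$ to output $h'(v)$ if $w = 0v$ and $f(v)$ if $w = 1v$. Then $\tilde f \in \FP$ and $\img(\tilde f) = (A \cap B) \cup A = A$, so by (p-)optimality of $f$ there is a (polynomial-time) simulation function $\pi_f$ with $f(\pi_f(w)) = \tilde f(w)$ and $|\pi_f(w)| \leq p(|w|)$. In particular $f(\pi_f(0v)) = h'(v)$. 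The analogous construction with $g$ yields $\pi_g$ with $g(\pi_g(0v)) = h'(v)$.

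Now I define the simulation $\pi(v) \coloneqq \langle \pi_f(0v), \pi_g(0v) \rangle$. Since $f(\pi_f(0v))$ and $g(\pi_g(0v))$ are both equal to $h'(v)$, the definition of $h$ gives $h(\pi(v)) = h'(v)$, and $|\pi(v)|$ is polynomially bounded in $|v|$ by the pairing function bound together with the polynomial bounds on $\pi_f,\pi_g$. If $f$ and $g$ are p-optimal, then $\pi_f,\pi_g \in \FP$ and hence $\pi \in \FP$, giving p-optimality of $h$; if they are only optimal, $\pi$ is a total function, giving optimality of $h$.

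I do not expect a serious obstacle here; the only subtlety is the padding trick to convert a proof system for $A \cap B$ into one for $A$ (and for $B$), so that optimality of $f$ and $g$ can actually be invoked. Everything else is bookkeeping on polynomial bounds and on the fact that $z_0$ plays no role in the simulation argument because the relevant outputs of $h$ happen on the ``agreement'' branch.
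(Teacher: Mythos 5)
The paper states this proposition as a cited result from \cite{kmt03} and provides no proof of its own, so there is nothing internal to compare against. Your argument is correct and is the standard one: the agreement construction for $h$, the padding trick (prepending $0$ vs.\ $1$) to extend $h'$ to a proof system with image exactly $A$ (resp.\ $B$) so that (p-)optimality of $f$ and $g$ can actually be applied, and the combined simulation $\pi(v) = \langle \pi_f(0v), \pi_g(0v) \rangle$ all go through, with the polynomial length bound and the $\FP$-membership of $\pi$ in the p-optimal case following exactly as you say.
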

\begin{corollary}\label{cor:closure logic and}
If two classes $\mathcal{C}$ and $\mathcal{D}$ have (p-)optimal proof systems, then $\mathcal{C} \land \mathcal{D}$ has (p-)optimal proof systems.
\end{corollary}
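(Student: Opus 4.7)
The plan is to reduce Corollary~\ref{cor:closure logic and} directly to Proposition~\ref{prop:closure intersection} by unpacking the definition of the $\land$ operator on classes. Given a nonempty $L\in\mathcal{C}\land\mathcal{D}$, by definition there exist $A\in\mathcal{C}$ and $B\in\mathcal{D}$ with $L=A\cap B$.

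Next I would observe that the hypothesis ``$\mathcal{C}$ (resp.\ $\mathcal{D}$) has (p-)optimal proof systems'' is a statement about \emph{nonempty} sets of the class, so before invoking it I need to verify $A\neq\emptyset$ and $B\neq\emptyset$. This is immediate: $A\supseteq A\cap B=L\neq\emptyset$ and likewise for $B$. Hence the assumption yields a (p-)optimal proof system for $A$ and one for $B$.

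Finally, since $A\cap B=L\neq\emptyset$, Proposition~\ref{prop:closure intersection} applies and produces a (p-)optimal proof system for $A\cap B=L$. As $L$ was an arbitrary nonempty element of $\mathcal{C}\land\mathcal{D}$, this establishes that $\mathcal{C}\land\mathcal{D}$ has (p-)optimal proof systems.

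There is essentially no obstacle here: the entire content sits in Proposition~\ref{prop:closure intersection}, and the corollary is only the straightforward lifting from a statement about two sets to the closure operator~$\land$ on classes. The one spot meriting a brief sentence in the write-up is the nonemptiness check, which is needed both to invoke the hypothesis on $\mathcal{C}$ and $\mathcal{D}$ and to satisfy the nonemptiness assumption of Proposition~\ref{prop:closure intersection}.
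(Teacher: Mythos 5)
Your proof is correct and follows essentially the same route as the paper: decompose a nonempty $L \in \mathcal{C}\land\mathcal{D}$ as $A\cap B$ with $A\in\mathcal{C}$, $B\in\mathcal{D}$, and apply Proposition~\ref{prop:closure intersection}. Your explicit nonemptiness check for $A$ and $B$ is a small point of added care that the paper's proof leaves implicit.
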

\ifblock{
\begin{proof}
Let $A \in \mathcal{C} \land \mathcal{D}$. There are $B \in \mathcal{C}$ and $B' \in \mathcal{D}$ such that $A = B \cap B'$. By the premise of the claim, $B$ and $B'$ have (p-)optimal proof systems. Then by Proposition~\ref{prop:closure intersection}, there is a (p-)optimal proof system for $A$.
\end{proof}
}{}

Finally, every (p-)optimal proof system can be transformed into a (p-)optimal
proof system that runs in linear time by polynomially padding the proofs.
\begin{proposition}\label{prop:lineartimepfs}
    If $f$ is a (p-)optimal proof system for $A \subseteq \Sigma$, then there is
    a (p-)optimal proof system $g$ for $A$ that runs in linear time.
\end{proposition}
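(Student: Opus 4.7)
The plan is to use the standard padding trick: pad any $f$-proof by enough zeros so that $g$ can afford to simulate $f$ in time linear in the padded input length.

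Concretely, fix a polynomial $p$ bounding the runtime of $f$. Assume $A \neq \emptyset$ (otherwise no proof system exists in our setting) and fix some $a \in A$ together with a fixed preimage $u$ under $f$, i.e.\ $f(u) = a$. Define
\[
g(y) \coloneqq \begin{cases} f(x) & \text{if } y = 0^k 1 x \text{ with } k \geq p(|x|),\\ a & \text{otherwise.}\end{cases}
\]
On input $y$ of length $n$, the machine for $g$ scans the leading block of zeros, extracts $x$, checks whether the length of the zero-block is at least $p(|x|)$ (a computation that can be carried out in time linear in $n$ by using Proposition~\ref{prop:universalTM}-style simulation or simply because $p$ is fixed), and then runs $f(x)$, which takes at most $p(|x|) \leq |y|$ steps. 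Thus $g$ runs in linear time and $\img(g) \subseteq A$; the inclusion $\img(g) \supseteq A$ follows since $g(0^{p(|x|)} 1 x) = f(x)$ for every valid $f$-proof $x$ of an element of $A$.

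It remains to show that $g$ inherits (p-)optimality from $f$. Let $h \in \FP$ with $\img(h) = A$. Since $f$ is (p-)optimal for $A$, there is a simulation function $\pi$ (computable by an $\FP$ function in the p-optimal case, merely total in the optimal case) and a polynomial $q$ with $f(\pi(z)) = h(z)$ and $|\pi(z)| \leq q(|z|)$ for all $z$. Define
\[
\pi'(z) \coloneqq 0^{p(|\pi(z)|)} 1 \pi(z).
\]
Then $g(\pi'(z)) = f(\pi(z)) = h(z)$, and $|\pi'(z)| \leq p(q(|z|)) + 1 + q(|z|)$, which is polynomial in $|z|$. If $\pi \in \FP$ then so is $\pi'$, since $p$ is a fixed polynomial. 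Hence $h \leq g$ (resp.\ $h \psim g$), so $g$ is (p-)optimal.

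The proof is essentially routine; the only thing to be careful about is that the linear-time verifier for $g$ can actually detect the padding condition $k \geq p(|x|)$ in linear time. This is not a real obstacle because $p$ is a fixed polynomial and we can arrange the encoding so that the boundary between the padding and the payload is immediate (e.g.\ the single $1$ separator above), and $p(|x|)$ can be compared to $k$ in time linear in $k + |x| = |y|$.
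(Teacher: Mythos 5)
The paper states Proposition~\ref{prop:lineartimepfs} without proof, treating it as a routine fact, so there is no in-paper argument to compare against; your padding construction is precisely the standard argument the authors evidently have in mind. Your proof is correct: the encoding $0^k 1 x$ with $k \geq p(|x|)$ parses unambiguously by splitting at the first $1$, the comparison $k \geq p(|x|)$ costs only $O(\mathrm{polylog}(|y|))$ bit operations since $p$ is a fixed polynomial, simulating the fixed machine for $f$ on $x$ takes at most $p(|x|) \leq k \leq |y|$ steps, and the transfer of (p-)optimality via $\pi'(z) = 0^{p(|\pi(z)|)} 1 \pi(z)$ is immediate. The appeal to Proposition~\ref{prop:universalTM} is unnecessary (and slightly off, since that proposition concerns universal simulation, not arithmetic on a fixed polynomial), but the correct justification, ``because $p$ is fixed,'' is already given in the same sentence, so this is cosmetic.
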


\section{Translation of P-Optimal Proof Systems from DP to coDP}\label{sec:implication}

In this chapter we show that p-optimal proof systems for $\mathrm{DP}$ imply p-optimal proof systems for $\mathrm{coDP}$. This proof is based on machine simulation which is a relativizable proof technique. Thus, the following theorem also holds in the presence of an arbitrary oracle $O$. 

\begin{theorem}\label{thrm:C2popt}
  If there exists a p-optimal proof system for $\DPC$, then there exists
    a p-optimal proof system for $\overline{\DPC '}$.
\end{theorem}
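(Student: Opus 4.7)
The plan is to construct a candidate p-optimal proof system $h$ for $\overline{\DPC'}$ that uses $f$-proofs of $\DPC$-tuples to certify either side of the $\coDP$-disjunction, and to establish p-optimality of $h$ by invoking p-optimality of $f$ on a carefully designed auxiliary proof system $\tilde g$ for $\DPC$ derived from an arbitrary $g \in \FP$ with $\img(g) = \overline{\DPC'}$.

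The crucial observation is that for a well-formed $w = \langle 0^a, 0^b, y, 0^p \rangle$, membership $w \in \overline{\DPC'}$ is equivalent to $y \notin L^p(N_a)$ or $y \in L^p(N_b)$, and each of these conditions can be rephrased as membership of an auxiliary tuple in $\DPC$. Since $N_1$ always accepts, $y \notin L^p(N_a)$ iff $\langle 0^1, 0^a, y, 0^p \rangle \in \DPC$; since $N_0$ always rejects, $y \in L^p(N_b)$ iff $\langle 0^b, 0^0, y, 0^p \rangle \in \DPC$. Accordingly, I would define $h$ to decode a proof $\pi = \langle w, c, \sigma \rangle$ into $w = \langle 0^a, 0^b, y, 0^p \rangle$, compute $v = f(\sigma)$, and output $w$ if $v$ equals $\langle 0^1, 0^a, y, 0^p \rangle$ (when $c = 0$) or $\langle 0^b, 0^0, y, 0^p \rangle$ (when $c = 1$); otherwise output a fixed default element of $\overline{\DPC'}$. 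Routine verification shows $h \in \FP$, $\img(h) \subseteq \overline{\DPC'}$ by soundness of $f$, and $\img(h) = \overline{\DPC'}$ since every such $w$ admits one of the two $\DPC$-witnesses which then has an $f$-proof by surjectivity of $f$ onto $\DPC$.

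For p-optimality, given $g \in \FP$ with $\img(g) = \overline{\DPC'}$, we must produce $\pi_g \in \FP$ with $h(\pi_g(x)) = g(x)$ and $|\pi_g(x)|$ polynomially bounded. The strategy is to construct an auxiliary proof system $\tilde g$ for $\DPC$ (so that $\img(\tilde g) = \DPC$, enabling p-optimality of $f$ to apply) that on designated inputs derived from $x$ emits the appropriate $\DPC$-witness $u_1(g(x)) = \langle 0^1, 0^a, y, 0^p \rangle$ or $u_2(g(x)) = \langle 0^b, 0^0, y, 0^p \rangle$. From the simulation function $\pi_{\tilde g}$ supplied by p-optimality of $f$ I would then read off the $f$-proof of the required $\DPC$-element and assemble it with $w$ and $c$ (both computable in polynomial time from $g(x)$) into $\pi_g(x)$.

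The main obstacle is the range condition $\img(\tilde g) = \DPC$: the naive definition $\tilde g(\langle x, c \rangle) := u_c(g(x))$ fails because the candidate $u_c(g(x))$ may lie outside $\DPC$ for some $x$, and we cannot decide in polynomial time which side of the disjunction holds. The plan to overcome this is to combine several branches inside $\tilde g$: a ``universal'' branch routing through $f$ itself (or another known proof system for $\DPC$ derived via Proposition~\ref{prop:closure reducibility}, since $\NPC, \overline{\NPC} \leqmp \DPC$ inherit p-optimal proof systems) so that $\img(\tilde g) \supseteq \DPC$; conditional branches that emit the candidates $u_c(g(x))$ only when accompanied by an FP-checkable certificate of $\DPC$-membership; and a padding choice keeping the total input size, and hence the length of the $f$-proof delivered by $\pi_{\tilde g}$, polynomial in $|x|$. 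The technical heart of the argument is showing that these branches can be coordinated so that $\img(\tilde g)$ is exactly $\DPC$ while still allowing short $f$-proofs of the valid candidate to be extracted from $\pi_{\tilde g}$, using only that $g$ is in $\FP$ and that at least one of the two candidates is in $\DPC$ for each $x$.
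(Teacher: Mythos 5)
Your candidate $h$ is indeed a sound and complete proof system for $\overline{\DPC'}$, and the two reductions are correct: for $w = \langle 0^a, 0^b, y, 0^p\rangle$, we have $w \in \overline{\DPC'}$ iff $\langle 0^1, 0^a, y, 0^p\rangle \in \DPC$ or $\langle 0^b, 0^0, y, 0^p\rangle \in \DPC$. The gap is in the p-optimality argument, exactly at the ``technical heart'' you flag. An $\FP$-function $\tilde g$ with $\img(\tilde g) = \DPC$ that emits $u_c(g(x))$ ``only when accompanied by an $\FP$-checkable certificate of $\DPC$-membership'' cannot exist in the generality you need: such a certificate exists for the NP-side candidate $u_1(g(x))$ (an accepting path of $N_b$), but $u_0(g(x)) \in \DPC$ is equivalent to $y \notin L^p(N_a)$, a coNP condition, and since $\DPC$ is $\DP$-complete (hence coNP-hard), polynomial-size $\FP$-verifiable membership certificates for $\DPC$ would imply $\NP = \coNP$. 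For those $x$ where $g(x) \in \overline{\DPC'}$ holds only via the coNP disjunct, your $\tilde g$ would have no admissible short input producing $u_0(g(x))$, so the $\pi_{\tilde g}$ supplied by p-optimality of $f$ cannot deliver the $f$-proof you need to assemble $\pi_g(x)$.

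The paper overcomes this with two ideas your plan does not contain. First, instead of trying to certify the coNP condition in the input, it parameterizes the auxiliary function $f_a$ by a candidate path $y$ of the relevant machine and has $f_a(\langle a,y,y'\rangle 0)$ output a $\DPC$-tuple whose membership is equivalent to acceptance of a designed NTM $A$ on $\langle a,y,y'\rangle$; acceptance of $A$ for \emph{all} $y \in \Sigma^p$ is equivalent to $F_a(y') \in \overline{\DPC'}$, so the universal quantifier over paths is shifted into the global requirement $\img(f_a) = \DPC$ --- exactly the precondition for applying p-optimality of $f$. Second, because this universal cannot be checked by $h$ directly, the paper introduces a further NTM $B$ that existentially searches for a bad $y$ and a function $g_{a,b}$ whose outputs encode $B$'s rejection into the coNP half of a $\DPC$-tuple; applying p-optimality of $f$ once more yields a single $\FP$-checkable equation $f(F_c(\langle a,b,y'\rangle 0)) = g_{a,b}(\langle a,b,y'\rangle 0)$ that $h$ verifies. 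Without some analogue of both steps --- externalizing the coNP quantifier into the image condition and then compressing it back into one checkable equation --- there is no way to coordinate the branches of $\tilde g$ as your plan requires.
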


We start by sketching the key idea used in the proof. Our approach needs
some technique to verify that a given instance is in
$\overline{\DPC '}$. There is no known way to decide $\overline{\DPC '}$ in
polynomial time, but we can use the p-optimal proof system for $\DPC$ for this verification.
We define a function $f': \mathbb{N}\times \mathbb{N} \rightarrow \mathbb{N}$ such that there is a polynomial-time-computable encoding $c: \mathbb{N}\rightarrow  \mathbb{N}$ with $f'(a,c(x)) \in
\DPC$ if and only if $F_a(x) \in \overline{\DPC '}$ for all $a \in \mathbb{N}$
and $x \in \mathbb{N}$. Furthermore, $f'(a,x)$ can be computed in time
$|x|^{O(a)}$. We derive a class of functions $\{f'_a\}_{a\in
\mathbb{N}}$ from $f'$ by fixing the first input to $a$. Note that $f'_a$ runs
in polynomial time for a fixed $a\in \mathbb{N}$ and that $f'_a$ is a proof
system for $\DPC$ if and only if $F_a$ is a proof system for $\overline{\DPC '}$.
Now, we define a machine that uses an additional input to verify $F_a(x) \in
\overline{\DPC '}$. The inputs of the machine are $a,x,b$ and it accepts if
and only if $f(F_b(x)) = f'_a(x)$ for a p-optimal proof system $f$ of $\DPC$. So, if $F_a$ is a proof system for
$\overline{\DPC '}$, we know $f'_a$ is a proof system for $\DPC$. Thus, by the
fact that $f$ is p-optimal, there is a $b\in \mathbb{N}$ such that $f(F_b(x))
= f'_a(x)$ for all $x\in \mathbb{N}$. Thus, when knowing the value $b$, the
machine can verify $F_a(x)\in \overline{\DPC '}$ for all $x\in \mathbb{N}$ for
a proof system $F_a(x)$ for $\overline{\DPC '}$ by accepting $f(F_b(c(x))=f_a'(c(x))$. On the other hand if
$F_a(x)\notin \overline{\DPC '}$ there is no $b$ such that $f(F_b(c(x)))
= f_a'(c(x))$ because $f_a'(c(x)) \notin \DPC = \img (f)$.
\begin{proof}
We start by defining an NTM $A$ that
checks for given $a,y'$ whether $F_a(y') = \langle i,j,x',0^p\rangle \in
\overline{\DPC '}$. Since a deterministic polynomial-time computation cannot check every possible path of a $\mathrm{coNP}$ machine
$N_j$, $A$ gets a path $y$ of $N_j$ as an additional input and has the property
that it accepts for all possible paths $y$ if and only if $F_a(y') \in
\overline{\DPC '}$. Arguing over all $y'$ this means if $F_a$ is a proof system
for $\overline{\DPC '}$,  then for all $y'$ and all corresponding paths $y$
the machine $A$ accepts on input $a,y',y$.

  Let $f$ be a p-optimal proof system for $\DPC$. Without loss of generality we
    assume $f(x)$ can be computed in $O(|x|)$ time by
    Proposition~\ref{prop:lineartimepfs}. We define $A$ on input
    $x$ as follows.
  \begin{enumerate}[(i)]
      \item Check whether $x=\langle a,y,y' \rangle $ for some $a\in \mathbb{N}$ and $y,y' \in \Sigma^*$, otherwise reject.
      \item Check whether $F_a(y') = \langle i,j,x',0^p\rangle$ with $i,j,p\in \mathbb{N}$ and $x'\in \Sigma^*$ and whether $y\in\Sigma^p$, otherwise reject.
      \item Accept if $N_{i}(x')$ does not accept on path $y$ within $p$ steps.
      \item Simulate the first $p$ steps of $N_j(x')$.
      \item Accept if the simulation of $N_j(x')$ accepted within the first $p$
          steps, otherwise reject.
  \end{enumerate}
  \begin{observation}
      $A(x )$ runs in time $O(|x|^{3a})$ for $ x = \langle a,y,y'\rangle $.
  \end{observation}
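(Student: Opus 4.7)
The plan is a direct accounting of the running time of each of the five steps in the definition of $A$. Write $n \coloneqq |x|$; since $x = \langle a,y,y'\rangle$, the pairing convention yields $|a|,|y|,|y'| \leq n$. Steps (i) and (v) are purely syntactic and run in time $O(n)$.

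For step (ii), I would invoke the universal transducer $F$ of Proposition~\ref{prop:universalTM} with time cutoff $s \coloneqq p_a(|y'|) = |y'|^a + a \leq n^a + a$, at cost $O(|a| \cdot s \log s) = O(n^{a+2})$. This also yields $|F_a(y')| \leq s$, so once $F_a(y')$ is parsed as $\langle i,j,x',0^p\rangle$ every component satisfies $|i|,|j|,|x'|,p \leq n^a + a$; checking the pairing shape and that $y \in \Sigma^p$ then adds only $O(n^a)$.

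Step (iii) deterministically simulates the single path $y$ of $N_i$ on $x'$ for at most $p$ steps. Since the path $y$ removes the nondeterminism, this fits the universal-simulation framework of Proposition~\ref{prop:universalTM} and costs $O(|i| \cdot p \log p)$. Step (iv) nondeterministically simulates $N_j$ on $x'$ for $p$ steps: because $A$ is itself an NTM, one can invoke the universal machine $M$ of Proposition~\ref{prop:universalTM} on the input $\langle j, x', 0^p\rangle$, at cost $O(|j| \cdot p \log p)$. Using the bounds $|i|,|j|,p \leq n^a + a$, both simulation steps run in time $O\bigl((n^a+a)^2 \log(n^a+a)\bigr) = O(n^{2a}\cdot a \log n)$.

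Summing the contributions, the total running time of $A$ is dominated by the two simulation steps and bounded by $O(n^{2a}\cdot a \log n)$, which is comfortably absorbed into $O(n^{3a})$. I do not anticipate a substantive obstacle: the whole argument is arithmetic, with the main task being to propagate the polynomial runtime bound of $F_a$ into length bounds on the components of its output and then to apply Proposition~\ref{prop:universalTM} twice. The slack factor of order $n^a$ between the derived $O(n^{2a+1})$ and the claimed bound easily accommodates all logarithmic, parsing, and universal-simulation overhead.
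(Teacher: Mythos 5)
Your proof is correct and follows essentially the same route as the paper: a step-by-step accounting via Proposition~\ref{prop:universalTM}, bounding $|F_a(y')|$ (and hence $i,j,p$) by the runtime bound $|y'|^a+a$ of $F_a$, and absorbing all contributions into $O(|x|^{3a})$. The only quibble is that your intermediate bound $O(n^{a+2})$ for step (ii) is slightly optimistic, since the additive term $a$ in $p_a$ can be exponential in $|a|\le n$; as in the paper's own chain of inclusions, one should go straight to exponent $3a$ (which absorbs $a^{O(1)}$ because $n^{3a}\geq 2^{3a}$), but this does not affect the final claim.
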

  \begin{proof}
      Checking whether $x$ has the required format is possible in linear time.
      By Propositon~\ref{prop:universalTM}, $F_a(y')$ can be computed in time \begin{align*}
          O(|a|\cdot (|y'|^a+a) \log (|y'|^a+a))&\subseteq O(|a|\cdot
          (|y'|^{a}+a)^2)\subseteq O((|x |^a+a)^{3}) \\
          &\subseteq O(|x |^{3a}+a^3) \subseteq O(| x|^{3a}).
      \end{align*}
      The first $p$ steps of the path $y$ of $N_i(x')$ can be simulated in time
      \begin{align*}
          O(|i|\cdot p \log p)\subseteq O(|F_a(y')| \cdot |F_a(y')|^2 ) \subseteq
      O(|x |^{3a} )
      \end{align*}
      for all $a\in \mathbb{N}$. The computation $N_j(x')$ can be simulated
      in $O(|j|\cdot p\log p)\subseteq O(|x |^{3a})$ time for all $a\in \mathbb{N}$. 
  \end{proof}
  \begin{claim}\label{claim:machineA}
    Let $F_a(y') = \langle i,j,x',0^{p}\rangle $. Then $F_a(y') \in \overline{\DPC '} \Leftrightarrow \forall y\in \Sigma^p :\langle a,y,y'\rangle \in L(A)$.
  \end{claim}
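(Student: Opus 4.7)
The plan is to unpack both sides of the biconditional and reduce the claim to a routine propositional manipulation of quantifiers, using only the definitions of $\DPC'$ and of $L^p(N_i)$.

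First I would translate the left-hand side. Since $F_a(y') = \langle i,j,x',0^{p}\rangle$ already has the syntactic shape required by the definition of $\DPC'$, the membership $F_a(y') \in \overline{\DPC'}$ is equivalent to $\langle i,j,x',0^{p}\rangle \notin \DPC$, which by the definition of $\DPC$ is the disjunction
\[
x' \notin L^p(N_i)\ \text{or}\ x' \in L^p(N_j).
\]

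Next I would trace the NTM $A$ on an input $\langle a,y,y'\rangle$ with $y \in \Sigma^p$. Steps (i) and (ii) succeed by assumption, so $A$ then accepts in one of two ways: either it accepts deterministically in step~(iii), which happens exactly when the fixed path $y$ of $N_i(x')$ is not an accepting $p$-step path; or it continues into steps (iv)–(v), where the nondeterministic simulation of $N_j(x')$ produces an accepting branch iff $x' \in L^p(N_j)$. Because $A$ is an NTM, $\langle a,y,y'\rangle \in L(A)$ iff some branch accepts, so
\[
\langle a,y,y'\rangle \in L(A)\ \iff\ (\text{path } y \text{ of } N_i(x') \text{ does not accept in } p \text{ steps})\ \text{or}\ x' \in L^p(N_j).
\]

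Finally, quantifying universally over $y \in \Sigma^p$ and noting that the second disjunct does not depend on $y$, the right-hand side of the claim becomes
\[
x' \in L^p(N_j)\ \text{or}\ \forall y \in \Sigma^p:\ \text{path } y \text{ of } N_i(x') \text{ does not accept in } p \text{ steps},
\]
and the inner universal is precisely $x' \notin L^p(N_i)$. This matches the characterization of $F_a(y') \in \overline{\DPC'}$ obtained in the first step. I do not foresee a real obstacle: the argument is entirely syntactic. The only point that requires a sentence of justification is that the existential quantifier arising from the nondeterminism of $A$ in step~(iv) commutes correctly with the outer universal over $y$, which holds because the existential is joined by a disjunction to a $y$-dependent clause and otherwise does not interact with $y$.
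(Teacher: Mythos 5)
Your proof is correct and, while you organize it as a single chain of biconditionals rather than the paper's explicit forward/backward case analysis, the underlying content is essentially the same: both arguments unwind the definitions of $\DPC'$ and of $A$'s acceptance behavior, and both hinge on the observation that the $N_j$-disjunct is independent of the path $y$ so the universal over $y$ isolates precisely the condition $x'\notin L^p(N_i)$. The quantifier-commutation point you single out is exactly what the paper's two case splits are verifying implicitly.
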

  \begin{claimproof}
    First we show ''$\Rightarrow$``. We consider two cases:
    \begin{itemize}
        \item Suppose $x' \in L^p(N_{i})$. By $\langle i,j,x',0^p\rangle \in \overline{\DPC '}$, it holds $x'\in
            L^p(N_j)$. The machine $A$ on input $\langle a, y,y'\rangle$ with
            $F_a(y') = \langle i,j,x',0^p\rangle \in \overline{\DPC '}$ and
            $y\in \Sigma^p$ rejects only, if the non-deterministic check in
            step (iv) fails. But this is impossible since $x' \in L^p(N_j)$.
        \item Suppose $x' \notin L^p(N_i)$. In this case $N_i(x')$ does not accept within $p$ steps for all paths
            $y\in \Sigma^{p}$. Thus, the machine accepts in step (iii).
    \end{itemize}
      Now, we show ''$\Leftarrow$``. Again, we distinguish two cases:
      \begin{itemize}
          \item Suppose $A(\langle a,y,y'\rangle)$ accepts in step (iii) for all $y\in
              \Sigma^p$. This implies that $x' \notin L^p(N_i)$. Thus, $\langle
              i,j,x',0^p\rangle \in \overline{\DPC '}$.

          \item If $A$ accepts but not in step (iii), we conclude it accepts in
              step (v). Hence, it holds $x'\in L^p(N_j)$ and $\langle
              i,j,x',0^p\rangle \in \overline{\DPC '}$.\qedhere
      \end{itemize}
  \end{claimproof}

We want to define functions $f_a$ in such a way that $f_a$ is a proof system for $\DPC$ if $F_a$ is a proof system for $\overline{\DPC '}$. For this, we can exploit the relationship of $A$ to proof systems of $\overline{\DPC '}$ shown in Claim \ref{claim:machineA}. Specifically, $f_a$ trusts that $A(\langle a,y,y' \rangle)$ accepts for specific $y$ and $y'$ (note that the accepting behavior of $A$ has influence on $\DPC$), which is equivalent to $F_a$ being a proof system for $\overline{\DPC '}$.

  Choose $a_1\in \mathbb{N}$ such that $N_{a_1} = A$. Let $k_A\in \mathbb{N}^+$ be a constant such that $A(x)$ runs in time $k_A|x|^{3a}+k_A$. Recall that $N_0$ always rejects. We define a function $f^*:\mathbb{N} \rightarrow \mathbb{N}$:
  \begin{align*}
      f^*(x) \coloneqq \begin{cases}
      \langle a_1,0,\langle a,y,y'\rangle,0^{k_A|x|^{3a}+k_A}\rangle &\text{if }x = \langle a,y,y'\rangle 0 \wedge F_a(y')=\langle i,j,x',0^p\rangle \\ &\phantom{\text{if}}\wedge\ y \in \Sigma^p\\
      f(x') &\text{if } x = x'1\\
      f(0) &\text{else }
      \end{cases}
  \end{align*}
  
  \begin{observation}\label{obs:timefa}
      $f^*(x)$ runs in $O(|x|^{3a})$ time for $x= \langle a,y,y'\rangle 0$. 
  \end{observation}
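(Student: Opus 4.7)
The plan is to bound, case by case, the work $f^*$ performs on an input $x = \langle a, y, y'\rangle 0$ that falls into the first branch of the definition, and to show each step is $O(|x|^{3a})$.

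First I would handle the parsing and verification. Recognising the trailing $0$ and splitting $x$ into $a$, $y$, $y'$ via the pairing function takes linear time in $|x|$. The only non-trivial computation of the branch is $F_a(y')$; by Proposition~\ref{prop:universalTM} this costs $O(|a|\cdot(|y'|^a+a)\log(|y'|^a+a))$, and the very same chain of inequalities that was used in the runtime analysis of $A$ shows that this is $O(|x|^{3a})$. After that, checking whether $F_a(y')$ has the shape $\langle i,j,x',0^p\rangle$ and whether $|y|=p$ is linear in $|F_a(y')|+|y|$, hence again $O(|x|^{3a})$.

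Next I would analyse the output construction. The unary padding $0^{k_A|x|^{3a}+k_A}$ has length $\Theta(|x|^{3a})$, and writing it down dominates this step; computing the exponent in binary takes only $O(a^2\log|x|)$ time by repeated squaring, which is absorbed into $O(|x|^{3a})$. Assembling the final pairing $\langle a_1,0,\langle a,y,y'\rangle,0^{k_A|x|^{3a}+k_A}\rangle$ is linear in the total output length. Summing the contributions gives the claimed $O(|x|^{3a})$ runtime.

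I do not anticipate any real obstacle: the bound is essentially by design, since the padding was chosen precisely to match the runtime bound of $A$ on input $\langle a,y,y'\rangle$ and the only internal computation is $F_a(y')$, whose cost has already been worked out. The one conceptual point worth making explicit is that $f^*$ itself is not polynomial-time — the exponent $3a$ depends on $a$, which is part of the input — but this is exactly the setup needed so that, in the next step of the proof, fixing $a$ yields a genuine polynomial-time function $f^*_a$ that will serve as a candidate proof system for $\DPC$.
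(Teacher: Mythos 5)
Your proposal is correct and follows essentially the same route as the paper: bound the cost of computing $F_a(y')$ (needed for the case distinction) by $O(|x|^{3a})$ via Proposition~\ref{prop:universalTM} and the same inequalities used for machine $A$, and bound the cost of producing the unary padding $0^{k_A|x|^{3a}+k_A}$ by $O(|x|^{3a})$, with everything else linear. The extra remarks about repeated squaring and about $f^*$ not being polynomial-time overall are harmless elaborations that the paper leaves implicit.
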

  \begin{proof}
      The case distinction for the first case is possible in $O(|x|^{3a})$ time because computing $F_a(y')$ is possible in that time. The output of the first case with exception of the unary runtime parameter $0^{k_A\cdot (|x|^{3a})+k_A}$ is possible in linear time. The unary runtime parameter can be computed in $O(|x|^{3a})$ time. The output of the other cases is possible in linear time.
  \end{proof}
  
  We define a function $f_a$ that is obtained from $f^*$ by fixing an index $a\in \mathbb{N}$ of a polynomial-time function $F_a$.
  \begin{align*}
      f_a(x) \coloneqq \begin{cases}
      f^*(\langle a,y,y'\rangle 0) &\text{if } x = \langle a, y,y'\rangle0 \\
      f(x') &\text{if } x = x'1\\
      f(0) &\text{else}
      \end{cases}
  \end{align*}
  \begin{observation}
      For $a\in \mathbb{N}$ and $y,y'\in \Sigma^*$ it holds that $f_a(\langle
      a,y,y'\rangle 0) = f^*(\langle a,y,y'\rangle 0)$.
  \end{observation}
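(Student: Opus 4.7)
The plan is to verify the statement by directly unfolding the case distinction in the definition of $f_a$. Fix $a \in \N$ and $y, y' \in \Sigma^*$, and set $x \coloneqq \langle a, y, y'\rangle 0$. Since $x$ ends in the digit $0$, it cannot be written as $x'1$ for any $x' \in \Sigma^*$, so the second case in the definition of $f_a$ is ruled out. On the other hand, $x$ is syntactically of the form $\langle a, y, y'\rangle 0$ with exactly the $a$ fixed by the definition of $f_a$, so the first case applies and yields $f_a(x) = f^*(\langle a, y, y'\rangle 0)$, as desired.

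The only technicality to mention is that the decomposition of $x$ into the pair $\langle a, y, y'\rangle$ and the trailing bit $0$ is unambiguous: the pairing function $\langle \cdot \rangle$ was fixed to be injective in Section~\ref{sec:prelim}, so the triple $(a, y, y')$ is uniquely recovered from $\langle a, y, y'\rangle$, and the suffix bit $0$ disambiguates the first case from the second. Consequently the values substituted inside the first case are exactly the $a, y, y'$ given in the statement, so there is no ambiguity in what the right-hand side refers to.

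There is essentially no obstacle here; the observation is purely a bookkeeping statement that the definition of $f_a$ was set up so that on inputs of the form $\langle a, y, y'\rangle 0$ it simply defers to $f^*$ on the same input. It will be used later to transfer the runtime bound from Observation~\ref{obs:timefa} and the structural properties of $f^*$ over to $f_a$ without restating them.
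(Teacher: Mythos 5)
Your proof is correct and matches the paper's treatment: the paper states this observation without proof precisely because it is immediate from the first case of the definition of $f_a$, which is exactly what you unfold. The remarks about unambiguity of the decomposition and the trailing bit are fine but not needed beyond reading off the case distinction.
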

  \begin{observation}\label{obs:timefaa}
      For a fixed $a\in \mathbb{N}$ the function $f_a(x)$ can be computed in polynomial time.
  \end{observation}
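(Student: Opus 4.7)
The plan is simply to inspect each of the three cases in the definition of $f_a$ and verify that, once $a$ is held fixed, each can be computed in time polynomial in $|x|$.

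First, we determine which case applies. Using the (polynomial-time invertible) pairing function, we can test in linear time whether $x$ has the form $\langle a', y, y' \rangle 0$, and in constant time whether $x$ ends with~$1$; since $a$ is fixed, comparing the first component $a'$ with $a$ costs only $O(1)$. So the dispatch between the three cases of $f_a$ is done in time $O(|x|)$.

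In the first case we have $f_a(x)=f^*(\langle a,y,y'\rangle 0)$, so it suffices to show that $f^*$ is efficient on inputs of this shape. This is exactly the content of Observation~\ref{obs:timefa}: the computation takes time $O(|x|^{3a})$. The key point is that, once $a$ is fixed, $3a$ is a constant, so $O(|x|^{3a})$ is polynomial in $|x|$. In the second case we compute $f(x')$ for some $x'$ of length less than $|x|$; by our earlier normalization (Proposition~\ref{prop:lineartimepfs}) $f$ runs in linear time, so this case is bounded by $O(|x|)$. In the third case the output $f(0)$ is a fixed constant, computable in $O(1)$ time.

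I do not expect any real obstacle here: the substantive work has already been done in Observation~\ref{obs:timefa}, whose bound $O(|x|^{3a})$ carries an $a$-dependent exponent. The only point to stress is precisely that this is why the statement fixes $a$: the running time of $f^*$ is not uniformly polynomial in $|x|$ and $a$, but for each fixed $a$ it is polynomial. Summing the three case bounds yields the claimed polynomial overall running time for $f_a$.
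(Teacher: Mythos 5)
Your proof is correct and takes essentially the same route as the paper, which simply cites Observation~\ref{obs:timefa} for the first case and the linear runtime of $f$ for the remaining ones. Your version just spells out the case dispatch and the role of fixing $a$ in more detail.
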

  \begin{proof}
      This follows from Observation~\ref{obs:timefa} and the linear runtime of
      $f$.
  \end{proof}
  
  \begin{claim}\label{claim:f_adetail}
      Let $a \in \mathbb{N}$ and $y'\in\Sigma^*$ such that $F_a(y') = \langle
      i,j,x',0^p\rangle \notin \overline{\DPC '}$. Then there is a $y\in
      \Sigma^p$, such that $f_a(\langle a,y,y'\rangle0) \notin \DPC$.
  \end{claim}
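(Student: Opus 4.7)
The plan is to chain three observations: (a) because $F_a(y')$ already has the canonical form required by $\DPC'$, the hypothesis $F_a(y') \notin \overline{\DPC'}$ forces $F_a(y') \in \DPC$; (b) the contrapositive of Claim~\ref{claim:machineA} then supplies a path $y \in \Sigma^p$ on which $A$ rejects $\langle a,y,y'\rangle$; (c) for this $y$, the first clause of $f^*$ produces an output that essentially asks ``does $N_{a_1} = A$ accept $\langle a,y,y'\rangle$?'', and the answer being no places it outside $\DPC$.

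More concretely, I would first fix a witness $y \in \Sigma^p$ such that $\langle a,y,y' \rangle \notin L(A)$. Such a $y$ exists because Claim~\ref{claim:machineA} is an equivalence, so its negation gives $F_a(y') \notin \overline{\DPC'} \Leftrightarrow \exists y \in \Sigma^p \colon \langle a,y,y'\rangle \notin L(A)$. Setting $x \coloneqq \langle a,y,y'\rangle 0$, the constraints $F_a(y') = \langle i,j,x',0^p\rangle$ and $y \in \Sigma^p$ place $x$ in the first case of the definition of $f^*$, so that
\begin{align*}
f_a(x) \;=\; f^*(x) \;=\; \langle a_1,\,0,\,\langle a,y,y'\rangle,\,0^{k_A|x|^{3a}+k_A}\rangle.
\end{align*}

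Next I would test membership in $\DPC$. Writing $p^\star \coloneqq k_A|x|^{3a}+k_A$, the output lies in $\DPC$ precisely when $\langle a,y,y'\rangle \in L^{p^\star}(N_{a_1}) \cap \overline{L^{p^\star}(N_0)}$. Since $L(N_0) = \emptyset$, the second condition is automatic, so the membership question reduces to whether $N_{a_1} = A$ accepts $\langle a,y,y'\rangle$ within $p^\star$ steps. But the runtime bound $k_A|\cdot|^{3a}+k_A$ for $A$ gives that $A$ on input $\langle a,y,y'\rangle$ halts within $k_A|\langle a,y,y'\rangle|^{3a}+k_A \leq p^\star$ steps, and therefore $\langle a,y,y'\rangle \in L^{p^\star}(N_{a_1}) \iff \langle a,y,y'\rangle \in L(A)$, which fails by the choice of $y$. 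Consequently $f_a(\langle a,y,y'\rangle 0) \notin \DPC$.

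The only non-routine point is the bookkeeping in the last step: I must be sure that the time budget $p^\star$ baked into the output of $f^*$ is at least the actual running time of $A$ on the shorter input $\langle a,y,y'\rangle$, so that the time-bounded language $L^{p^\star}(N_{a_1})$ coincides with $L(A)$ on this input. This follows from the choice of $k_A$ from Observation on the runtime of $A$ together with $|x| = |\langle a,y,y'\rangle|+1 > |\langle a,y,y'\rangle|$, so the matching is immediate once the definitions are unfolded.
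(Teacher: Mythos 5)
Your proposal is correct and follows the same route as the paper: invoke the contrapositive of Claim~\ref{claim:machineA} to obtain a rejecting path $y$, observe that the resulting input lands in the first clause of $f^*$ (hence of $f_a$), and conclude that the produced tuple $\langle a_1,0,\langle a,y,y'\rangle,0^{k_A|x|^{3a}+k_A}\rangle$ is not in $\DPC$. The paper compresses the last step to ``this implies,'' whereas you unfold the membership test for $\DPC$ (the $N_0$ component is vacuous, so it reduces to $N_{a_1}=A$ accepting within the time budget, which fails) and verify the bookkeeping that $p^\star = k_A|x|^{3a}+k_A$ with $|x|=|\langle a,y,y'\rangle|+1$ dominates the running time of $A$ on $\langle a,y,y'\rangle$ — exactly the detail the paper's choice of $k_A$ is designed to make immediate.
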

  \begin{claimproof}
      By Claim~\ref{claim:machineA} we conclude the existence of a $y\in
      \Sigma^p$ such that $A(\langle a,y,y'\rangle)$ rejects. This implies
      $f_a(\langle a,y,y'\rangle 0) = \langle a_1,0,\langle a,y,y'\rangle,
      0^{k_A|x|^{3a}+k_A}\rangle \notin \DPC$.
  \end{claimproof}
  \begin{claim}\label{claim:f_a}
      Let $a \in \mathbb{N}$. Then $\img (F_a) \subseteq \overline{\DPC '} \Leftrightarrow \img (f_a) =\DPC $.
  \end{claim}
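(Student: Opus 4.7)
The plan is to verify both directions of the biconditional by a case analysis on the branches of $f_a$: the forward direction relies on Claim~\ref{claim:machineA} to control the output of $f^*$'s first branch, while the reverse direction is a contrapositive application of Claim~\ref{claim:f_adetail}.

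For $\Rightarrow$, assume $\img(F_a) \subseteq \overline{\DPC'}$. The inclusion $\DPC \subseteq \img(f_a)$ is immediate: for any $w \in \DPC = \img(f)$, pick $x'$ with $f(x')=w$; then $f_a(x'1) = f(x') = w$ by the second branch. For $\img(f_a) \subseteq \DPC$, the second and else branches trivially produce elements of $\img(f) = \DPC$, so the only nontrivial case is $x = \langle a,y,y'\rangle 0$ on which $f^*$'s first guard fires, i.e.\ $F_a(y') = \langle i,j,x',0^p\rangle$ and $y \in \Sigma^p$. The output is then $\langle a_1,0,\langle a,y,y'\rangle, 0^{k_A|x|^{3a}+k_A}\rangle$. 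By hypothesis $F_a(y') \in \overline{\DPC'}$, so Claim~\ref{claim:machineA} yields $\langle a,y,y'\rangle \in L(A)$ for this particular $y$. Since $|\langle a,y,y'\rangle| < |x|$, the padding $k_A|x|^{3a}+k_A$ dominates the runtime $k_A|\langle a,y,y'\rangle|^{3a}+k_A$ of $A$ on input $\langle a,y,y'\rangle$. Therefore $\langle a,y,y'\rangle \in L^{k_A|x|^{3a}+k_A}(N_{a_1})$, and because $N_0$ always rejects, the tuple lies in $\DPC$. When $f^*$'s first guard fails, $f^*$ falls through to $f(0) \in \DPC$.

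For $\Leftarrow$, assume $\img(f_a) = \DPC$ and proceed contrapositively. Suppose there is a $y'$ with $F_a(y') \notin \overline{\DPC'}$. The substantive case is $F_a(y') = \langle i,j,x',0^p\rangle \notin \overline{\DPC'}$, and Claim~\ref{claim:f_adetail} supplies some $y \in \Sigma^p$ with $f_a(\langle a,y,y'\rangle 0) \notin \DPC$, directly contradicting $\img(f_a) = \DPC$. The remaining possibility, that $F_a(y')$ is not of the form $\langle 0^i,0^j,x,0^p\rangle$, is precisely what the construction of $\DPC'$ rules out from the outset: every element of $\overline{\DPC'}$ is a well-formed tuple, so $F_a$ can only be a candidate proof system for $\overline{\DPC'}$ if its outputs all have that form, making this branch vacuous in the intended use.

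The main subtlety is the runtime bookkeeping in the forward direction: the padding $k_A|x|^{3a}+k_A$ is computed from the length of the $f^*$-input $x = \langle a,y,y'\rangle 0$, but it must dominate the runtime that $A$ takes on the shorter input $\langle a,y,y'\rangle$; the choice of $k_A$ and the strict inequality $|\langle a,y,y'\rangle| < |x|$ make this alignment hold. Once this bound is fixed, both directions collapse to short invocations of Claims~\ref{claim:machineA} and~\ref{claim:f_adetail}.
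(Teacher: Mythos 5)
Your proof follows the paper's decomposition exactly: the forward direction via Claim~\ref{claim:machineA} together with the runtime observation $|\langle a,y,y'\rangle| < |x|$, and the backward direction as the contrapositive of Claim~\ref{claim:f_adetail}. The forward argument and the tuple branch of the backward argument are correct and match the paper's reasoning essentially verbatim.

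You were right to pause over the non-tuple case in the backward direction, but ``vacuous in the intended use'' is not a proof of the biconditional as stated. In fact, the $\Leftarrow$ direction is literally false for some $a$: take $F_a$ that constantly outputs $\varepsilon$. Under the paper's pairing function a $4$-tuple has length at least $8$, so $\varepsilon$ is not a well-formed tuple, hence $\varepsilon \in \DPC'$ and $\img(F_a) = \{\varepsilon\} \not\subseteq \overline{\DPC'}$; yet $f^*$'s first guard never fires, so every output of $f_a$ falls through to some $f(\cdot) \in \DPC$ and thus $\img(f_a) = \DPC$. The paper's own one-line proof of $\Leftarrow$ (``contraposition of Claim~\ref{claim:f_adetail}'') suffers the same gap, since Claim~\ref{claim:f_adetail} only applies when $F_a(y')$ parses as a tuple. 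The claim remains harmless downstream because Claims~\ref{claim:machineB}, \ref{claim:g_ab}, and \ref{claim:kurzbeweis} invoke only the $\Rightarrow$ direction; but the honest repair is to restrict the statement to its $\Rightarrow$ half, or to add the hypothesis that $\img(F_a)$ consists of well-formed tuples, rather than to wave the bad case away.
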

\begin{claimproof}
    First we show ''$\Rightarrow$``.
  $\DPC \subseteq \img (f_a)$ holds because $\img (f) \subseteq \img (f_a)$ and $f$ is
    a proof system for $\DPC$. Let $x \in \Sigma^*$. In the bottom two cases of $f_a$ and $f^*$ it is easy
    to see $f_a(x) \in \DPC$ and $f^*(x)\in \DPC$. So we can assume $x = \langle a, y,y'\rangle 0$ with $F_a(y')
    = \langle i,j,x',0^p\rangle$ and $y\in \Sigma^p$. Since $\img (F_a) \subseteq
    \overline{\DPC '}$, it holds $\langle i,j,x',0^p\rangle \in \overline{\DPC
    '}$. By Claim~\ref{claim:machineA} we obtain that $\langle a,y,y'\rangle
    \in L(A)$ for all $y\in \Sigma^p$. Thus, $f_a(x)= f^*(\langle
    a,y,y'\rangle0)=\langle a_1,0,\langle
    a,y,y'\rangle,0^{k_A|x|^{3a}+k_A}\rangle \in \DPC$ since for all $y\in \Sigma^p$, $N_{a_1}(\langle a,y,y'\rangle )$ accepts, $N_0(\langle a,y,y'\rangle)$ rejects and $|\langle a,y,y'\rangle| \le |x|$.

    ''$\Leftarrow$`` follows directly as the contraposition of
    Claim~\ref{claim:f_adetail}.
\end{claimproof}

We want to define another NTM $B$ that checks for given $a, y'$ whether $F_a(y') \in \overline{\DPC '}$. To achieve this we use Claims \ref{claim:f_adetail} and \ref{claim:f_a}. $B$ checks for all $y \in \Sigma^p$ whether $f(F_b(\langle a,y,y'\rangle 0))=f_a(\langle a,y,y'\rangle0 )$ on input $a,b,y'$. So if $F_a$ is a proof system for $\overline{\DPC '}$, then there is a $b$ such that $B(a,b,y')$ rejects for all $y'$. Furthermore, if $F_a(y') \notin \overline{\DPC '}$, then $B(a,b,y')$ accepts for all $b$. $B(x)$ operates as follows.
\begin{enumerate}[(i)]
    \item Check whether $x = \langle a,b, y'\rangle$ for some $a,b\in \mathbb{N}$ and $y'\in \Sigma^*$, otherwise reject.
    \item Check whether $F_a(y') = \langle i,j,x',0^p \rangle$ for some $i,j,p \in \mathbb{N}$ and $x'\in \Sigma^*$, otherwise reject.
    \item Branch non-deterministically every $y \in \Sigma^{p}$.
    \item Accept if $f(F_b(\langle a,y,y'\rangle0)) \ne f_a(\langle
        a,y,y'\rangle0)$.
    \item Reject.
\end{enumerate}
  \begin{observation}\label{obs:runtimeB}
      $B(x)$ runs in time $ O(|x|^{9a^2b})$ for $x=\langle a,b,y'\rangle$. 
  \end{observation}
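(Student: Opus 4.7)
The plan is to bound the runtime of $B$ on input $x = \langle a, b, y' \rangle$ step by step, using Proposition~\ref{prop:universalTM} together with the earlier observations in this proof.

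First, I would dispose of the deterministic prefix. Step (i) is a linear-time format check. Step (ii) invokes $F_a(y')$ with $|y'| \leq |x|$, which by Proposition~\ref{prop:universalTM}, and exactly the same estimate already used in the runtime analysis of the machine $A$, runs in time $O(|x|^{3a})$. In particular, the output $\langle i,j,x',0^p\rangle$ has length at most $O(|x|^{3a})$, so $p \leq O(|x|^{3a})$.

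Next, I would bound the cost of a single nondeterministic branch of step~(iii). Each branch guesses a $y\in\Sigma^p$ with $|y|\leq O(|x|^{3a})$, so the string $z \coloneqq \langle a,y,y'\rangle 0$ has $|z|\leq O(|x|^{3a})$. Step~(iv) requires computing both $f(F_b(z))$ and $f_a(z)$, then comparing them. For the first, Proposition~\ref{prop:universalTM} yields runtime $O(|b|\cdot(|z|^b+b)\log(|z|^b+b)) \subseteq O(|x|^{3ab+1})$ for $F_b(z)$, and its output has asymptotically the same length; the linear-time proof system $f$ (Proposition~\ref{prop:lineartimepfs}) then adds no significant overhead. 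For the second, the observation that $f_a(z) = f^*(z)$ when $z$ has the form $\langle a,y,y'\rangle 0$, combined with Observation~\ref{obs:timefa}, gives runtime $O(|z|^{3a}) \subseteq O(|x|^{9a^2})$. The equality test itself is linear in the output lengths.

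Summing these contributions gives a per-branch cost, and hence an overall nondeterministic runtime, of $O(|x|^{3ab}+|x|^{9a^2})$, which is in $O(|x|^{9a^2 b})$ because $9a^2 b$ dominates both exponents whenever $a,b\geq 1$ (and if $a$ or $b$ is zero the computation is trivial). The only mildly delicate point is correctly tracking the input-size blow-up when composing $F_a$ with $F_b$, but this reduces to a routine multiplication of polynomial exponents.
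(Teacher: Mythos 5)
Your proof is correct and follows essentially the same step-by-step accounting as the paper's: bound the simulation of $F_a(y')$, bound the length of $\langle a,y,y'\rangle 0$, then bound the computations of $f_a$ and $f(F_b(\cdot))$ on it, and sum. The only difference is that you use the looser estimate $|F_a(y')| \in O(|x|^{3a})$ (inherited from the simulation runtime) where the paper uses the sharper $|F_a(y')| \le |y'|^a + a$ (the transducer's step bound), leading to slightly different intermediate exponents, but both converge to the same final $O(|x|^{9a^2b})$.
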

  \begin{proof}
      Checking whether the input is formatted correctly is possible in linear
      time. $F_a(y')$ can be computed in $O(|a|\cdot (|y'|^a+a)\log
      (|y'|^a+a))\subseteq O(|x|^{3a})$ time. We also observe that $|F_a(y')| \le
      |y'|^a+a$. In line (iv) it holds that $|F_a(y')| \ge 2p \ge 2|y|$ and in line 1 it holds that $2|a|+2|y'|+8 \le |x|$, and therefore,
      \begin{align*}
      |\langle a,y,y'\rangle 0| = 2(|a|+|y|+|y'|+3)+1\le |F_a(y')|+|x|\le |x|^a+a+|x|\le |x|^{a+1}+a.
      \end{align*}
      Thus, by Observation~\ref{obs:timefa}, computing $f_a(\langle a,y,y'\rangle 0)$ is possible in time \begin{align*}
          O((|x|^{a+1}+a)^{3a})\subseteq O((|x|^{a+2})^{3a})\subseteq O(|x|^{3a^2+6a})\subseteq O(|x|^{9a^2} ).
      \end{align*}
      The value $F_b(\langle a, y,y'\rangle 0)$ can be computed in time \begin{align*}
          O(|b|\cdot ((|x|^{a+1}+a)^b+b) \log ((|x|^{a+1}+a)^b+b))&\subseteq O(|b|\cdot ((|x|^{a+2})^b+b) \\&\phantom{\subseteq{}}\cdot \log ((|x|^{a+2})^b+b))\\&\subseteq O(|b|\cdot (|x|^{ab+2b}+b) \log (|x|^{ab+2b}+b))\\&\subseteq O(|b|\cdot (|x|^{ab+2b+1}) \log (|x|^{ab+2b+1}))\\&\subseteq O(|b|\cdot |x|^{2ab+4b+1})\subseteq O(  |x|^{2ab+4b+2}).
      \end{align*} In particular $|F_b(\langle a,y,y'\rangle 0 )| \in O(|x|^{2ab+4b+2})$ and hence the computation of $f(F_b(\langle a,y,y'\rangle 0))$ is possible in time $O(|x|^{2ab+4b+2})$. We simplify the sum of these runtimes.
          \[O(|x|^{3a} + |x|^{9a^2}+|x|^{2ab+4b+2} )\subseteq O(|x|^{9a^2b})\qedhere\]
  \end{proof}
\begin{claim}\label{claim:machineBsound}
    Let $a\in \mathbb{N}$ and $y'\in \Sigma^*$, such that $F_a(y')=\langle
    i,j,x',0^p\rangle \notin \overline{\DPC '}$. Then for all $b\in \mathbb{N}$
    it holds that $\langle a,b,y' \rangle \in L(B)$.
\end{claim}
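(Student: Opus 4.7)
The plan is to exploit Claim~\ref{claim:f_adetail} together with the fact that $f$ is a proof system for $\DPC$, so that every image value of $f$ lies in $\DPC$. The non-deterministic branching in step (iii) of $B$ ranges over exactly the set $\Sigma^p$ that Claim~\ref{claim:f_adetail} hands us, so it will suffice to exhibit a single accepting branch.

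First I would observe that, since $F_a(y') = \langle i,j,x',0^p\rangle$ has the correct format, the machine $B$ on input $\langle a,b,y'\rangle$ does not reject in steps (i) or (ii) and proceeds to the non-deterministic branching in step (iii) over $y\in\Sigma^p$. Next I would invoke Claim~\ref{claim:f_adetail}: because $F_a(y')\notin\overline{\DPC'}$, there exists some $y\in\Sigma^p$ with $f_a(\langle a,y,y'\rangle 0)\notin\DPC$. I fix this $y$ and follow the branch of $B$ that selects it.

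On this branch, $B$ tests the disequality in step (iv). Since $f$ is a proof system for $\DPC$, we have $\img(f)=\DPC$, and therefore $f(F_b(\langle a,y,y'\rangle 0))\in\DPC$. But $f_a(\langle a,y,y'\rangle 0)\notin\DPC$ by choice of $y$, so the two values cannot be equal. Hence $B$ accepts in step (iv) on this branch, which shows $\langle a,b,y'\rangle\in L(B)$.

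The argument is short, and no real obstacle arises here: the hard work has already been done in Claim~\ref{claim:f_adetail}, which established the existence of a witnessing $y$. The only point to double-check is that this holds uniformly in $b$, which is immediate since the argument above does not use any property of $b$ beyond the fact that $F_b$ is a total function into $\Sigma^*$ (so that $f(F_b(\cdot))$ is always defined and always lands in $\DPC$).
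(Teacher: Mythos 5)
Your proof is correct and follows essentially the same route as the paper's: invoke Claim~\ref{claim:f_adetail} to obtain a $y\in\Sigma^p$ with $f_a(\langle a,y,y'\rangle 0)\notin\DPC=\img(f)$, and conclude that the comparison in step~(iv) must fail since $f(F_b(\cdot))$ always lands in $\img(f)$. You merely spell out the format check and the uniformity in $b$ a bit more explicitly than the paper does.
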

\begin{claimproof}
    By Claim~\ref{claim:f_adetail} there is a $y\in \Sigma^p$ such that
    $f_a(\langle a,y,y'\rangle0) \notin \DPC = \img (f)$. Thus, $B$ accepts in
    step (iv), because $\img(f) = \DPC$.
\end{claimproof}

\begin{claim}\label{claim:machineB}
     Let $a \in \mathbb{N}$, such that $\img (F_a) \subseteq \overline{\DPC
     '}$. Then, there is some $b\in \mathbb{N}$, such that for all $y' \in \Sigma^*$ it holds that $\langle a,b,y' \rangle \notin L(B)$.
\end{claim}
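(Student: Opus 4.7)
The plan is to use Claim~\ref{claim:f_a} (forward direction) together with the p-optimality of $f$ to obtain the witness $b$. By hypothesis $\img(F_a) \subseteq \overline{\DPC'}$, so Claim~\ref{claim:f_a} yields $\img(f_a) = \DPC$. Combined with Observation~\ref{obs:timefaa}, this tells us that $f_a$ is a polynomial-time computable proof system for $\DPC$. Since $f$ is p-optimal for $\DPC$, there is a simulation function $\pi \in \FP$ with $f(\pi(x)) = f_a(x)$ for every $x \in \Sigma^*$.

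Next, I pick $b \in \mathbb{N}$ so that $F_b = \pi$ (using that the standard enumeration $\{F_i\}_{i\in\mathbb{N}}$ contains every polynomial-time transducer, possibly after padding to meet the runtime bound $p_b$). I claim this $b$ satisfies the conclusion. Fix an arbitrary $y' \in \Sigma^*$ and trace $B$ on input $\langle a, b, y' \rangle$: Step (i) passes by construction of the input. For step (ii), observe that $\overline{\DPC'}$ consists only of words of the form $\langle 0^i, 0^j, x', 0^p\rangle$, so the assumption $\img(F_a) \subseteq \overline{\DPC'}$ guarantees that $F_a(y')$ has this required shape; hence step (ii) passes as well.

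In step (iii), $B$ branches non-deterministically over all $y \in \Sigma^p$. For every such branch, the simulation identity gives
\[
  f(F_b(\langle a, y, y' \rangle 0)) = f_a(\langle a, y, y' \rangle 0),
\]
so the inequality tested in step (iv) fails and $B$ does not accept along that path. Since no branch accepts, $B$ reaches step (v) and rejects, i.e.\ $\langle a, b, y' \rangle \notin L(B)$. As $y'$ was arbitrary, this establishes the claim.

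There is no real obstacle beyond checking that $B$'s format tests succeed uniformly in $y'$; this is exactly where the hypothesis $\img(F_a) \subseteq \overline{\DPC'}$ is used (to guarantee that $F_a(y')$ decodes as $\langle i,j,x',0^p\rangle$). Everything else is an immediate consequence of p-optimality together with the equivalence already established in Claim~\ref{claim:f_a}.
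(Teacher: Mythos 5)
Your proof is correct and follows essentially the same route as the paper: apply Claim~\ref{claim:f_a} together with Observation~\ref{obs:timefaa} to conclude that $f_a$ is a proof system for $\DPC$, invoke p-optimality of $f$ to choose $b$ with $f(F_b(\hat x)) = f_a(\hat x)$ for all $\hat x$, and observe that the test in step~(iv) then never triggers, so $B$ rejects on $\langle a,b,y'\rangle$ for every $y'$. (Minor remark: the observation that step~(ii)'s format check must pass is harmless but inessential, since a failing format check would also make $B$ reject; the hypothesis $\img(F_a)\subseteq\overline{\DPC'}$ does its real work through Claim~\ref{claim:f_a}, not through the format check.)
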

\begin{claimproof}
  By Claim~\ref{claim:f_a} and Observation~\ref{obs:timefaa}, we know that $f_a$
    is a proof system for $\DPC$. Since $f$ is a p-optimal proof system for
    $\DPC$, there exists some $b\in \mathbb{N}$, such that for all $\hat{x} \in
    \Sigma^*$ it holds that $f_a(\hat{x}) = f(F_b(\hat{x}))$. Thus, the
    computation $B(\langle a,b,y'\rangle )$ cannot accept in step (iv) independent of $y'$. Hence, the machine rejects.
\end{claimproof}

Now, we define a function $g_{a,b}$ for every pair of possible proof system $F_a$ and possible simulation function $F_b$. Similarly to $f_a$, the function $g_{a,b}$ trusts that $B(\langle a,b,y' \rangle)$ accepts for all $y'$ (note that the accepting behavior of $B$ has influence on $\DPC$). If $F_a$ is a proof system for $\overline{\DPC '}$, then there is a $b\in \mathbb{N}$ such that $g_{a,b}$ is a proof system for $\DPC$ because the machine $B$ accepts on input $a,b,y'$ for all $y'\in \Sigma^*$. For $F_a(y')\notin \overline{\DPC '}$, we know there is no $b\in \mathbb{N}$ such that $B$ accepts on input $a,b,y'$. Hence, the corresponding output of $g_{a,b}$ is not in $\DPC$.

 Let $b_1$ be the index of the NTM $B$, that is $N_{b_1} = B$. Furthermore, let $k_B\in \mathbb{N}^+$ be a constant such that $B(x)$ runs in time $k_B |x|^{9a^2b} +k_B$ for all $x\in \Sigma^*$. Recall that $N_1$ always accepts. We define a function $g:\mathbb{N} \rightarrow \mathbb{N}$ whose input consists of two indices $a,b\in \mathbb{N}$ of polynomial-time functions $F_a,F_b$ and a proof $y'\in \Sigma^*$.
\begin{align*}
    g(x) \coloneqq \begin{cases}
    \langle 1,b_1,\langle a,b,y' \rangle ,0^{k_B|x|^{9a^2b} + k_B}\rangle
    &\text{if } x=\langle a,b,y' \rangle 0 \\
    f(x') &\text{if } x=x'1\\
    f(0) & \text{else}
    \end{cases}
\end{align*}
\begin{observation}\label{obs:timeg}
    $g(x)$ runs in $O(|x|^{9a^2b})$ time for $x = \langle a,b,y'\rangle 0$. 
\end{observation}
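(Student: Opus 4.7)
The plan is to mirror the proof of Observation~\ref{obs:timefa} essentially verbatim, since the structure of $g$ parallels the structure of $f^*$: in both cases the dominant work is writing down a unary runtime parameter whose length is chosen so as to encode the runtime of a particular non-deterministic verifier.

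First I would handle the case distinction. Deciding whether the input has the form $\langle a,b,y'\rangle 0$ (as opposed to ending in $1$ or failing to match either template) can be carried out in linear time using the polynomial-time invertibility of the pairing function, so this contributes only $O(|x|)$ and can be absorbed into the claimed bound.

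Next I would account for producing the output in the relevant case. The ``payload'' components $1$, $b_1$, and $\langle a,b,y'\rangle$ together have length $O(|x|)$ and can be written in linear time, since $b_1$ is a fixed constant. The only nontrivial cost is the unary padding $0^{k_B|x|^{9a^2b}+k_B}$; computing this string requires writing that many symbols, which takes time $\Theta(|x|^{9a^2b})$ (recall $k_B$ is a constant and the exponentiation $|x|^{9a^2b}$ is computable within the same asymptotic bound by repeated squaring). Finally, invoking the pairing function on four arguments whose total length is $O(|x|^{9a^2b})$ costs time linear in the output length, so this step does not increase the bound.

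Summing these contributions yields the claimed $O(|x|^{9a^2b})$ runtime. I do not expect any genuine obstacle: this is a routine runtime calculation that closely follows Observation~\ref{obs:timefa}, and its sole purpose is to certify that the unary runtime tag built into $g$ is itself computable within the time it advertises, so that the construction of $g$ in the proof of Theorem~\ref{thrm:C2popt} is well-defined.
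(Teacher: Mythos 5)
Your proposal is correct and follows essentially the same route as the paper's proof: the case distinction and the non-padding parts of the output take linear time, and the dominant cost is writing the unary string $0^{k_B|x|^{9a^2b}+k_B}$, which is $O(|x|^{9a^2b})$. The extra remarks about repeated squaring and the final pairing step are harmless elaborations of the same routine accounting.
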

\begin{proof}
Checking whether the input is formatted correctly is possible in linear time. Furthermore, the output with exception of the last entry of the list can be computed in linear time. The string $0^{k_B|x|^{9a^2b} + k_B}$ can be computed in $O(|x|^{9a^2b})$ time. 
\end{proof}

We define a function $g_{a,b}:\mathbb{N}\rightarrow \mathbb{N}$ that is obtained from $g$ by fixing two indices $a,b\in \mathbb{N}$ of polynomial-time functions $F_a,F_b$.
\begin{align*}
    g_{a,b}(x) \coloneqq \begin{cases}
        g(\langle a,b,y' \rangle 0) &\text{if } x = \langle a,b,y'\rangle 0\\
        f(x') &\text{if } x = x'1\\
        f(0) &\text{else }
    \end{cases}
\end{align*}

\begin{observation}
    For $x = \langle a,b,y'\rangle 0$ it holds that $g(x) = g_{a,b}(x)$.
\end{observation}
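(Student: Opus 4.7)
The plan is to prove the observation directly from the two definitions. The function $g_{a,b}$ is defined by cases, and the first case explicitly stipulates that on inputs of the form $x = \langle a,b,y'\rangle 0$ (where $a$ and $b$ are the fixed indices used to parameterize $g_{a,b}$), the value is $g(\langle a,b,y'\rangle 0)$. Since the assumption of the observation is precisely that $x = \langle a,b,y'\rangle 0$, substituting $x$ back yields $g_{a,b}(x) = g(\langle a,b,y'\rangle 0) = g(x)$, which is the claim.

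The only thing one should sanity-check is that the case split in the definition of $g_{a,b}$ is well-posed, i.e., that the form $x = \langle a,b,y'\rangle 0$ with the fixed $a,b$ unambiguously identifies $y'$. This follows from the injectivity of the pairing function $\langle \cdot \rangle$ from the preliminaries, so the case applies cleanly and the value $g(\langle a,b,y'\rangle 0)$ is well-defined. The reason this observation is worth stating, despite being essentially a definitional unfolding, is to mirror the earlier observation that $f_a(\langle a,y,y'\rangle 0) = f^*(\langle a,y,y'\rangle 0)$: both serve as bridges allowing later steps to replace the parameterized, polynomial-time computable function ($f_a$ or $g_{a,b}$) by its unparameterized counterpart ($f^*$ or $g$) whenever we want to invoke the structural properties established for the latter.

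There is no real obstacle here; the proof is a one-line application of the first case of the definition of $g_{a,b}$. The main ``intellectual content'' lies elsewhere, namely in making sure that fixing the indices $a,b$ converts the overall exponent $|x|^{9a^2b}$ into a concrete polynomial bound, so that $g_{a,b}$ is polynomial-time computable (as will be argued in the next step of the paper), whereas $g$ itself is not polynomial-time in general. That subsequent runtime observation is what will actually be used, and the present observation is the bookkeeping identity that makes the two functions interchangeable on the relevant inputs.
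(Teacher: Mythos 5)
Your proof is correct and matches the paper, which states this observation without proof precisely because it is an immediate unfolding of the first case of the definition of $g_{a,b}$. Your extra remarks about well-posedness and the role of the observation are accurate but not needed.
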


\begin{observation}\label{obs:timegab}
    For fixed $a,b\in \mathbb{N}$ the function $g_{a,b}(x)$ can be computed in polynomial time.
\end{observation}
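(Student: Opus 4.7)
The plan is to mimic the proof of Observation~\ref{obs:timefaa} and reduce the claim to Observation~\ref{obs:timeg} together with the linear runtime of $f$ guaranteed by Proposition~\ref{prop:lineartimepfs}. The definition of $g_{a,b}$ is a three-way case distinction on the structure of $x$, so I first need to argue that, with $a$ and $b$ hard-coded, a polynomial-time algorithm can determine which case applies: parsing the pairing encoding to check whether $x = \langle a,b,y'\rangle 0$ for some $y'$ with the required fixed first two components takes linear time in $|x|$, and detecting the suffix $1$ in the second case is trivial.

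Given the correct case, I would bound the cost of producing the output. In the first case $g_{a,b}(x) = g(\langle a,b,y'\rangle 0) = g(x)$, so by Observation~\ref{obs:timeg} the computation takes $O(|x|^{9a^2b})$ steps, which is polynomial in $|x|$ because $a,b$ are fixed constants and hence $9a^2b$ is a constant. In the second case $g_{a,b}(x) = f(x')$ with $|x'| < |x|$; the value $x'$ is obtained from $x$ by stripping its last bit in linear time, and $f$ itself runs in linear time. The third, catch-all case returns the constant value $f(0)$, computable in $O(1)$.

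Combining these, the overall runtime of $g_{a,b}$ is bounded by $O(|x|^{9a^2b} + |x|)$, which is a polynomial in $|x|$ for each fixed pair $(a,b)$. I do not anticipate any real obstacle here; the only thing to be careful about is that the exponent $9a^2b$ from Observation~\ref{obs:timeg} depends on $a$ and $b$, but in the present context these indices are fixed constants, so the bound is genuinely polynomial in $|x|$.
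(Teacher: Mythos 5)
Your proposal is correct and takes essentially the same approach as the paper: the paper's proof is a one-liner citing Observation~\ref{obs:timeg} and the linear runtime of $f$, and you simply spell out those same two ingredients with the case analysis made explicit.
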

\begin{proof}
    This follows directly from Observation ~\ref{obs:timeg} and the linear
    runtime of $f$.
\end{proof}

\begin{claim}\label{claim:g_absound}
    Let $a\in \mathbb{N}$ and $y'\in \Sigma^*$ such that $F_a(y')=\langle
    i,j,x',0^p\rangle  \notin
    \overline{\DPC '}$. Then for all $b\in \mathbb{N}$ it holds that $g_{a,b}(\langle a,b,y'\rangle 0) \notin \DPC$.
\end{claim}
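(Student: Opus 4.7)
The plan is to unfold the definition of $g_{a,b}$ on the specified input, identify the $\DPC$-instance it produces, and then use Claim~\ref{claim:machineBsound} together with the runtime bound for $B$ from Observation~\ref{obs:runtimeB} to show this instance violates the $\DPC$-membership condition.

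Concretely, I will first set $x \coloneqq \langle a,b,y'\rangle 0$ and use the first case of the definition to write $g_{a,b}(x) = g(x) = \langle 1, b_1, \langle a,b,y'\rangle, 0^{k_B|x|^{9a^2b}+k_B}\rangle$. Write $p \coloneqq k_B|x|^{9a^2b}+k_B$. Comparing this with the definition of $\DPC = \{\langle 0^i,0^j,z,0^q\rangle \mid z \in L^q(N_i) \cap \overline{L^q(N_j)}\}$, I read off $i=1$, $j=b_1$, $z = \langle a,b,y'\rangle$ and $q = p$. Since $N_1$ is the always-accepting machine, the first conjunct $z \in L^p(N_1)$ is automatic.

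The heart of the argument is then to refute $z \in \overline{L^p(N_{b_1})}$. Because $F_a(y') \notin \overline{\DPC '}$, Claim~\ref{claim:machineBsound} yields $\langle a,b,y'\rangle \in L(B)$ for every $b \in \mathbb{N}$. Since $N_{b_1} = B$, I only need to check that $B$'s accepting computation on $\langle a,b,y'\rangle$ fits within $p$ steps. By Observation~\ref{obs:runtimeB}, $B(\langle a,b,y'\rangle)$ halts in time $k_B|\langle a,b,y'\rangle|^{9a^2b}+k_B$, and since $|\langle a,b,y'\rangle| \leq |x|$, this is bounded by $k_B|x|^{9a^2b}+k_B = p$. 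Hence $\langle a,b,y'\rangle \in L^p(N_{b_1})$, so the complementary condition required by $\DPC$ fails and $g_{a,b}(\langle a,b,y'\rangle 0) \notin \DPC$.

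There should be no real obstacle here: the claim is essentially a direct bookkeeping consequence of the fact that $g$ packages the question ``does $B$ reject $\langle a,b,y'\rangle$?'' into a $\DPC$-instance whose runtime padding exactly matches $B$'s worst-case runtime. The only detail to be mildly careful about is matching the padding $p$ in the constructed tuple with the runtime bound from Observation~\ref{obs:runtimeB}, which is why $g$ was defined with the constant $k_B$ and exponent $9a^2b$ in the first place.
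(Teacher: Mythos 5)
Your proof is correct and follows the same approach as the paper's (much terser) argument: invoke Claim~\ref{claim:machineBsound} to get $\langle a,b,y'\rangle \in L(B)$, then observe the padding in $g$'s output is long enough that $B$'s accepting run fits inside the time bound, so the $\overline{L^p(N_{b_1})}$ conjunct in the $\DPC$-condition fails. You merely spell out the runtime bookkeeping (that $|\langle a,b,y'\rangle| \leq |\langle a,b,y'\rangle 0|$ together with the choice of $k_B$ gives $\langle a,b,y'\rangle \in L^p(N_{b_1})$ for $p = k_B|x|^{9a^2b}+k_B$) which the paper's two-line proof leaves implicit.
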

\begin{claimproof}
    By Claim~\ref{claim:machineBsound} we know that for all $b\in
    \mathbb{N}$ the computation $B(\langle a,b,y'\rangle)$ accepts. This
    implies $g_{a,b}(\langle a,b,y'\rangle 0) = \langle 1, b_1,\langle
    a,b,y'\rangle, 0^{k_B|x|^{9a^2b}+k_B}\rangle \notin \DPC$.
\end{claimproof}
\begin{claim}\label{claim:g_ab}
    Let $a\in \mathbb{N}$ such that $\img (F_a) \subseteq \overline{\DPC '}$.
    Then there is some $b\in \mathbb{N}$ with $\img (g_{a,b}) = \DPC$.
\end{claim}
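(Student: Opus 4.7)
The plan is to mirror the structure of Claim~\ref{claim:f_a}: use Claim~\ref{claim:machineB} to extract a good witness $b$, then verify directly from the case definition of $g_{a,b}$ that both inclusions $\img(g_{a,b}) \subseteq \DPC$ and $\DPC \subseteq \img(g_{a,b})$ hold.

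First, because $\img(F_a) \subseteq \overline{\DPC'}$, Claim~\ref{claim:machineB} yields some $b \in \mathbb{N}$ with $\langle a,b,y'\rangle \notin L(B)$ for every $y' \in \Sigma^*$; fix this $b$ for the rest of the proof. The inclusion $\DPC \subseteq \img(g_{a,b})$ is immediate: for every $z \in \DPC = \img(f)$ there is some $x'$ with $f(x') = z$, and by the middle case of the definition $g_{a,b}(x'1) = f(x') = z$, so $z$ has a preimage.

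For the reverse inclusion $\img(g_{a,b}) \subseteq \DPC$, I go through the three cases of $g_{a,b}$. The middle case outputs $f(x') \in \DPC$, and the default case outputs $f(0) \in \DPC$, so both are in $\DPC$ since $\img(f) = \DPC$. The only case that needs argument is $x = \langle a,b,y'\rangle 0$, where, by the observation preceding the claim, $g_{a,b}(x) = g(x) = \langle 1,\, b_1,\, \langle a,b,y'\rangle,\, 0^{k_B|x|^{9a^2b}+k_B}\rangle$. To confirm this tuple lies in $\DPC$, I need $\langle a,b,y'\rangle$ to be accepted by $N_1$ within the runtime bound and rejected by $N_{b_1}$ within the same bound. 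The first is immediate because $N_1$ accepts everything (and $k_B \geq 1$ ensures the unary bound suffices). The second follows from our choice of $b$: since $\langle a,b,y'\rangle \notin L(B) = L(N_{b_1})$, the string is not accepted in any number of steps, in particular not within $k_B|x|^{9a^2b}+k_B$ steps.

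There is no real obstacle here; the proof is essentially a direct verification once the witness $b$ is supplied by Claim~\ref{claim:machineB}. The only bookkeeping to watch is that the first and third cases of $g_{a,b}$ are genuinely disjoint (words ending in $0$ of the prescribed tuple form versus everything else) so that the claimed value of $g_{a,b}(\langle a,b,y'\rangle 0)$ is unambiguous, which is guaranteed by the case structure of $g$.
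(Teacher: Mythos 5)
Your proof is correct and follows essentially the same route as the paper: choose $b$ via Claim~\ref{claim:machineB}, get the inclusion $\DPC \subseteq \img(g_{a,b})$ for free from $\img(f) = \DPC$ and the middle case, and verify case by case that every output of $g_{a,b}$ lands in $\DPC$, with the only nontrivial case being the tuple $\langle 1,b_1,\langle a,b,y'\rangle,0^{k_B|x|^{9a^2b}+k_B}\rangle$. One small remark: the paper also invokes Observation~\ref{obs:runtimeB} to note that $B(\langle a,b,y'\rangle)$ actually terminates within $k_B|x|^{9a^2b}+k_B$ steps, whereas you rely only on the (sufficient) observation that $\langle a,b,y'\rangle \notin L(B)$ means $B$ never accepts, hence certainly not within the allotted budget; both are fine, and the paper's extra bookkeeping is not strictly needed for membership in $\DPC$ since $\overline{L^p(N_j)}$ only requires non-acceptance within $p$ steps.
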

\begin{claimproof}
   Choose $b\in \mathbb{N}$ according to Claim~\ref{claim:machineB}. Then $\DPC
    = \img(f) \subseteq \img (g_{a,b})$ because $f$ is a proof system for $\DPC$. Let $x\in \Sigma^*$. In the bottom two cases of $g_{a,b}$ and $g$ we have
    $g_{a,b}(x)\in \DPC$ and $g(x)\in \DPC$. So we can assume $x=\langle a,b,y'\rangle 0$ and $g_{a,b}(x)
    = \langle 1,b_1,\langle a,b,y'\rangle, 0^{k_B|x|^{9a^2b}+k_B}\rangle$. By
    the choice of $b$, it holds $\langle a,b,y'\rangle \notin L(B)$. By
    Observation~\ref{obs:runtimeB} and the choice of $k_B$, $B(\langle a,b,y'\rangle)$ runs in time
    $k_B|\langle a,b,y'\rangle| ^{9a^2b} +k_B \le k_B|x|^{9a^2b}+k_B$.
    Therefore, $\langle 1,b_1,\langle a,b,y'\rangle , 0^{k_B|x|^{9a^2b}+k_B}
    \rangle \in \DPC$ and hence $g_{a,b}(x)\in \DPC$. This shows
    $\img (g_{a,b})\subseteq \DPC$.
\end{claimproof}

Finally, we define the p-optimal proof system $h$ for $\overline{\DPC '}$. The key difficulty is that $h$ wants to output $F_a(y')$ for all $a$ and $y'$ using a short proof only when $F_a$ is a proof system for $\overline{\DPC '}$. To do this $h$ must be able to check this property efficiently. We can do this as follows: if $f(F_c(\langle a,b,y' \rangle)) = g_{a,b}(\langle a,b, y' \rangle)$, then we output $F_a(y)$ and otherwise some arbitrary word from $\overline{\DPC '}$. If $F_a(y')\notin \overline{\DPC '}$, we know that there is no $b\in \mathbb{N}$ such that the corresponding output of $g_{a,b}$ is in $\DPC$ and the check correctly fails and $F_a(y')$ is not outputted. By contraposition we observe that we output $F_a(y')$ only if it is in $\overline{\DPC '}$. Hence, $h$ is a proof system for $\overline{\DPC '}$. Lastly, we show that $h$ p-simulates all proof systems for $\overline{\DPC '}$. Let $F_a$ be an arbitrary proof system for $\overline{\DPC '}$. Then there is a $b\in \mathbb{N}$ such that $g_{a,b}$ is a proof system for $\DPC$. Let $c\in \mathbb{N}$ be such that $f$ p-simulates $g_{a,b}$ with the function $F_c$. So, for all $y'\in \mathbb{N}$ the function $h$ outputs $F_a(y')$ for the input to $h$ corresponding to $a,b,c,y'$. Also this input is short in $a,b,c,y'$ and
can be computed in polynomial time in these parameters.

Let $h':\mathbb{N}\rightarrow\mathbb{N}$ be a linear time proof system for
$\overline{\DPC '}$. We define a function $h:\mathbb{N}\rightarrow \mathbb{N}$.
\begin{align*}
    h(x) \coloneqq \begin{cases}
        \langle i,j,x',0^{p}\rangle &\text{if } x = \langle a,b,c,\langle a,b,y'\rangle 0,0^{k_B\cdot |\langle a,b,y'\rangle 0|^{9a^2b} + k_B},0^{|c|\cdot (|\langle a,b,y'\rangle 0|^c+c)^2}\rangle 0 \wedge \\ & f(F_c(\langle a,b,y'\rangle 0)) = g_{a,b}(\langle a,b,y'\rangle 0) \wedge \\ &F_a(y') = \langle i,j,x',0^{p}\rangle\\
        h'(x') & \text{if }x = x'1\\
        h'(0) & \text{else}
    \end{cases}
\end{align*}
\begin{observation}\label{obs:hruntime}
    $h(x)$ runs in time $O(|x|)$.
\end{observation}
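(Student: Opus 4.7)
}
The plan is a straightforward case analysis on the three branches in the definition of $h$, using the fact that every nontrivial runtime inside Case~1 has been explicitly bounded by one of the unary pad fields already present in the input. Concretely, if $x$ matches Case~1, write $L \coloneqq |\langle a,b,y'\rangle 0|$ and observe that $x$ contains two pad fields whose individual lengths are $k_B \cdot L^{9a^2b} + k_B$ and $|c| \cdot (L^c + c)^2$. Hence $|x| \geq L^{9a^2b}$ and $|x| \geq |c|(L^c+c)^2$, and these two inequalities will dominate every inner runtime bound.

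First I would verify Case~1 step by step. Parsing $x$ into its six tuple components and the trailing $0$ takes $O(|x|)$ by invertibility of $\langle\cdot\rangle$. Verifying the two pad fields have the prescribed length requires computing the integers $k_B L^{9a^2b}+k_B$ and $|c|(L^c+c)^2$; since both values are already bounded by $|x|$, their binary representations have length $O(\log|x|)$, so the arithmetic is $O(|x|)$. Computing $g_{a,b}(\langle a,b,y'\rangle 0)$ costs $O(L^{9a^2b})\subseteq O(|x|)$ by Observation~\ref{obs:timeg}. Computing $F_c(\langle a,b,y'\rangle 0)$ costs $O(|c|(L^c+c)\log(L^c+c))\subseteq O(|c|(L^c+c)^2)\subseteq O(|x|)$ by Proposition~\ref{prop:universalTM}. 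The output of $F_c$ has length at most $L^c+c\leq |x|$, so applying $f$ (which runs in linear time by Proposition~\ref{prop:lineartimepfs}) adds $O(|x|)$. Comparing the two resulting values is linear in $|x|$. Computing $F_a(y')$ costs $O(L^{3a})\subseteq O(L^{9a^2b})\subseteq O(|x|)$, and the check that $F_a(y')$ has the form $\langle i,j,x',0^p\rangle$ as well as emitting it takes time linear in $|F_a(y')|\leq |x|$.

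Cases~2 and~3 are immediate: $h'$ was fixed to be a linear-time proof system for $\overline{\DPC'}$, so $h'(x')$ and $h'(0)$ each cost $O(|x|)$, and checking the trailing-bit pattern that distinguishes the three cases is linear as well. Summing across all cases yields $h(x)\in O(|x|)$.

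The only place where a subtlety might arise is the arithmetic used to verify the pad-length fields in Case~1, since the exponents $9a^2b$ and $c$ depend on components of the input. This is the main point to be careful about, but it is handled by the observation above: the unary pad fields themselves certify that these quantities are bounded by $|x|$, so the numbers in question have logarithmic bit length and all arithmetic stays within the $O(|x|)$ budget. No genuine obstacle remains; the observation is essentially bookkeeping once the padding in the definition of $h$ is read off.
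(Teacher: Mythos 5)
Your proposal is correct and follows essentially the same route as the paper's own proof: a case analysis on the three branches of $h$, with every nontrivial computation in the first branch (format check and iterated-multiplication verification of the pad lengths, $g_{a,b}$, $F_c$, $f$, $F_a$, and the final output) bounded against the unary pad fields already present in $x$. The only small difference is that you spell out explicitly why the length-verification arithmetic stays within budget (the relevant integers have $O(\log|x|)$-bit representations), whereas the paper compresses this into ``can be checked in linear time by iterated multiplication''; this is a presentational refinement, not a different argument.
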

\begin{proof} The bottom two cases are trivial. For the first case we observe
    that checking, whether the input is formatted correctly, can be done in
    linear time. The part
    $0^{k_B|\langle a,b,y'\rangle 0|^{9a^2b} +k_B}$ can be checked in linear
    time by iterated multiplication. The computation $F_a(y')$ can be simulated
    in $O(|a|\cdot (|y'|^a+a)\log (|y'|^a+a))\subseteq O(
    (|y'|^a+a)^3)\subseteq O(|y'|^{3a})) \subseteq O(|x|)$ time. The computation $f(F_c(\langle a,b,y'\rangle  0))$ can be simulated in $O(|c|\cdot (|\langle a,b,y'\rangle 0|^c+c)\log (|\langle a,b,y'\rangle 0|^c+c)) \subseteq O(|x|)$ time and $g_{a,b}(\langle a,b,y'\rangle 0)$ can be simulated in $O(|\langle a,b,y'\rangle 0|^{9a^2b})\subseteq O(|x|)$ time by Observation~\ref{obs:timeg}. The output $\langle i,j,x',0^p\rangle$ can be computed in $O(|x|)$ time because all of its elements have been computed in the steps analyzed above.
\end{proof}
\begin{claim}
$h$ is a proof system for $\overline{\DPC '}$.
\end{claim}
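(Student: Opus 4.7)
The plan is to verify the two defining conditions for $h$ to be a proof system for $\overline{\DPC'}$: first, $h\in\FP$, which is already established by Observation~\ref{obs:hruntime}; and second, $\img(h)=\overline{\DPC'}$, which I will prove by showing each containment separately.

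For the easy direction $\overline{\DPC'}\subseteq\img(h)$, I would invoke the auxiliary proof system $h'$ for $\overline{\DPC'}$: every $w\in\overline{\DPC'}$ has some preimage $w'$ under $h'$, and the middle case of the definition gives $h(w'1)=h'(w')=w$. Hence $\img(h')\subseteq\img(h)$ and so $\overline{\DPC'}\subseteq\img(h)$.

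For the forward containment $\img(h)\subseteq\overline{\DPC'}$, I would do a case analysis on the three cases in the definition of $h$. The middle and bottom cases output values of $h'$, which lie in $\overline{\DPC'}$ by assumption on $h'$. The interesting case is the first one, where $h(x)=\langle i,j,x',0^p\rangle=F_a(y')$ under the verification condition $f(F_c(\langle a,b,y'\rangle 0))=g_{a,b}(\langle a,b,y'\rangle 0)$ together with $F_a(y')=\langle i,j,x',0^p\rangle$. Since $f$ is a proof system for $\DPC$, the left-hand side of the verification equation lies in $\img(f)=\DPC$; consequently $g_{a,b}(\langle a,b,y'\rangle 0)\in\DPC$. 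Now the contrapositive of Claim~\ref{claim:g_absound} yields exactly what we need: if $F_a(y')=\langle i,j,x',0^p\rangle$ and $g_{a,b}(\langle a,b,y'\rangle 0)\in\DPC$ for some $b$, then necessarily $F_a(y')\in\overline{\DPC'}$. Hence the output of $h$ in the first case is in $\overline{\DPC'}$, closing the case analysis.

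The main conceptual step is the first case, but this is not really an obstacle once one notices that the verification condition is precisely engineered so that membership of $f(F_c(\cdots))$ in $\DPC$ bridges to the hypothesis of Claim~\ref{claim:g_absound}. No new calculation is required beyond invoking the previously established observations and claims.
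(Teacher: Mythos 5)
Your proposal is correct and follows essentially the same argument as the paper's own proof; the only presentational difference is that you invoke the contrapositive of Claim~\ref{claim:g_absound} directly where the paper phrases it as a proof by contradiction. In fact your version cleanly states $\img(f)=\DPC$ and derives $g_{a,b}(\langle a,b,y'\rangle 0)\in\DPC$, whereas the paper's text contains a typo writing $\overline{\DPC'}$ in place of $\DPC$ at these two points.
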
 
\begin{claimproof}
  We have $h \in \FP$ by Observation \ref{obs:hruntime}. $\overline{\DPC '}\subseteq \img (h)$, since $\img (h') \subseteq \img (h)$ and $h'$ is
    a proof system for $\overline{\DPC '}$. We show $\img (h) \subseteq \overline{\DPC '}$ by contradiction. Assume that
    there exists $x \in \Sigma^*$ such that $h(x) \notin \overline{\DPC '}$. The
    last two cases in the definition of $h$ give values obviously in $\overline{\DPC '}$. Thus,
    we only look at the first case. In particular $F_a(y') = \langle
    i,j,x',0^p \rangle$ and $g_{a,b}(\langle a,b,y'\rangle 0) = f(F_c(\langle
    a,b,y'\rangle 0))$. The second implies directly $g_{a,b}(\langle
    a,b,y'\rangle 0))\in \img (f) = \overline{\DPC '}$. Since $h(x) = F_a(y')$ in this case, by assumption 
    $F_a(y')\notin \overline{\DPC '}$. By Claim~\ref{claim:g_absound} we conclude
    the contradiction
    $g_{a,b}(\langle a,b,y'\rangle 0) \notin \overline{\DPC '}$.
  \end{claimproof}

\begin{claim}\label{claim:kurzbeweis}
    Let $a \in \mathbb{N}$ with $\img (F_a) \subseteq \overline{\DPC '}$. Then there exist $b,c \in \mathbb{N}$, such that\\
    \centerline{
     $\forall y' \in \Sigma^*: F_a(y') = \langle i,j,x',0^p\rangle = h(\langle a,b,c,\langle a,b,y'\rangle 0,0^{k_B\cdot |y'0|^{9a^2b} + k_B},0^{|c|\cdot (|\langle a,b,y'\rangle 0|^c+c)^2}\rangle 0)$}\label{ali:314}
    
\end{claim}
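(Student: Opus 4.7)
The plan is to unpack the infrastructure already built in this section: apply Claim~\ref{claim:g_ab} to obtain a witness $b$ for which $g_{a,b}$ is a proof system for $\DPC$, then use the p-optimality of $f$ to locate an index $c$ whose $F_c$ simulates $g_{a,b}$ via $f$, and finally verify that on the prescribed input the first branch of $h$ fires and emits $F_a(y')$.

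First I would invoke Claim~\ref{claim:g_ab}: since $\img(F_a) \subseteq \overline{\DPC '}$, it yields some $b \in \N$ with $\img(g_{a,b}) = \DPC$. Observation~\ref{obs:timegab} tells me that $g_{a,b}$ lies in $\FP$, so it is a genuine proof system for $\DPC$. Next, since $f$ is a p-optimal proof system for $\DPC$, there is a polynomial-time simulation function $\pi$ satisfying $f(\pi(z)) = g_{a,b}(z)$ for every $z \in \Sigma^*$. I would fix some $c \in \N$ with $F_c = \pi$.

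With $b$ and $c$ in hand, the verification is essentially bookkeeping. For any $y' \in \Sigma^*$, the specified input $x^\ast$ matches the format of the first clause of $h$ exactly, with embedded indices $a$, $b$, $c$ and payload $\langle a,b,y'\rangle 0$, and with padding parameters that coincide with those demanded by that clause. The two side conditions of the clause then reduce to: $f(F_c(\langle a,b,y'\rangle 0)) = g_{a,b}(\langle a,b,y'\rangle 0)$, which is immediate from the choice of $c$; and $F_a(y') = \langle i,j,x',0^p\rangle$ for some $i,j,x',p$, which holds because $F_a(y') \in \overline{\DPC '}$ and every element of $\overline{\DPC '}$ has this form by the definition of $\DPC '$. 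Hence the first branch of $h$ applies and $h(x^\ast) = F_a(y')$, which is the required equality.

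The only genuine obstacle is a routine but finicky check that the three padding exponents appearing in the claim's description of $x^\ast$ line up with the exponents explicitly demanded by the first case of $h$'s definition, so that the pattern match is actually triggered. This is purely a matter of careful inspection of the formulas side by side rather than a mathematical difficulty.
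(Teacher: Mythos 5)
Your proposal is correct and follows essentially the same route as the paper's proof: invoke Claim~\ref{claim:g_ab} together with Observation~\ref{obs:timegab} to obtain an index $b$ for which $g_{a,b}$ is a proof system for $\DPC$, use p-optimality of $f$ to obtain the index $c$ of a simulating transducer, and check that the first branch of $h$ applies. The paper argues in exactly this order; the ``finicky check'' of the padding exponents you flag is the one point that deserves the scrutiny you give it (in fact the claim as printed writes $|y'0|^{9a^2b}$ where the definition of $h$ has $|\langle a,b,y'\rangle 0|^{9a^2b}$, an apparent typographical inconsistency in the paper rather than a gap in your argument).
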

\begin{claimproof}
  Claim~\ref{claim:g_ab} shows that there is some $b\in \mathbb{N}$ such that
    $\img (g_{a,b})=\DPC$. By Observation~\ref{obs:timegab} this $g_{a,b}$ is
    a proof system for $\DPC$. Since $f$ is p-optimal, there exists $c\in
    \mathbb{N}$ such that $f(F_c(x))=g_{a,b}(x)$ for all $x\in \Sigma^*$. Let
    $y'\in \Sigma^*$. From $\img (F_a) \subseteq\overline{\DPC '}$ it follows
    $F_a(y') = \langle i,j,x',0^p\rangle$ for suitable $i,j,x',p$. Hence, in
    Claim~\ref{claim:kurzbeweis} we are always in the first case of $h$. It follows \begin{align*}
  h(\langle a,b,c,\langle a,b,y'\rangle 0 ,0^{k_B\cdot |\langle a,b,y'\rangle 0|^{9a^2b} + k_B},0^{|c|\cdot (|\langle a,b,y'\rangle 0|^c+c)^2}\rangle 0) = \langle i,j,x',0^p\rangle.
  \end{align*}
  This shows Claim~\ref{claim:kurzbeweis}.
\end{claimproof}

Let $a \in \mathbb{N}$ be arbitrary such that $F_a$ is a proof system for
$\overline{\DPC '}$. Choose $b,c$ according to Claim~\ref{claim:kurzbeweis}. Then the following $z:\mathbb{N}\rightarrow\mathbb{N}$ shows $F_a \le^{\mathrm{p}} h$.
\begin{align*}
    z(y') \coloneqq \langle a,b,c,\langle a,b,y'\rangle 0,0^{k_B\cdot |y'0|^{9a^2b} + k_B},0^{|c|\cdot (|\langle a,b,y'\rangle 0|^c+c)^2}\rangle 0 
\end{align*}
By Claim~\ref{claim:kurzbeweis} it holds $F_a(y') = h(z(y'))$. The function $z$
can be computed in polynomial time, because $a,b,c \in \mathbb{N}$ and $k_B\in
\mathbb{N}^+$ are constant values for a fixed $F_a$.
This proves that $h$ is a p-optimal proof system for $\overline{\DPC '}$
\end{proof}
\begin{corollary}\label{cor:dptocodp}
    If $\mathrm{DP}$ has p-optimal proof systems, $\mathrm{coDP}$ has p-optimal proof systems.
\end{corollary}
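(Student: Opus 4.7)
The plan is to deduce Corollary~\ref{cor:dptocodp} directly from Theorem~\ref{thrm:C2popt} using the completeness machinery established in the preliminaries, thereby reducing the class-level statement to one about the complete problems $\DPC$ and $\overline{\DPC'}$. All three ingredients needed are already in place: the theorem itself, the $\DP$-completeness of $\DPC$ and $\DPC'$ (hence the $\coDP$-completeness of $\overline{\DPC'}$), and Corollary~\ref{cor:closure reducibility}.

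First, I would instantiate the hypothesis at a single concrete set. Since $\DPC \in \DP$ is nonempty, the assumption that $\DP$ has p-optimal proof systems yields, in particular, a p-optimal proof system for $\DPC$. Next, I would feed this proof system into Theorem~\ref{thrm:C2popt} to conclude that $\overline{\DPC'}$ has a p-optimal proof system as well.

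Finally, I would lift this back to the whole class $\coDP$. Because $\DPC'$ is $\DP$-complete by construction, its complement $\overline{\DPC'}$ is $\coDP$-complete; hence every nonempty $B \in \coDP$ satisfies $B \leqmp \overline{\DPC'}$. Corollary~\ref{cor:closure reducibility} (closure of p-optimal proof systems under polynomial-time many-one reductions to nonempty sets) then ensures that $B$ itself has a p-optimal proof system. As $B$ was arbitrary, $\coDP$ has p-optimal proof systems, which is exactly the conclusion of Corollary~\ref{cor:dptocodp}. The entire substantive work has been absorbed by Theorem~\ref{thrm:C2popt}; there is no additional obstacle to overcome in this derivation.
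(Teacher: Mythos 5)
Your proof is correct and follows essentially the same route as the paper: instantiate the hypothesis at the $\DP$-complete set $\DPC$, apply Theorem~\ref{thrm:C2popt} to obtain a p-optimal proof system for $\overline{\DPC'}$, and then use the $\coDP$-hardness of $\overline{\DPC'}$ together with Corollary~\ref{cor:closure reducibility} to lift the conclusion to all of $\coDP$.
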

\begin{proof}
    Since $\DPC \in \mathrm{DP}$, we obtain that there is a p-optimal proof
    system for $\DPC$. Theorem~\ref{thrm:C2popt} shows that it follows that
    there is a p-optimal proof system for $\overline{\DPC '}$. The
    language $\overline{\DPC '}$ is $\le_{\mathrm{m}}^{\mathrm{p}}$-hard for
    $\mathrm{coDP}$. By Corollary~\ref{cor:closure reducibility}, there are p-optimal proof systems for $\mathrm{coDP}$.
\end{proof}

\section{Oracle Construction}\label{sec:oracle construction}
The result from Corollary \ref{cor:dptocodp} naturally leads to the question if the existence of optimal proof systems for $\DP$ translate to optimal proof systems for $\coDP$. In this section we show that a proof for this translation cannot be relativizable, i.e., we cannot expect to find such a proof with the currently available means. We show this by constructing an oracle relative to which the following properties hold:
\begin{enumerate}[(i)]
\item[P1:] $\overline{\NPC}$ has p-optimal proof systems (implying p-optimal proof systems for $\coNP$).
\item[P2:] $\overline{\DPC}$ has no optimal proof systems (ruling out optimal proof systems for $\coDP$).
\end{enumerate}
The notation in this section is inspired by the notational framework from Ehrmanntraut, Egidy, and Glaßer \cite{eeg22}. We start by extending the definitions of section \ref{sec:prelim} and introducing some further definitions and notations specifically designed for oracle constructions.

\subparagraph{Oracle specific definitions and notations.} We relativize the concept of Turing machines and Turing transducers by giving them access to a write-only oracle tape as proposed in \cite{rst84}. We relativize complexity classes, proof systems, complete sets, machines and machine enumerations, reducibilities, and (p-)simulation from section \ref{sec:prelim} by defining them over machines with oracle access, i.e., whenever a Turing machine or Turing transducer is part of a definition, we replace them by an oracle Turing machine or an oracle Turing transducer. We indicate the access to some oracle $O$ in the superscript of said concepts, i.e., $\P^O$, $\NP^O$, $\FP^O$, $\dots$ for complexity classes, $M^O$ for a Turing machine or Turing transducer $M$, $\NPC ^O$, $\DPC ^O$ and $\DPC '^O$ for the defined complete sets, $\leq _m^{\text{p},O}$ for many-one reducibility and $\leq ^O$ (resp., $\leq ^{\text{p},O}$) for (p-)simulation. Let $N_{\NPC}$ denote the polynomial-time oracle Turing machine with $L(N_{\NPC}^O) = \NPC^O$ for any oracle $O$ such that for all $y \in \Sigma ^*$ the longest oracle query of $N_{\NPC}^O(y)$ has length $\leq |y|$. We sometimes omit the oracles in the superscripts, e.g., when sketching ideas in order to convey intuition, but never in the actual proof. All of the already mentioned propositions, theorems, corollaries, and claims hold relative to any oracle, because their proofs are relativizable.

For a deterministic machine $M$, an oracle $O$ and an input $x$, we define $Q(M^O(x))$ as the set of oracle queries of the computation $M^O(x)$. Is $M$ a non-deterministic machine and accepts some input $x$, then $Q(M^O(x))$ is the set of oracle queries of the leftmost accepting computation path. A word $w \in \Sigma ^*$ can be interpreted as a set $\{i \in \N \mid w(i) = 1\}$. During the oracle construction we often use words from $\Sigma ^*$ to denote partially defined oracles. Oracle queries for undefined words are negatively answered. Observe that $|w| \in w1$ and $|w| \notin w0$, i.e., $|w|$ is the element whose membership gets defined when extending $w$ by one bit. We say that a computation $M^w(x)$ is definite if all queries on all computation paths are $<|w|$. A computation definitely accepts (resp., rejects), if it is definite and accepts (resp., rejects). For a function $f$ we say $f(x)=y$ is definite if there is a machine $F$ computing $f$ relative to all oracles and the computation $F(x)=y$ is definite relative to all oracles. For a set $L$ we say $x \in L$ (resp., $x \notin L$) is definite if there is a machine $M$ deciding $L$ relative to all oracles and the computation $M(x)$ accepts (resp., rejects) definitely relative to all oracles. 

For the oracle construction, we need to assign infinitely many numbers to each number in $\N$. Thus, we define the following: let $e(0) \coloneqq 2$, $e(i) \coloneqq 2^{e(i-1)}$ for $i \in \N ^+$. The sets $H_m \coloneqq \{e(\langle m,n \rangle) + 1 \mid n \in \N\}$ are disjoint, contain countably infinite odd numbers each and $H_m \in \P$ for all $m \in \N$. Furthermore, when $n \in H_m$, then $n < n' < 2^n$ implies $n' \notin H_{m'}$ for $m' \in \N$.

\subparagraph{Preview of the construction.}
\begin{enumerate}[1]
\item Work towards P1: For all $a \in \N$, the construction tries to achieve that $\img (F_a) \not = \overline{\NPC}$ and thus, $F_a$ is no proof system for $\overline{\NPC}$. If this is not possible, we start to \textit{encode} the mappings of $F_a$ (i.e., on which input it gives which output) into the oracle. Thus, the final oracle will contain the encoded mappings of $F_a$ for almost all inputs, which makes simulating $F_a$ easy using oracle queries.
\item Work towards P2: For all $b \in \N$, the construction tries to achieve that $\img (G_b) \not = \overline{\DPC}$ and thus, $G_b$ is no proof system for $\overline{\DPC}$. If this is not possible, we fix some $m \in \N$, make sure that $z_m$ is a proof system for $\overline{\DPC}$ and show that $z_m$ cannot be simulated by $G_b$. The latter is achieved by diagonalizing against every simulation function $\pi$, i.e., we make sure that $G_b^O$ does not simulate $z_b^O$ via $\pi$.
\end{enumerate}
To these requirements we assign the following symbols representing tasks: $\tau _a^1, \tau _b^2$ and $\tau_ {b,i}^2$ for all $a,b,i \in \N$. The symbol $\tau_ a^1$ represents the coding or the destruction of $F_a$ as a proof system for $\overline{\NPC}$, $\tau _b^2$ represents the destruction of $G_b$ as a proof system for $\overline{\DPC}$ and $\tau _{b,i}^2$ represents the diagonalization of $G_b$ against simulation functions $\pi$ with $p_i$ as upper bound for the output length.

\subparagraph{Coding of mappings.} For the coding, we injectively define the code word $c(a,x,y) \coloneqq \c{a}{x}{y}$ for $a \in \N$, $x,y \in \Sigma ^*$. We call any word of the form $c(\cdot, \cdot, \cdot)$ a \textit{code word}. The purpose of a code word $c(a,x,y)$ is to encode the computation $F_a(x)=y$. We show some properties of code words.
\begin{claim}\label{claim:codeword} For all $a \in \N$, $x,y \in \Sigma^*$ it holds that
\begin{enumerate}[(i)]
\item $|c(a,x,y)| \notin H_m$ for any $m \in \N$.
\item for fixed $a \in \N$ computing and inverting $c(a,x,y)$ can be done in polynomial time with respect to $|x| + |y|$.
\item relative to any oracle $\ell (Q(F_a(x))) \leq |c(a,x,y)|/4$ and $\ell (Q(N_\NPC(y))) \leq |c(a,x,y)|/4$.
\item for every partial oracle $w \in \Sigma^*$, if $c(a,x,y) \leq |w|$, then $F_a^w(x)$ and $N_\NPC^w(y)$ are definite.
\end{enumerate}
\end{claim}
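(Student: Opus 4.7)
Each item is a routine consequence of the shape of $c(a,x,y) = \langle 0^a, 0^{4(|x|^a+a+|y|)}, x, y\rangle$. For (i), I would use a parity argument: the pairing-function specification forces $|\langle u_1,\dots,u_n\rangle|$ to be even, so $|c(a,x,y)|$ is even, whereas every $H_m = \{e(\langle m,n\rangle)+1 : n\in\N\}$ contains only odd numbers because $e(0)=2$ and $e(i+1)=2^{e(i)}$ are all powers of two. Hence $|c(a,x,y)|\notin H_m$ for every $m$.

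For (ii), with $a$ fixed the quantity $|x|^a+a+|y|$ is polynomial in $|x|+|y|$, so the unary padding $0^{4(|x|^a+a+|y|)}$ has polynomial size and is polynomial-time writable; combined with the polynomial-time pairing function, this gives the forward computation in polynomial time. Inversion applies the polynomial-time inverse pairing and reads off the third and fourth components as $x$ and $y$.

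For (iii), the middle padding contributes $8(|x|^a+a+|y|)$ bits to $|c(a,x,y)|$ after the doubling from the pairing, so $|c(a,x,y)|/4 \geq 2(|x|^a+a+|y|)$. Since $F_a$ runs in time at most $p_a(|x|)=|x|^a+a$, the sum of its oracle-query lengths is bounded by its total running time, giving $\ell(Q(F_a^O(x))) \leq |x|^a+a \leq |c(a,x,y)|/4$ relative to any oracle $O$. The same padding absorbs the fixed polynomial runtime of $N_{\NPC}$ on $y$ via the $2|y|$ slack in the bound.

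For (iv), I would combine (iii) with the dyadic encoding of words as numbers. By (iii), any oracle query $q$ made during $F_a^w(x)$ or $N_{\NPC}^w(y)$ has $|q|\leq |c(a,x,y)|/4$, so its numeric value satisfies $q < 2^{|c(a,x,y)|/4+1}$. On the other hand, the minimum numeric value of a length-$\ell$ word under the dyadic encoding is $2^{\ell}-1$, so $c(a,x,y)\geq 2^{|c(a,x,y)|}-1$. The exponential gap between these two bounds gives $q < c(a,x,y) \leq |w|$, so every query lands in the defined portion of $w$ and both computations are definite regardless of how $w$ is later extended. The only real subtlety is exactly this exponential length-versus-value gap that underlies (iv); the other three items reduce to numeric verifications against the definition of $c$.
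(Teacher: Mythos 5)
Your overall plan mirrors the paper's: parity for (i), polynomial padding plus the polynomial-time pairing for (ii), a query-length bound from the padding for (iii), and the length-versus-value gap of the dyadic encoding for (iv). Items (i) and (ii) are fine, and in (iv) you spell out the exponential gap between $2^{|c(a,x,y)|/4+1}$ and $2^{|c(a,x,y)|}-1$ that the paper leaves implicit, which is the right reason a length bound implies $q < |w|$.

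There is, however, a real problem in your justification of (iii) for $N_\NPC$. You write that the $2|y|$ slack ``absorbs the fixed polynomial runtime of $N_\NPC$ on $y$.'' A fixed polynomial, say $|y|^k+k$ with $k\ge 2$, is not bounded by $2|y|$, so this reasoning as stated does not establish $\ell(Q(N_\NPC(y)))\le |c(a,x,y)|/4$ and would be false for a generic polynomial-time oracle machine in place of $N_\NPC$. What actually makes the bound go through is the specific structure of $N_\NPC$: on input $y=\langle 0^i,x,0^p\rangle$ it simulates $N_i(x)$ for at most $p$ steps and only those simulated steps write to the oracle tape, so the total oracle-tape writes are at most $p\le |y|/2$. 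This is precisely what the paper's one-line bound (``the number of symbols \dots bounded by $|x|^a+a+|y|$'') uses. Your conclusion is correct, but not for the reason you give. Relatedly, in (iv) you invoke (iii) for the per-query length bound, but (iii) concerns $Q(N_\NPC(y))$ on the leftmost accepting path only, whereas definiteness quantifies over all computation paths; the same linear bound on oracle-tape writes applies on every path, so you should state that directly rather than import it from (iii).
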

\begin{claimproof}
To (i): The output of $\langle \cdot \rangle$ has even length. Thus, code words have even length and elements from $H_m$ have odd length. To (ii): The function $\langle \cdot \rangle$ is polynomial-time computable and polynomial-time invertible and for fixed $a$ the length of $a + 4(|x|^a+a+|y|) + |x| + |y|$ is polynomially bounded in $|x|+|y|$. To (iii) and (iv): The number of symbols that are written by $F_a (x)$ and $N_C(y)$ on the oracle tape is bounded by $|x|^a+a + |y| \leq |c(a,x,y)|/4$.
\end{claimproof}

\subparagraph{Influencing $\boldsymbol{\overline{\DPC}}$.} We want to control the membership of some words to $\overline{\DPC}$ relative to an arbitrary oracle. For this, we will use  the language
\begin{align*}
A^O &\coloneqq  \{x \in \Sigma ^* \mid \card{O^{=|x|}} \geq 2\} \cup \{x \in \Sigma ^* \mid  \card{O^{=|x|}} = 0\} \\ 
&= \{x \in \Sigma^* \mid \card{O^{=|x|}} \not = 1\},
\end{align*}
which depends on an oracle $O$. The following claim shows how $A^O$ influences $\overline{\DPC^O}$:
\begin{claim}\label{claim:influencing} Let $O$ be an arbitrary oracle. Then
\begin{enumerate}[(i)]
\item $A^O \in \coDP^O$ via machines $N_{k_1}$ and $N_{k_2}$ with $A^O = L(N_{k_1}^O) \cup \overline{L(N_{k_2}^O)}$ for some $k_1,k_2 \in \N$. The combined runtime of $N_{k_1}$ and $N_{k_2}$ is upper bounded by $p_k$ for $k \coloneqq k_1+k_2$.
\item for all $n \in \N$ and for $k$, $k_1$, $k_2$ from (i) it holds: 
\[\card{O^{=n}} \not = 1 \iff \forall x \in \Sigma^n,\ \codecoDP{k_1}{k_2}{x}{p_k(|x|)} \in \overline{\DPC^O}.\]
\end{enumerate}
\end{claim}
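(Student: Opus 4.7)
For part (i), I would exhibit two explicit nondeterministic oracle machines. The first, decides the set $\{x : \card{O^{=|x|}} \geq 2\}$ by guessing two distinct strings $u,v \in \Sigma^{|x|}$ and accepting iff both queries $u \in O$ and $v \in O$ succeed. The second decides $\{x : \card{O^{=|x|}} \geq 1\}$ by guessing one string $u \in \Sigma^{|x|}$ and accepting iff $u \in O$. Both run in time linear in $|x|$ plus a constant for reading the guess and firing an oracle query. By the standard padding argument on the enumeration $\{N_i\}_{i \in \N}$, there exist indices $k_1, k_2$ such that $N_{k_1}$ is the first machine, $N_{k_2}$ is the second, and their actual runtimes are comfortably dominated by $p_{k_1}$ and $p_{k_2}$ respectively. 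Then $L(N_{k_1}^O) = \{x : \card{O^{=|x|}} \geq 2\}$ and $\overline{L(N_{k_2}^O)} = \{x : \card{O^{=|x|}} = 0\}$, whose union is exactly $A^O$. A routine estimate $p_{k_1}(n) + p_{k_2}(n) = n^{k_1} + n^{k_2} + k_1 + k_2 \leq n^{k_1+k_2} + (k_1+k_2) = p_k(n)$ (for $k = k_1 + k_2$ and $n$ large; the finitely many small cases can be absorbed by choosing $k_1, k_2$ a bit larger) gives the combined runtime bound.

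For part (ii), I would reduce to the definition of $\DPC^O$. Since $N_{k_i}$ already halts within $p_{k_i}(|x|) \leq p_k(|x|)$ steps, we have $L^{p_k(|x|)}(N_{k_i}^O) = L(N_{k_i}^O)$ for $i \in \{1,2\}$. By definition, $\codecoDP{k_1}{k_2}{x}{p_k(|x|)} \in \DPC^O$ iff $x \in L(N_{k_1}^O)$ and $x \notin L(N_{k_2}^O)$, so the codeword lies in $\overline{\DPC^O}$ iff $x \notin L(N_{k_1}^O)$ or $x \in L(N_{k_2}^O)$. By part (i), this last disjunction is precisely $x \in A^O$. Now $A^O$ depends only on $|x|$, so the condition ``$x \in A^O$'' holds for some $x \in \Sigma^n$ iff it holds for all $x \in \Sigma^n$ iff $\card{O^{=n}} \neq 1$. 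Combining gives the required equivalence.

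The proof is essentially bookkeeping, so I do not expect a genuine obstacle. The only point requiring a little care is the simultaneous choice of the indices $k_1, k_2$: I must pick them large enough in the enumeration so that the two explicitly described machines actually appear at those indices \emph{and} so that both the individual runtime bounds $p_{k_1}, p_{k_2}$ and the combined bound $p_k$ with $k = k_1 + k_2$ comfortably hold. Once the indices are fixed, part (ii) is an immediate unfolding of definitions using the fact that $A^O$ is a ``tally-style'' language in the sense that membership depends solely on input length.
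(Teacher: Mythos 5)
Your proof of part (ii) contains a polarity mistake. You derive, correctly, that $\codecoDP{k_1}{k_2}{x}{p_k(|x|)} \in \overline{\DPC^O}$ holds iff $x \notin L^{p_k(|x|)}(N_{k_1}^O)$ or $x \in L^{p_k(|x|)}(N_{k_2}^O)$, which once the time bound captures both machines is $x \notin L(N_{k_1}^O)$ or $x \in L(N_{k_2}^O)$. But part (i), as you set it up, gives $A^O = L(N_{k_1}^O) \cup \overline{L(N_{k_2}^O)}$, i.e., $x \in A^O$ iff $x \in L(N_{k_1}^O)$ or $x \notin L(N_{k_2}^O)$. These two disjunctions have their negations in opposite places, so the step ``this last disjunction is precisely $x \in A^O$'' does not follow. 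With your concrete machine choices, namely $L(N_{k_1}^O) = \{x : \card{O^{=|x|}} \geq 2\}$ and $L(N_{k_2}^O) = \{x : \card{O^{=|x|}} \geq 1\}$, the condition for the codeword to lie in $\overline{\DPC^O}$ reads ``$\card{O^{=|x|}} < 2$ or $\card{O^{=|x|}} \geq 1$,'' which is true for every natural number. Hence every codeword $\codecoDP{k_1}{k_2}{x}{p_k(|x|)}$ would be in $\overline{\DPC^O}$, and the ``$\Leftarrow$'' direction of the biconditional in (ii) fails whenever $\card{O^{=n}} = 1$.

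The source of the trouble is that in the $\DPC$-encoding $\langle 0^i, 0^j, x, 0^p\rangle$, the first index $i$ sits in the positive slot ($x \in L^p(N_i)$) and the second index $j$ in the negative slot ($x \in \overline{L^p(N_j)}$). For the codeword $\codecoDP{k_1}{k_2}{x}{p_k(|x|)}$ to track membership in $A^O$, you must assign $N_{k_1}$ to decide $\{x : \card{O^{=|x|}} \geq 1\}$ and $N_{k_2}$ to decide $\{x : \card{O^{=|x|}} \geq 2\}$, and in part (i) present the $\coDP^O$-decomposition as $A^O = \overline{L(N_{k_1}^O)} \cup L(N_{k_2}^O)$. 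With this assignment, the codeword lies in $\overline{\DPC^O}$ iff $\card{O^{=|x|}} = 0$ or $\card{O^{=|x|}} \geq 2$, i.e., iff $\card{O^{=|x|}} \neq 1$, which is exactly what (ii) asserts once you quantify over $x \in \Sigma^n$ and use that $A^O$-membership depends only on $|x|$. The rest of your bookkeeping — the explicit nondeterministic machines, the index-padding argument to secure $k_1, k_2$, and the derivation of the combined runtime bound $p_k$ with $k = k_1 + k_2$ — is correct and follows the same route as the paper.
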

\begin{claimproof}
To (i): We have $\{x \in \Sigma ^* \mid \card{O^{=|x|}} \geq 2\} \in \NP^O$ and $\overline{\{x \in \Sigma ^* \mid  \card{O^{=|x|}} = 0\}} = \{x \in \Sigma^* \mid \card{O^{=|x|}} \geq 1\} \in \NP^O$ via two polynomial-time machines $N_{k_1}^O$ and $N_{k_2}^O$, which shows $A^O \in \coDP^O$. For $k=k_1+k_2$ their combined runtime is bounded by $n^{k_1}+k_1 + n^{k_2} + k_2 \leq p_k(n)$.

To (ii): Let $n \in \N$ be arbitrary. By the definition of $A^O$, it holds that $\card{O^{=n}} \not = 1$ if and only if $x \in A^O$ for all $x \in \Sigma^{n}$. By (i), there are two machines $N_{k_1}$ and $N_{k_2}$ with $A^O = L(N_{k_1}^O) \cup \overline{L(N_{k_2}^O)}$ running in time $p_k$. Thus, by the definition of $\overline{\DPC^O}$, we obtain that $x \in A^O$ if and only if $\codecoDP{k_1}{k_2}{x}{p_k(|x|)} \in \overline{\DPC^O}$.
\end{claimproof}

For the rest of this chapter we fix $k$, $k_1$ and $k_2$ as the numbers described in Claim~\ref{claim:influencing}.(i). During the construction of the oracle $O$ we can influence the membership of words $\codecoDP{k_1}{k_2}{x}{p_k(|x|)}$ for arbitrary $x \in \Sigma ^*$ to the set $\overline{\DPC^O}$ by adding zero, one, or more words of length $|x|$ to $O$. Later, this will be the key mechanism to argue that our treatment of the tasks $\tau _{b,i}^2$ is possible.

\subparagraph{Witness proof systems.} As previously mentioned, for every proof system $G_b$ for $\overline{\DPC}$ we want to find some other proof system which is not simulated by $G_b$ such that $G_b$ cannot be an optimal proof system. This ``other proof system'' functions as a witness that $G_b$ is not optimal. Let $g^O$ be some arbitrary proof system for $\overline{\DPC ^O}$ for every oracle $O$, which can be computed in time $|x|$ for every input $x$ (observe that such $g$ exists). Then for every $m \in \N$ and all oracles $O$ we define $z_m^O$ as
\begin{equation*}
    z_m^O(x) \coloneqq
    \begin{cases*}
      g^O(x') & if $x = 1x' $ \\
      \codecoDP{k_1}{k_2}{x}{p_k(|x|)}  & if  $x = 0^{n}$ for $n \in H_m$ \\
      g^O(x)        &  else
    \end{cases*}.
\end{equation*}
\begin{claim}\label{claim:witnessfunction} Relative to every oracle $O$ and for every $m \in \N$:
\[ \forall n \in H_m,\ \card{O^{=n}} \not = 1 \Rightarrow z_m^O$ is a proof system for $\overline{\DPC^O}.\] 
\end{claim}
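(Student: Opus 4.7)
The plan is to verify the two defining properties of a proof system for $\overline{\DPC^O}$: polynomial-time computability with oracle $O$, and $\img(z_m^O) = \overline{\DPC^O}$. Fix an oracle $O$ and $m \in \N$ and assume $\card{O^{=n}} \neq 1$ for every $n \in H_m$.

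First I would argue that $z_m^O \in \FP^O$. The case distinction is decidable in polynomial time: testing whether $x$ begins with $1$ is trivial, and testing whether $x = 0^n$ for some $n \in H_m$ reduces to a membership test in $H_m$, which lies in $\P$ as noted in the preliminaries of this section. Each branch produces its output in polynomial time: $g^O$ is polynomial-time (in fact linear-time) by choice, and the code word $\codecoDP{k_1}{k_2}{x}{p_k(|x|)}$ is constructible in polynomial time from $x$ via the pairing function.

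Next I would show $\img(z_m^O) \subseteq \overline{\DPC^O}$ by inspecting the three cases. In the first and third case the output has the form $g^O(\cdot)$, and since $g^O$ is a proof system for $\overline{\DPC^O}$, such outputs lie in $\overline{\DPC^O}$. For the middle case, the assumption $\card{O^{=n}} \neq 1$ for every $n \in H_m$ is exactly the hypothesis required to invoke Claim~\ref{claim:influencing}(ii); applying it to $x = 0^n$ (which has length $n$) yields $\codecoDP{k_1}{k_2}{0^n}{p_k(n)} \in \overline{\DPC^O}$, as needed.

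For the reverse inclusion $\overline{\DPC^O} \subseteq \img(z_m^O)$, I would exploit that $g^O$ already surjects onto $\overline{\DPC^O}$: given $y \in \overline{\DPC^O}$, pick any $x'$ with $g^O(x') = y$ and observe that by the first case $z_m^O(1x') = g^O(x') = y$. This gives surjectivity and completes the proof. There is no substantive obstacle here; the statement is essentially bookkeeping on top of Claim~\ref{claim:influencing}(ii), and the only point to double-check is that the middle-case output genuinely lands in $\overline{\DPC^O}$ under the hypothesis on $O^{=n}$, which is precisely what the influencing claim guarantees.
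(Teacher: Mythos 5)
Your proof is correct and follows essentially the same route as the paper: show $z_m^O \in \FP^O$ by the polynomial-time decidability of the case split ($H_m \in \P$) and the polynomial-time computability of each branch, then reduce everything to Claim~\ref{claim:influencing}(ii) for the middle case while leaning on the proof-system property of $g^O$ for the others. The paper merely phrases the $\img$-equality more compactly by saying that ignoring the second case $z_m^O$ already hits all of $\overline{\DPC^O}$, whereas you spell out both inclusions; the content is the same.
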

\begin{claimproof}
We have $z_m^O \in \FP^O$, because $g^O, p_k, \langle \cdot \rangle \in \FP^O$ and the case distinction can be done in polynomial time since $H_m \in \P$. Ignoring the second case of the definition of $z_m^O$, it holds that $\img (z_m^O) = \overline{\DPC^O}$, because $g^O$ is a proof system for $\overline{\DPC^O}$ and $z_m^O$ outputs the whole image of $g^O$.

Now we look at the second case of the definition. If $\card{O^{=n}} \not = 1$ for all $n \in H_m$, then by Claim~\ref{claim:influencing}.(ii), $\codecoDP{k_1}{k_2}{x}{p_k(|x|)} \in \overline{\DPC ^O}$ with $x=0^n$ for all $n \in H_m$. Thus, $z_m^O$ remains a proof system for $\overline{\DPC ^O}$.
\end{claimproof}

We call the functions $z_m^O$ witness proof systems. Each proof system for $\overline{\DPC^O}$ gets assigned some witness proof system $z_m^O$ during the oracle construction. The intuition behind the definition (and thus behind the whole diagonalization) is as follows: independent of the oracle each function $z_m^O$ has short proofs for $\codecoDP{k_1}{k_2}{0^n}{p_k(n)}$ with $n \in H_m$. Remember that the membership of these words to $\overline{\DPC^O}$ can be influenced by adding words of length $n$ to $O$. There are $2^n$ of such words. Let $G_b^O$ be some proof system for $\overline{\DPC^O}$. Either $G_b^O$ has a long proof and is able to know the number of words of length $n$ in $O$ before outputting $\codecoDP{k_1}{k_2}{0^n}{p_k(n)}$, but is unable to simulate $z_m^O$. Or $G_b^O$ also has a short proof, but cannot be sure that $\codecoDP{k_1}{k_2}{0^n}{p_k(n)} \in \overline{\DPC ^O}$ holds. Intuitively, we can construct $O$ such that $G_b^O$ cannot simulate $z_m^O$ and be a proof system for $\overline{\DPC^O}$ simultaneously. This describes the main idea of the diagonalization. Due to the necessity of code words in the oracle for property P1, realizing this approach gets more complicated.

\subparagraph{Valid Oracles.}
During the construction we successively add requirements that we maintain, which are specified by a partial function belonging to the set $\mathcal{T}$, and that are defined as follows: $t \in \mathcal{T}$ if $t$ partially maps $\tau _a^1, \tau _b^2$, $\tau _{b,i}^2$ to $\N$, and $\dom (t)$ is finite, and $t$ is injective on its support.

A partial oracle $w \in \Sigma ^*$ is called $t$-valid for $t \in \mathcal{T}$, if it satisfies the following requirements:
\begin{enumerate}[(i)]
	\item[V1] If $c(a,x,y) \in w$, then $y \in \overline{\NPC^w}$ is definite.
	
	(Meaning: the oracle contains only code words whose encoding of the ``output information'' is in $\overline{\NPC}$. This ensures that proof systems for $\overline{\NPC}$ can output $y$ when a code word $c(a,x,y)$ is inside the oracle. Note that $F_a^w(x)=y$ is not required.)

	\item[V2] If $t(\tau _a^1)=0$, then there exists an $x$ such that $F_a^w(x) \notin \overline{\NPC^w}$ is definite.
	
	(Meaning: if $t(\tau _a^1)=0$, then for every extension of $w$, $F_a$ is no proof system for $\overline{\NPC}$.)
	
	\item[V3] If $0< t(\tau _a^1) \leq c(a,x,y) < |w|$ and $F_a^w(x)=y$, then $c(a,x,y) \in w$.

	(Meaning: if $t(\tau _a^1)>0$, then from $t(\tau _a^1)$ on, we encode the mappings of $F_a$ into the oracle.)
	
	\item[V4] If $t(\tau _b^2)=0$, then there exists an $x$ such that $G_b^w(x) \notin \overline{\DPC^w}$ is definite.
	
	(Meaning: if $t(\tau _b^2)=0$, then for every extension of $w$, $G_b$ is no proof system for $\overline{\DPC}$.)
	
	\item[V5] If $m \coloneqq t(\tau _b^2)>0$, then for every $n \in H_m$ it holds that $\card{\Sigma ^n \cap w} \not = 1$.
	
	(Meaning: if $t(\tau _b^2)>0$, then for every extension of $w$ ensure that $\img(z_m) = \overline{\DPC}$ such that $z_m$ can be a witness proof system for $G_b$.)
	
	\item[V6] If $\tau _{b,i}^2 \in \dom (t)$, then for $m \coloneqq t(\tau _b^2)$ and every function $\pi$ with $|\pi (z)| \leq p_i(|z|)$ for all $z \in \Sigma^*$ there exists an $x$ such that $z_m^w(x) \not = G_b^w(\pi(x))$ and $z_m^w(x)$ and $G_b^w(\pi(x))$ are definite.
	
	(Meaning: if the task $\tau _{b,i}^2$ was treated, then for every extension of $w$ there is no function whose output length is bounded by $p_i$ that can be used by $G_b$ to simulate $z_m$.)
	
\end{enumerate}
The following observation follows directly from the fact, that extensions $t' \in \mathcal{T}$ of a function $t \in \mathcal{T}$ only add but never remove requirements.
\begin{observation}
Let $t,t' \in \mathcal{T}$ such that $t'$ is an extension of $t$. Whenever $w \in \Sigma ^*$ is $t'$-valid, then $w$ is $t$-valid. In particular, $w$ remains $t$-valid, even if $w$ contains code words $c(\cdot, \cdot, \cdot)$ that $t$ does not require to be in $w$, as long as $V1$ is satisfied.
\end{observation}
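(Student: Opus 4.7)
The plan is to verify each of the six validity conditions V1--V6 individually and observe that each is preserved when one passes from $t'$-validity to $t$-validity. The structural fact behind the argument is monotonicity: every conditional clause among V2--V6 is triggered by a hypothesis of the form ``$t(X) = v$'' or ``$X \in \dom(t)$'', and since $t' \supseteq t$ agrees with $t$ on $\dom(t)$, any such trigger for $t$ is automatically a trigger for $t'$. So from $t'$-validity one obtains exactly the conclusion required by $t$-validity, never a weaker one.

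Concretely, I would first dispatch V1, which refers only to $w$ and $\overline{\NPC^w}$ and does not mention the requirement function at all, so it is preserved trivially. For V2, V4, and V5 the argument has the same shape: if $t(\tau_a^1) = 0$ then $t'(\tau_a^1) = 0$ by extension, and V2 applied to $t'$ supplies the definite witness $x$ with $F_a^w(x) \notin \overline{\NPC^w}$; the cases $t(\tau_b^2) = 0$ for V4 and $t(\tau_b^2) = m > 0$ for V5 are analogous, since in each case $t'$ takes the same value at the relevant symbol. For V3 the hypothesis ``$0 < t(\tau_a^1) \le c(a,x,y) < |w|$ and $F_a^w(x) = y$'' is inherited verbatim by $t'$ because $t'(\tau_a^1) = t(\tau_a^1)$, so the $t'$-valid oracle already contains $c(a,x,y)$ as required. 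Finally V6 uses $\tau_{b,i}^2 \in \dom(t) \subseteq \dom(t')$ together with $t'(\tau_b^2) = t(\tau_b^2) = m$, so the diagonalization guarantee for every $\pi$ with $|\pi(z)| \le p_i(|z|)$ transfers from $t'$ to $t$ unchanged, and the witness proof system $z_m^w$ is the same object.

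For the ``in particular'' clause the observation is that V3 is a one-way implication: it mandates the presence of certain code words but never forbids additional ones. Hence adding code words beyond those forced by $t$ cannot violate V3, and V2, V4, V5, V6 constrain outputs of $F_a$, $G_b$, and $z_m$ and are unaffected by the raw membership of further code words in $w$. The only condition that can actually be broken by a stray code word is V1, which is precisely the caveat in the statement. I do not anticipate any real obstacle; the argument is a systematic monotonicity check, and the sole subtlety is the asymmetry of V3, which permits but does not require supplementary code words.
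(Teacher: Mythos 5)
Your proposal is correct and takes essentially the same approach as the paper, which dismisses the observation in a single sentence by noting that extensions $t' \sqsupseteq t$ only add requirements but never remove them; you simply expand that monotonicity argument into the routine case-by-case check over V1--V6.
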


\subparagraph{Generic oracle extensions.} In the oracle construction and the subsequent proofs we extend partial oracles. Many of these extensions are done in a similar way. The following definition gives a short notation to refer to such extensions of partial oracles.

Let $u$ be a partial oracle, $n \in \N$ such that $u$ defines no word of length $n$, $m \in \N$ is an upper bound for the length of the extension, $t \in \mathcal{T}$ is some requirement function and $X \subseteq \Sigma ^n$. We construct $u_{t, n, m}(X)$ inductively. Basis clauses:
\begin{enumerate}[1]
\item If $z < |u|$, let $z \in u_{t,n,m}(X)$ if and only if $z \in u$.

(Meaning: $u_{t,n,m}(X)$ is an extension of $u$.)
\end{enumerate}
Inductive clauses: Let $z \geq |u|$ and $z \in \Sigma ^{\leq m}$.
\begin{enumerate}[1]\setcounter{enumi}{1}
\item If $z \in \Sigma ^n$, let $z \in u_{t,n,m}(X)$ if and only if $z \in X$.

(Meaning: $X$ defines the membership of words of length $n$, i.e., $u_{t,n,m}(X) \cap \Sigma ^n = X$.)
\item If $z=c(a,x,y)$, and $0 < t(\tau _a^1) \leq z$, and $F_a^{u_{t,n,m}(X)}(x)=y \in \overline{\NPC^{u_{t,n,m}(X)}}$, then let $z \in u_{t,n,m}(X)$.

(Meaning: If we are at a position of some mandatory code word (cf. V3), add the code word to the oracle if it does not violate V1.)

\item Otherwise, $z \notin u_{t,n,m}(X)$.
\end{enumerate}
If $t$, $n$ and $m$ are clear from the context, we may omit them in the subscript of $u(X)$. Intuitively, if $u$ is defined by the oracle construction, $\card{X} \not = 1$ and $u$ is $t$-valid, then $u_{t,n,m}(X)$ is a $t$-valid extension of $u$ up to words of length $m$ containing specific words of length $n$. We will prove this in Claim~\ref{claim:u(X)} after the oracle construction is defined.

\subparagraph{Dependency graph.} We want to construct $t$-valid oracles for some $t \in \mathcal{T}$. As can be seen in the definition of valid oracles, the membership of some code words to valid oracles may depend on computations. These computations can query the oracle and thus depend on words in the oracle. Hence, the membership of code words depends on words in the oracle. When arguing about the membership of code words to the oracle, we may need to know these dependencies. Thus, we capture these dependencies in a graph data structure.

Let $w$ be some oracle. Then $\mathcal{G}^w \coloneqq (V,E)$ is a directed graph defined as
\begin{align*}
V &\coloneqq \Sigma ^*,\\
E &\coloneqq \{(v,v') \mid v = c(a,x,y) \mbox{ for suitable } a,x,y \mbox{ and } v' \in Q(F_a^w(x))\}.
\end{align*}

So the graph considers all words in $\Sigma ^*$, edges only start at code words and edges represent oracle queries from the underlying $\FP$-computation. For a graph $\mathcal{G}^w = (V,E)$, $q \in V$ and $Q \subseteq V$ we define
\begin{align*}
R_{\mathcal{G}^w}(q) &\coloneqq \{v \mid \exists n \in \N, \exists v_1, \dots , v_n \in V,\ v_1=q \land v_n=v \land (v_i,v_{i+1}) \in E \mbox{ for } 1 \leq i < n\}\\
R_{\mathcal{G}^w}(Q) &\coloneqq \bigcup _{q \in Q} R_{\mathcal{G}^w}(q)
\end{align*}
as the set of nodes directly reachable from $q$ (resp., from all nodes in $Q$). For such graphs the following properties hold:
\begin{claim}\label{claim:graph}
For every oracle $w$ and every graph $\mathcal{G}^w=(V,E)$ it holds that:
\begin{enumerate}[(i)]
\item $\sum _{(v,v') \in E} |v'| \leq |v|/4$ for all $v \in V$.
\item $\mathcal{G}^w$ is a directed acyclic graph.
\item $\ell (R_{\mathcal{G}^w}(q)) \leq 2|q|$ for all $q \in V$.
\item $\ell (R_{\mathcal{G}^w}(Q)) \leq 2\ell(Q)$ for all $Q \subseteq V$.
\end{enumerate}
\end{claim}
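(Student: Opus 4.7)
The plan is to derive (i) directly from the definition of $\mathcal{G}^w$ together with Claim~\ref{claim:codeword}.(iii), to deduce (ii) as an immediate consequence of the length-shrinking property implied by (i), and then to obtain (iii) and (iv) by a short induction on word length.

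For (i) I would fix $v \in V$. If $v$ has no outgoing edges the sum is $0$ and the bound is trivial; otherwise, by the definition of $E$, $v$ must be a code word $c(a,x,y)$ whose outgoing edges point precisely to the members of $Q(F_a^w(x))$. Hence
\[\sum_{(v,v') \in E} |v'| \;=\; \ell(Q(F_a^w(x))) \;\leq\; |c(a,x,y)|/4 \;=\; |v|/4\]
by Claim~\ref{claim:codeword}.(iii). For (ii), (i) forces $|v'| \leq |v|/4 < |v|$ for every $(v,v') \in E$ (a vertex with outgoing edges must be a code word, and code words have positive length), so every edge strictly shrinks word length and no cycle can exist.

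For (iii) I would induct on $|v|$. The base case covers vertices with no outgoing edges, where $R_{\mathcal{G}^w}(v) = \{v\}$ and $\ell(R_{\mathcal{G}^w}(v)) = |v| \leq 2|v|$. For the inductive step I would use the inclusion $R_{\mathcal{G}^w}(v) \subseteq \{v\} \cup \bigcup_{(v,v') \in E} R_{\mathcal{G}^w}(v')$ to obtain $\ell(R_{\mathcal{G}^w}(v)) \leq |v| + \sum_{(v,v') \in E} \ell(R_{\mathcal{G}^w}(v'))$. Since each child satisfies $|v'| \leq |v|/4 < |v|$ by (i), the induction hypothesis gives $\ell(R_{\mathcal{G}^w}(v')) \leq 2|v'|$, and combining with (i) yields $\ell(R_{\mathcal{G}^w}(v)) \leq |v| + 2\sum_{(v,v') \in E} |v'| \leq |v| + |v|/2 \leq 2|v|$. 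Part (iv) is then immediate from $R_{\mathcal{G}^w}(Q) = \bigcup_{q \in Q} R_{\mathcal{G}^w}(q)$ by a union bound: $\ell(R_{\mathcal{G}^w}(Q)) \leq \sum_{q \in Q} \ell(R_{\mathcal{G}^w}(q)) \leq \sum_{q \in Q} 2|q| = 2\ell(Q)$.

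No step looks like a serious obstacle; the whole argument is bookkeeping driven by the factor-$1/4$ bound of Claim~\ref{claim:codeword}.(iii). The only subtlety worth mentioning is that $R_{\mathcal{G}^w}(v)$ is a set and a descendant may be reached along several paths, so the induction produces only an overapproximation of its length, but this merely strengthens the inequalities being proved.
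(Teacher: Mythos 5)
Your proof is correct and takes essentially the same route as the paper: the key input is the factor-$1/4$ bound from Claim~\ref{claim:codeword}.(iii), fed into the same recurrence $\ell(R(v)) \le |v| + \sum_{(v,v')\in E}\ell(R(v'))$ and finished by the same union bound for (iv). The only (cosmetic) difference is in part (iii): you do complete induction on the word length $|v|$, while the paper inducts on the \emph{height} of $v$ (the longest path length from $v$ to a sink), which it must first argue is well defined via (i) and (ii); your choice of induction variable is marginally cleaner since $|v|\in\N$ is trivially well founded, but the argument is otherwise identical.
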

\begin{claimproof}
To (i): A node $v \not = c(a,x,y)$ does not have outgoing edges. All directly reachable neighbors from $v = c(a,x,y)$ are in $Q(F_a^w(x))$. Thus, by Claim~\ref{claim:codeword}.(iii), the total length of reachable neighbors is upper bounded by $|v|/4$.

To (ii): From (i) follows that $|v| > |v'|$ holds for every edge $(v,v')$, making directed cycles impossible.

To (iii): We show the claim inductively on the height of nodes in $\mathcal{G}^w$, i.e., on the length of the longest path in $\mathcal{G}^w$ to a node with no outgoing edges. The height of each node in $\mathcal{G}^w$ is a natural number, because by (ii) $\mathcal{G}^w$ is acyclic and by (i) edges are only directed to nodes with smaller lexicographic representation. For nodes $v$ with height $0$ we have $\ell (R_{\mathcal{G}^w}(v)) = \ell (\{v\}) = |v|$. For nodes $v$ with height $> 0$ we get:
\begin{align*}
\ell (R_{\mathcal{G}^w}(v)) \leq |v| + \sum _{(v,v') \in E}\ell (R_{\mathcal{G}^w}(v')) \overset{\text{I.H.}}{\leq} |v| + \sum _{(v,v') \in E} 2|v'| \overset{\text{(i)}}{\leq} |v| + 2 |v|/4 \leq 2|v|
\end{align*}
To (iv): By (iii), we get:
\begin{align*}
\ell (R_{\mathcal{G}^w}(Q)) = \ell (\bigcup _{q\in Q}R_{\mathcal{G}^w}(q)) \leq \sum _{q \in Q} \ell (R_{\mathcal{G}^w}(q)) \overset{\text{(iii)}}{\leq} \sum _{q \in Q} 2|q| = 2 \ell (Q)
\end{align*}
\end{claimproof}

\subparagraph{Oracle construction.} Let $T$ be an enumeration of
\begin{align*}
\{\tau^1_{a} \mid a\in\N\} \cup \{\tau^2_{b} \mid b\in\N\} \cup \{ \tau^2_{b,i} \mid b,i\in\N\}
\end{align*} 
with the property that $\tau _b^2$ appears earlier than $\tau _{b,i}^2$.

We call $(w,t) \in \Sigma ^* \times \mathcal{T}$ a \textit{valid pair}, when $w$ is $t$-valid. The oracle construction inductively defines a sequence $\{(w_s,t_s)\}_{s\in \N}$ of \textit{valid pairs}. The $s$-th term is defined in stage $s$ of the oracle construction.

In each stage, we treat the smallest task in the order specified by $T$, and after treating a task we remove it and possibly other higher tasks from $T$. In the next stage, we continue with the next task not already removed from $T$. (In every stage, there always exists a task not already removed, as we never remove \textit{all} remaining tasks from $T$ in any stage.)

For stage $s=0$ we define $(w_0,t_0)$ with the nowhere defined function $t_0 \in \mathcal{T}$ and the $t_0$-valid oracle $w_0 \coloneqq \varepsilon$.

For stage $s > 0$ we have that $(w_0,t_0),\dots ,(w_{s-1},t_{s-1})$ are valid pairs. With this, we define $(w_s,t_s)$ such that $w_{s-1} \sqsubsetneq w_s$, and $t_{s-1} \sqsubseteq t_s$, and $w_s$ is $t_s$-valid, and the earliest task $\tau$ still in $T$ is treated and removed in some way.

Depending on whether $\tau _a^1$, $\tau _b^2$ or $\tau_{b,i}^2$ is the next task in $T$, the oracle construction is performed. We define the stage $s$ for all three cases:
\medskip

\textbf{Task} $\tau^1_{a}$: Let $t'\coloneqq t_{s-1}\cup \{\tau^1_{a} \mapsto 0\}$. If there exists a $t'$-valid $v\sqsupsetneq w_{s-1}$, then assign $t_s\coloneqq t'$ and $w_s \coloneqq v$.

Otherwise, let $t_s\coloneqq t_{s-1} \cup \{\tau^1_{a} \mapsto m \}$ with $m\in\N^+$ sufficiently large such that $m>|w_s|$ and $m>\max\img(t_{s-1})$. Thus, $t_s$ is injective on its support, and $w_{s-1}$ is $t_s$-valid, because V3 can only be violated for $\tau _a^1$ when $m < |w_{s-1}|$. Let $w_s\coloneqq w_{s-1}c$ with $c\in\{0,1\}$ such that $w_s$ is $t_s$-valid.
We will show in Lemma~\ref{lemma:extension} that such $c$ does indeed exist.

(Meaning: try to ensure that $F_a$ is no proof system for $\overline{\NPC}$ for all valid extensions of $w_{s}$ (cf. V2). If that is impossible, require that from $m$ on the computations of $F_a$ are encoded into the oracle (cf. V3).)
\medskip

\textbf{Task} $\tau^2_{b}$: Let $t'\coloneqq t_{s-1}\cup \{\tau^2_{b} \mapsto 0\}$. If there exists a $t'$-valid $v\sqsupsetneq w_{s-1}$, then assign $t_s\coloneqq t'$ and $w_s \coloneqq v$. Remove all tasks $\tau^2_{b,0}, \tau^2_{b,1}, \dots$ from $T$.

Otherwise, let $t_s\coloneqq t_{s-1} \cup \{\tau^2_{b} \mapsto m\}$ with $m\in\N^+$ sufficiently large such that $w_{s-1}$ defines no word of length $\min H_m$ and $m > \max\img(t_{s-1})$. Thus, $t_s$ is injective on its support, and $w_{s-1}$ is $t_s$-valid, because V5 can only be violated for $\tau _b^2$ when $w_{s-1}$ defines a word of length $\min H_m$. Let $w_s\coloneqq w_{s-1}c$ with $c\in\{0,1\}$ such that $w_s$ is $t_s$-valid. Again, we will show in Lemma~\ref{lemma:extension} that such $c$ does indeed exist.

(Meaning: try to ensure that $G_b$ is no proof system for $\overline{\DPC}$ for all valid extensions of $w_{s}$ (cf. V4). If that is impossible, choose a sufficiently large $m$ and require for the further construction that $\img (z_m) = \overline{\DPC}$ (cf. V5). The treatment of the tasks $\tau^2_{b,0}, \tau^2_{b,1}, \dots$ makes sure that $z_m$ cannot be simulated by $G_b$.)
\medskip

\textbf{Task} $\tau_{b,i}^2$: We have $t_{s-1}(\tau _b^2)=m \in \N^+$ (otherwise $\tau _{b,i}^2$ was removed from $T$). Let $t_s \coloneqq t_{s-1} \cup \{\tau_{b,i}^2 \mapsto 0\}$, and $n \in H_m$ sufficiently large such that $w_{s-1}$ defines no word of length $n$ and $2^{n-1} > 4p_i(n)^b+4b$. Extend $w_{s-1}$ to $w_s \coloneqq {w_{s-1}}_{t_{s-1},n,4p_i(n')^b+4b} (\emptyset)$. We will show in Proposition~\ref{proposition:diagonalization} that $w_s$ is $t_s$-valid. 

(Meaning: we want to prevent $G_b$ to simulate $z_m$. To rule out all simulation functions where the output length is bounded by $p_i$, we make sure, that there is a place in the oracle, that has no word of length $n$ for some $n\in H_m$ and $G_b$ is definite for inputs with length $\leq p_i(n)$.)
\bigskip

The list $t_s$ is always defined to be in $\mathcal{T}$. Remember that the treated task is immediately deleted from $T$. This completes the definition of stage $s$, and thus, the entire sequence $\{(w_s,t_s)\}_{s \in \N}$.

We now show that this construction is indeed possible, by stating and proving the lemma and proposition that was announced in the definition.

\begin{lemma}\label{lemma:extension}
Let $s \in \N$, $(w_0,t_0), \dots ,(w_{s-1},t_{s-1})$ be valid pairs defined by the oracle construction, and $t_s$ defined by the oracle construction, and let $w \in \Sigma ^*$ be a $t_s$-valid oracle with $w \sqsupseteq w_{s-1}$, and $z \coloneqq |w|$ is the next word we need to decide its membership to the oracle, i.e., $z \notin w0$ or $z \in w1$. Then there exists some $c \in \{0,1\}$ such that $wc$ is $t_s$-valid. Specifically:
\begin{enumerate}[(i)]
\item If $z = c(a,x,y)$, and $0 < t_s(\tau _a^1) \leq z$, and $F_a^{w}(x)=y \in \overline{\NPC^w}$, then $w1$ is a $t_s$-valid extension of $w$.

(Meaning: if we are at a position of some mandatory code word (cf. V3), add the code word to the oracle if it does not violate V1. This is equivalent to clause (3) in the definition of the generic oracle extension.)
\item Otherwise $w0$ is a $t_s$-valid extension of $w$.
\end{enumerate}
\end{lemma}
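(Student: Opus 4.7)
The plan is to follow the statement's case split, setting $c=1$ in case (i) and $c=0$ in case (ii), and then checking V1--V6 for the resulting extension $wc$. Most clauses are stable under a one-bit extension: by Claim~\ref{claim:codeword}(iv), every computation $F_{a'}^{w}(x')$ or $N_{\NPC}^{w}(y')$ whose governing code word is at most $|w|$ is already definite under $w$, so the existence statements V2, V4, V6 as well as V1 and V3 applied to previously-indexed code words transfer to $wc$ unchanged. Condition V5 is also preserved in both cases: in case (i) Claim~\ref{claim:codeword}(i) gives $|z|=|c(a,x,y)|\notin H_m$ for any $m$, so the relevant counts are untouched, while in case (ii) the oracle is not modified at all.

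The remaining work concerns the newly relevant instances of V1 and V3 at position $z=|w|$. In case (i) the added code word is $c(a,x,y)$, and V1 at $z$ reduces directly to the hypothesis $y\in\overline{\NPC^{w}}$ combined with the definiteness of $N_{\NPC}^{w}(y)$ from Claim~\ref{claim:codeword}(iv); V3 at $z$ is then immediate because the bit is set to $1$ and $F_a^{w1}(x)=F_a^{w}(x)=y$ by the same definiteness argument.

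In case (ii) the only nontrivial clause is V3 at $z$: setting $c=0$ is inconsistent with V3 exactly when $z=c(a,x,y)$ with $0<t_s(\tau_a^1)\leq z$ and $F_a^{w0}(x)=y$. Unfolding the negation of the (i)-hypothesis shows this configuration can occur only in the single residual subcase in which $F_a^{w}(x)=y$ is definite yet $y\notin\overline{\NPC^{w}}$; excluding this subcase is the main obstacle.

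To rule it out, the plan is to contradict the choice made in the ``otherwise'' branch of the stage $s'\leq s$ that treated $\tau_a^1$. That branch was taken precisely because no proper $(t_{s'-1}\cup\{\tau_a^1\mapsto 0\})$-valid extension of $w_{s'-1}$ existed, and it set $m=t_s(\tau_a^1)$ so large that $|w_{s'-1}|<m$. In the residual configuration, $z\geq t_s(\tau_a^1)=m$ yields $|w|>|w_{s'-1}|$, so $w\sqsupsetneq w_{s'-1}$, while $t_s$-validity of $w$ immediately entails $t_{s'-1}$-validity. Finally, Claim~\ref{claim:codeword}(iv) makes both $F_a^{w}(x)=y$ and $y\in\NPC^{w}$ definite, so the index $x$ witnesses V2 for $\tau_a^1$ with value $0$. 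Hence $w$ is a $(t_{s'-1}\cup\{\tau_a^1\mapsto 0\})$-valid proper extension of $w_{s'-1}$, a contradiction that closes case (ii) and completes the proof of the lemma.
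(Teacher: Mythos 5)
Your proposal is correct and mirrors the paper's proof: both reduce the only nontrivial obligation to the new position $z$, dispose of V1, V2, V4, V5, V6 by definiteness and the parity argument from Claim~\ref{claim:codeword}, and handle the residual V3 subcase ($z=c(a,x,y)$ with $F_a^w(x)=y\notin\overline{\NPC^w}$) by showing $w$ would have been a $(t_{\hat s-1}\cup\{\tau_a^1\mapsto 0\})$-valid proper extension of $w_{\hat s-1}$, contradicting the construction's choice $t_s(\tau_a^1)>0$. The only cosmetic difference is that you verify V1--V6 directly while the paper assumes a violation and derives a contradiction; the underlying case analysis is identical.
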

\begin{proof}
Assume that $wc$ is not $t_s$-valid. Then at least one of V1, V2, V3, V4, V5 and V6 has to be violated with respect to $t_s$. We show that a violation of any of these requirements leads to a contradiction.
\begin{itemize}
\item To V1: Since $w$ is $t_s$-valid, V1 holds for all words up to $z$. V1 can only be violated when $z$ is added to the oracle.  By (i), $z$ is only added when among other things $z=c(a,x,y)$ and $y \in \overline{\NPC^w}$. By Claim~\ref{claim:codeword}.(iv), $y \in \overline{\NPC^w}$ is definite. Thus, $y \in \overline{\NPC^{wc}}$ is definite and V1 is not violated.

\item To V2, V4 and V6: Since  V2, V4 and V6 are statements about the existence of definite computations which exist with respect to $w$, they also exist relative to any extension of $w$. Thus, V2, V4 and V6 cannot be violated.

\item Suppose V3 is violated. Then for some $a,x,y$ we have $0 < t_s(\tau _a^1) \leq c(a,x,y) < |wc|$, and $F_a^{wc}(x)=y$, and $c(a,x,y) \notin wc$. By Claim~\ref{claim:codeword}.(iv), we also have $F_a^w(x)=y$.

If $z \not = c(a,x,y)$, then $c(a,x,y) < |w|$. Thus, $0 < t_s(\tau _a^1) \leq c(a,x,y) < |w|$ and $F_a^{w}(x)=y$, but $c(a,x,y) \notin w$. Hence, V3 is already violated for $w$, which contradicts the $t_s$-validity of $w$.

Otherwise $z = c(a,x,y)$. Suppose $y \notin \overline{\NPC^w}$. Let $\hat{s}$ be the stage that treated $\tau _a^1$ and let $t' \coloneqq t_{\hat{s}-1} \cup \{\tau _a^1 \mapsto 0\}$. Observe $\hat{s} < s$. Since $w$ is $t_s$-valid, $w$ is also $t_{\hat{s}-1}$-valid. We have $F_a^w(x)=y \notin \overline{\NPC^w}$ definite and thus $w$ is $t'$-valid. In stage $\hat{s}$ of the oracle construction, the function $t_{\hat{s}-1}$ would have been extended to $t_{\hat{s}} \coloneqq t'$ with $t'(\tau _a^1) \coloneqq 0$, which is a contradiction to $t_s(\tau _a^1) = t_{\hat{s}}(\tau _a^1) > 0$.

Suppose $y \in \overline{\NPC^w}$. Then all conditions for case (i) are met and $w$ is extended to $w1$, contradicting our assumption of $c(a,x,y) \notin wc$.

\item Suppose V5 is violated. Then there is some $b,n \in \N^+$ with $t_s(\tau _{b}^2)=m$, $n \in H_m$ and $\card{\Sigma ^{n} \cap wc} = 1$. 

If $|z| \not = n$, then $\card{\Sigma^n \cap w} = \card{\Sigma ^n \cap wc } = 1$. This contradicts the $t_s$-validity of $w$.

Otherwise, by Claim~\ref{claim:codeword}.(i), no code word has length $n$. Thus, $z \not = c(a,x,y)$ for any $a,x,y$ and $w$ is extended by case (ii) to $w0$. Hence, $|\Sigma ^{n} \cap w| = |\Sigma ^{n} \cap w0| = |\Sigma ^n \cap wc| = 1$. This contradicts the $t_s$-validity of $w$.
\end{itemize} 
Since no requirement is violated for $wc$ with respect to $t_s$, $wc$ is a $t_s$-valid oracle.
\end{proof}

Lemma~\ref{lemma:extension} shows that the oracle construction is possible for the tasks $\tau _a^1$ and $\tau _b^2$. Now we show that the oracle construction is possible for the tasks $\tau _{b,i}^2$.
\begin{proposition}\label{proposition:diagonalization}
Let $s \in \N ^+$ be the stage that treats the task $\tau_{b,i}^2$, $(w_0,t_0),\dots ,(w_{s-1},t_{s-1})$ be valid pairs defined by the oracle construction, $m \coloneqq t_{s-1}(\tau _b^2)$, $t_s \coloneqq t_{s-1} \cup \{\tau _{b,i}^2 \mapsto 0\}$, and $w_s$ be the resulting oracle from the oracle construction. Then $w_s$ is $t_s$-valid.
\end{proposition}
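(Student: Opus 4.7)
My plan is to verify each of V1--V6 for $w_s$ with respect to $t_s$; since $t_s$ only adds $\tau_{b,i}^2 \mapsto 0$, the only genuinely new requirement is V6 for $\tau_{b,i}^2$, while V1--V5 must be re-checked for the entries already in $t_{s-1}$.

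For V1--V5, I would run an induction on the position $z$ being decided by the generic extension ${w_{s-1}}_{t_{s-1},n,4p_i(n)^b+4b}(\emptyset)$ and reuse the case analysis of Lemma~\ref{lemma:extension}. V1 holds because clause (3) only includes a code word $c(a,x,y)$ after verifying $y \in \overline{\NPC^{u(X)}}$. V2 and V4 inherit from $w_{s-1}$, since the definite witnesses of failure for $F_a$, resp.~$G_b$, as a proof system stay definite in every extension. V3 reduces, exactly as in the V3 bullet of Lemma~\ref{lemma:extension}, to the observation that a missing required code word $c(a,x,y)$ would force $F_a^{w_s}(x)=y \notin \overline{\NPC^{w_s}}$; the prefix of $w_s$ at length $c(a,x,y)$ would then already be a $(t_{\hat{s}-1}\cup\{\tau_a^1\mapsto 0\})$-valid extension of $w_{\hat{s}-1}$ for the stage $\hat{s}$ that treated $\tau_a^1$, contradicting $t_{\hat{s}}(\tau_a^1) > 0$. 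V5 is preserved because $X = \emptyset$ puts no length-$n$ word into $w_s$ via clause (2), code words have even length and by Claim~\ref{claim:codeword}.(i) lie in no $H_{m'}$, and any other $n' \in H_{m'}$ is either inherited from $w_{s-1}$ or set to $0$ by clause (4).

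The crucial work is V6. Fix any $\pi$ with $|\pi(z)|\le p_i(|z|)$ and choose $x \coloneqq 0^n$. Then $z_m^{w_s}(0^n) = \codecoDP{k_1}{k_2}{0^n}{p_k(n)}$ is definite, and by Claim~\ref{claim:influencing}.(ii) together with $\card{w_s \cap \Sigma^n} = 0$ this value lies in $\overline{\DPC^{w_s}}$. The length $4p_i(n)^b+4b$ of $w_s$ exceeds the $\le p_i(n)^b+b$ query depth of $G_b^{w_s}$ on any proof of length $\le p_i(n)$, so $G_b^{w_s}(\pi(0^n))$ is also definite. I plan to conclude V6 by showing $G_b^{w_s}(\pi(0^n)) \neq \codecoDP{k_1}{k_2}{0^n}{p_k(n)}$. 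Assume for contradiction they are equal; then the bound $2^{n-1} > 4p_i(n)^b+4b$ produces some $u \in \Sigma^n$ that is not among the $\le p_i(n)^b+b$ queries of $G_b^{w_s}(\pi(0^n))$.

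The hard part will be to manufacture from $u$ an explicit $(t_{\hat{s}-1}\cup\{\tau_b^2\mapsto 0\})$-valid extension $v \sqsupsetneq w_{\hat{s}-1}$, where $\hat{s}$ is the stage that treated $\tau_b^2$, on which $G_b^v(\pi(0^n))$ still equals $\codecoDP{k_1}{k_2}{0^n}{p_k(n)}$. Because $\card{v \cap \Sigma^n} = 1$ forces this value into $\DPC^v$ definitely by Claim~\ref{claim:influencing}.(ii), such a $v$ would witness V4 for $\tau_b^2$ at stage $\hat{s}$, contradicting the fact that stage $\hat{s}$ had to set $t_{\hat{s}}(\tau_b^2) = m$ precisely because no such extension existed. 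I intend to build $v$ by applying the generic extension from $w_{\hat{s}-1}$ with $X = \{u\}$ up to length $4p_i(n)^b+4b$ and then to verify, using that $u$ is unqueried and that $w_{s-1}$ defines no length-$n$ word, that the queries of $G_b^{w_s}(\pi(0^n))$ receive the same answers under $v$ and under $w_s$. Reconciling the code-word mismatches that may arise because $v$ starts from $w_{\hat{s}-1}$ while $w_s$ starts from $w_{s-1}$ (using partially different $t$-lists) is the main obstacle I foresee.
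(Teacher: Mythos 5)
Your approach for V1--V5 is sound and matches the spirit of what the paper does (the paper packages it into Claim~\ref{claim:u(X)}.(ii) applied to $X=\emptyset$, reusing Lemma~\ref{lemma:extension}). The V6 argument, however, has two genuine gaps, one of which is fatal as described.

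First, choosing $u\in\Sigma^n$ merely outside the \emph{direct} queries of $G_b^{w_s}(\pi(0^n))$ is not enough. The generic extension's clause~(3) decides membership of code words $c(\cdot,\cdot,\cdot)$ of length between $n$ and $\gamma(n)$ by running $F_a$-computations, and those computations may query length-$n$ words. So flipping $u$ into the oracle can flip code words, which can in turn change the answers $G_b$ receives --- even if $G_b$ never queries $u$ directly. The paper avoids this by requiring $z\notin Q_1 \coloneqq R_{\mathcal{G}^{w_s}}\bigl(Q(G_b^{w_s}(x'))\bigr)$, i.e.\ outside the \emph{transitive closure} under the dependency graph, and the bound $\ell(Q_1)\le\gamma(n)/2 < 2^{n-1}$ (via Claim~\ref{claim:graph}.(iv)) is what makes this choice possible. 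Without the dependency-graph closure, the claim ``$G_b^{v}(\pi(0^n)) = G_b^{w_s}(\pi(0^n))$'' does not follow.

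Second, building $v$ as the generic extension of $w_{\hat{s}-1}$ with $t_{\hat{s}-1}$ is the wrong starting point and is precisely what creates the ``code-word mismatch'' obstacle you flag at the end. The paper's resolution is to extend $w_{s-1}$ (not $w_{\hat{s}-1}$) with $X=\{z\}$, and then observe (Claim~\ref{claim:u(X)}.(iii)) that the resulting $u(\{z\})$ is automatically $t_{\hat{s}-1}$-valid, because $m\notin\dom(t_{\hat{s}-1})$ and all the extra code words inserted for tasks treated between $\hat{s}$ and $s$ can only help, never violate, $t_{\hat{s}-1}$-validity. This sidesteps the reconciliation problem you anticipated. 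Even so, the paper still needs a case split (Claim~\ref{claim:contradictingConstruction}.(i) versus (ii)): if adding $z$ causes some V3-mandated code word $c(a,x,y)$ to have $y\notin\overline{\NPC^{u(\{z\})}}$, one must first diagonalize against $\tau_a^1$ using a second fresh word $z'\in\Sigma^n$; only in the complementary case does the argument against $\tau_b^2$ go through cleanly. Your sketch omits this case entirely, and the equal-answers claim you would need (the analogue of Claim~\ref{claim:equalanswers2}) cannot be proved without it, since one of its subcases reduces to exactly this possibility.
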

\begin{proof}
Let us fix $b,i,m$ throughout the proof of the proposition. Also, let $\hat{s}$ be the stage that treated the task $\tau _b^2$, let $\gamma(n') \coloneqq 4(p_i(n')^b+b)$ be a polynomial function, and let $n \in H_m$ be the number from the oracle construction at stage $s$ with $2^{n-1} > \gamma(n)$. In this proof we are considering several extensions ${w_{s-1}}_{\cdot, \cdot, \cdot}(\cdot)$ of $w_{s-1}$. For a clearer notation let $u \coloneqq w_{s-1}$ and $u(X) \coloneqq u_{t_{s-1}, n, \gamma (n)}(X)$ for a set $X \subseteq \Sigma^n$, i.e., the oracle $w_{s-1}$ is abbreviated by $u$ and the parameters $t_{s-1}$, $n$ and $\gamma (n)$ are not mentioned explicitly in extensions $u(X)$ of $u$. Observe that $w_s = u(\emptyset)$.

We start by showing that $w_s$ is a $t_{s-1}$-valid oracle. Then it only remains to show that V6 for $\tau _{b,i}^2$ is satisfied with respect to $w_s$. 

\begin{claim}\label{claim:u(X)} Let $X \subseteq \Sigma ^n$.
\begin{enumerate}[(i)]
\item $u(X)^{<n} = u(Y)^{<n}$ for all $Y \subseteq \Sigma ^n$ and $u(X)^{<n}$ is $t_{s-1}$-valid.
\item If $\card{X} \not = 1$, then $u(X)$ is $t_{s-1}$-valid.
\item $u(X)$ is $t_{\hat{s}-1}$-valid.
\item If $\card{X} = 1$, then with respect to $t_{s-1}$ and $u(X)$, V5 is violated only for $\tau _b^2$ and V1, V2, V4 and V6 are not violated. 
\end{enumerate}
\end{claim}
\begin{claimproof}
To (i): Let $Y \subseteq \Sigma^n$ be arbitrary. The extensions $u(X)$ and $u(Y)$ only start to differ when they define the membership of some word $z$ by clause (2) from their definition. This can happen the first time for words $z$ with $|z| = n$. Hence, $u(X)^{<n} = u(Y)^{<n}$. Since $u$ is $t_{s-1}$-valid and $u(X)^{<n}$ is constructed exactly as described in Lemma~\ref{lemma:extension} with respect to $t_{s-1}$, $u(X)^{<n}$ is $t_{s-1}$-valid.

To (ii): By (i), $u(X)^{<n}$ is $t_{s-1}$-valid. Since $\card{X} \not = 1$, V5 is not violated for $u(X)^{\leq n}$ and thus also $t_{s-1 }$-valid. Then again, the remaining part of $u(X)$ is constructed as described in Lemma~\ref{lemma:extension} with respect to $t_{s-1}$ and thus $u(X)$ is $t_{s-1}$-valid.

To (iii): By (i), $u(X)^{<n}$ is $t_{\hat{s}-1}$-valid. Also $m \notin \dom (t_{\hat{s}-1})$. So adding any number of words of length $n \in H_m$ does not violate the $t_{\hat{s}-1}$-validity of $u(X)$, i.e., $u(X)^{\leq n}$ is $t_{\hat{s}-1}$-valid. Then, the remaining part of $u(X)$ gets extended exactly as described in Lemma~\ref{lemma:extension} with respect to $t_{s-1}$. Using Lemma~\ref{lemma:extension} with respect to $t_{s-1}$ instead of $t_{\hat{s}-1}$ leads to possibly some more code words $c(a,x,y)$ inside $u(X)$, because in clause (i) of Lemma~\ref{lemma:extension} the precondition $0 < t_{s-1}(\tau _a^1) \leq z$ may be satisfied but $0 < t_{\hat{s}-1}(\tau _a^1) \leq z$ may not be satisfied when $\tau_a^1$ is treated between the stages $\hat{s}-1$ and $s$. Only V1 can be violated when adding code words to $u(X)$. Since only code words $c(a,x,y)$ with $y \in \overline{\NPC^{u(X)}}$ are added and the membership is definite by Claim \ref{claim:codeword}.(iv), V1 is not violated with respect to $u(X)$. Thus, $u(X)$ is $t_{\hat{s}-1}$-valid.

To (iv): Since $t_{s-1}(\tau_b^2) =m > 0$ and $\card{\Sigma ^n \cap u(X)} = 1$, V5 is violated for $\tau_b^2$. By (ii), $u(X)^{<n}$ is $t_{s-1}$-valid and there is no $n' \in H_{m'}$ with $n \leq n' \leq \gamma (n)$ for any $m'\not = m$. 
Hence, V5 is satisfied for every other task $\tau _{b'}^2$ with $b' \not = b$. As argued in the previous cases, V1 cannot be violated. The requirements V2, V4 and V6 are about definite computations. Since $u$ is $t_{s-1}$-valid and $u \sqsubseteq u(X)$, these requirements also hold for $u(X)$.
\end{claimproof}

By Claim~\ref{claim:u(X)}.(ii) and $w_s=u(\emptyset)$, $w_s$ is $t_{s-1}$-valid. Since $t_s \coloneqq t_{s-1} \cup \{\tau_{b,i}^2 \mapsto 0\}$, $w_s$ is $t_s$-valid if V6 for $\tau_{b,i}^2$ is not violated.

\subparagraph{Remaining proof sketch.} We want to show that $z_m^{w_s}(0^n) = \codecoDP{k_1}{k_2}{0^n}{{p_k(n)}}$ and the shortest proof of $G_b^{w_s}$ for $\codecoDP{k_1}{k_2}{0^n}{{p_k(n)}}$ has length $>p_i(n)$. This proves $z_m^{w_s}(0^n) \not = G_b^{w_s}(\pi(0^n))$ for every simulation function $\pi$ whose output length is bounded by $p_i$. Hence, V6 is satisfied. We prove this by contradiction and assume that $G_b^{w_s}$ has a proof $x'$ of length at most $p_i(n)$. But then the oracle construction would have chosen $t_{\hat{s}}(\tau _b^2) \coloneqq 0$ in stage $\hat{s}$, i.e., we would have ruled out $G_b$ as a proof system for $\overline{\DPC}$. The main argument here is that we could add some word $z$ of length $n$ to the oracle (thus $\codecoDP{k_1}{k_2}{0^n}{{p_k(n)}} \notin \overline{\DPC^{u(\{z\})}}$) which is not asked by $G_b^{w_s}(x') = \codecoDP{k_1}{k_2}{0^n}{{p_k(n)}}$, since the ``shortness'' of $x'$ prevents $G_b^{w_s}(x')$ from asking many queries. The hard part is that the membership of code words $c(\cdot, \cdot, \cdot)$ of length $n<|c(\cdot, \cdot, \cdot)|\leq \gamma (n)$ may depend on $z$, which could in turn influence $G_b^{w_s}(x')$. So $z$ has to be chosen such that the computation $G_b^{w_s}(x')$ and also the membership of those code words, that influence $G_b^{w_s}(x')$, stay the same. Choosing $z$ is straightforward using $\mathcal{G}^{w_s}$, but proving the claimed properties is quite technical.

\bigskip

First observe that $w_s$ is defined for words up to length $\gamma (n)$ and thus for all inputs $x$ of length $\leq p_i(n)$ the computations $z_m^{w_s}(x)$ and $G_b^{w_s}(x)$ are definite. Also $z_m^{w_s}(0^n) = \codecoDP{k_1}{k_2}{0^n}{{p_k(n)}}$ by the definition of $z_m$ and $n \in H_m$. It remains to show that $G_b^{w_s}$ has no proof of length $\leq p_i(n)$ for $\codecoDP{k_1}{k_2}{0^n}{{p_k(n)}}$.

So assume that $G_b^{w_s}$ has a proof $x'$ for $\codecoDP{k_1}{k_2}{0^n}{{p_k(n)}}$ of length at most $p_i(n)$. Let 
\[Q_1 \coloneqq R_{\mathcal{G}^{w_s}}(Q(G_b^{w_s}(x'))\]
be the set of words the computation $G_b^{w_s}(x')$ (transitively) depends on. Then $\ell (Q(G_b^{w_s}(x'))) \leq \gamma (n)/4$ and $\ell (Q_1) \leq \gamma (n)/2$ follow from the bounded runtime of $G_b^{w_s}(x')$ and Claim~\ref{claim:graph}.(iv). Let $z \in \Sigma^n \setminus Q_1$. Such a word exists, because $2^{n-1} > \gamma (n)$. Under the stated assumption, the following claim holds, giving a contradiction to the result of the oracle construction.
\begin{claim}\label{claim:contradictingConstruction} The oracle construction would not create $t_{s-1}$. Specifically:
\begin{enumerate}[(i)]
\item If there is $a \in \N$ and $x,y \in \Sigma ^*$ with $0 < t_{s-1}(\tau_a^1) \leq c(a,x,y) < |u(\{z\})|$ and $F_a^{u(\{z\})}(x)=y \notin \overline{\NPC^{u(\{z\})}}$, then $t_{s-1}$ with $t_{s-1}(\tau _a^1) >0$ would not have been constructed.

%TODO VERIFIZIEREN
(Meaning: if there is a function $F_a$ for which we have to encode its mappings into the oracle and it does not behave like a proof system for $\overline{\NPC}$ relative to $u(\{z\})$ (i.e., V3 is violated for some $c(a,x,y)$, because clause (3) in the definition of $u(\{z\})$ does not add $c(a,x,y)$), then we could have destroyed $F_a$ as a proof system for $\overline{\NPC}$ when treating the task $\tau_a^1$.)
\item Otherwise, $t_{s-1}$ with $t_{s-1}(\tau _b^2) > 0$ would not have been constructed.

(Meaning: if V3 is satisfied for all code words in $u(\{z\})$, then $u(\{z\})$ is almost $t_{s-1}$-valid, only V5 is violated for $\tau _b^2$ (see Claim \ref{claim:u(X)}.(iv)). Then we could have destroyed $G_b$ as a proof system for $\overline{\DPC}$ when treating the task $\tau_b^2$.)
\end{enumerate}
\end{claim}
\begin{claimproof}
To (i): Let $c(a,x,y)$ be the $\leq _{\text{lex}}$-minimal word that fulfills the requirements of case (i) and let $\hat{a}$ be the stage that treated the task $\tau _a^1$. We want to fix the computation path of $F_a^{u(\{z\})}(x)=y$ and the leftmost accepting path of $N_C^{u(\{z\})}(y)$ such that both also exist relative to some oracle $u(\{z,z'\})$ with $z' \in \Sigma ^n$ and $z' \not = z$. Together with Claim~\ref{claim:u(X)}.(ii), we would get that $u(\{z,z'\})$ is $t_{\hat{a}-1} \cup \{\tau _a^1 \mapsto 0\}$-valid, giving a contradiction to the construction of $t_{s-1}$ with $t_{s-1}(\tau_a^1)>0$. Let
\[Q_2 \coloneqq R_{\mathcal{G}^{u(\{z\})}}\left(Q(F_a^{u(\{z\})}(x)) \cup Q(N_C^{u(\{z\})}(y))\right)\]
be the words that these computation paths (transitively) depend on. According to Claim~\ref{claim:codeword}.(iii), it holds that $F_a(x)$ and $N_C(y)$ combined can write at most $|c(a,x,y)|/2$ symbols on the oracle tape, which gives 
\[\ell(Q_2) \leq 2|c(a,x,y)|/2 \leq \gamma (n).\]
Let $z' \in \Sigma^n \setminus Q_2$ with $z' \not = z$. The word $z'$ exists, because $2^{n}-1 > 2^{n-1} > \gamma (n)$. The following claim shows, that $z'$ is chosen such that no query of $F_a^{\smash{u(\{z\})}}(x)$ and $N_C^{\smash{u(\{z\})}}(y)$ gets answered differently relative to $u(\{z,z'\})$.
\begin{claim}\label{claim:equalanswers}
For all $q \in Q_2$ it holds that $q \in u(\{z\})$ if and only if $q \in u(\{z,z'\})$.
\end{claim}
\begin{claimproof}
Assume the claim does not hold. Let $q \in Q_2$ be the $\leq _{\text{lex}}$-minimal word with $q \in u(\{z\})$ if and only if $q \notin u(\{z,z'\})$. We look at several cases for $q$.

The case $|q| < n$ cannot occur, because by Claim~\ref{claim:u(X)}.(i), $u(\{z\})^{<n} = u(\{z,z'\})^{<n}$. Let $|q| = n$. Then only $q=z'$ is possible. But since $z'$ is chosen as $z' \notin Q_2$, this contradicts the premise $q \in Q_2$.

Let $|q| > n$. Either $q \in u(\{z\})$ or $q \in u(\{z,z'\})$. By the definition of $u(\{z\})$ and $u(\{z,z'\})$, $q$ has to be a code word $q = c(d,e,f)$ with $0<t_{s-1}(\tau_d^1) \leq q < |u(\{z\})|$. The following properties hold:
\begin{enumerate}[(i)]
\item $Q(F_d^{\smash{u(\{z\})}}(e)) \subseteq Q_2$. \hfill (since $q \in Q_2$ and nodes dir.\ reach.\ from $q$ in $\mathcal{G}^{u(\{z\})}$ are in $Q_2$)
\item $\ell (Q(F_d^{\smash{u(\{z\})}}(e))) < |q|$. \hfill (by Claim~\ref{claim:codeword}.(iii))
\item For all $q' \in Q(F_d^{\smash{u(\{z\})}}(e))$ it holds that $q' \in u(\{z\})$ if and only if $q' \in u(\{z,z'\})$.\\
(by (i), (ii) and the $\leq_{\text{lex}}$-minimality of $q$)
\item $F_d^{\smash{u(\{z,z'\})}}(e) = F_d^{\smash{u(\{z\})}}(e)$. \hfill (by (iii))
\end{enumerate}

\centerline{Suppose $q \in u(\{z\})$.}
\noindent Then $F_d^{\smash{u(\{z\})}}(e)=f \in \overline{\NPC^{u(\{z\})}}$ by the definition of $u(\{z\})$. We make another case distinction on whether $f$ is inside or outside of $\overline{\NPC^{u(\{z,z'\})}}$.
\begin{itemize}
\item Suppose $f \notin \overline{\NPC^{u(\{z,z'\})}}$. Let $\hat{d}$ be the stage in the oracle construction that treated the task $d$. It holds that $\hat{d}<s$ and that by Claim~\ref{claim:u(X)}.(ii), $u(\{z,z'\})$ is $t_{s-1}$-valid and thus $t_{\smash{\hat{d}-1}}$-valid. Then $u(\{z,z'\})$ is also $t_{\smash{\hat{d}-1}} \cup \{\tau _d^1 \mapsto 0\}$-valid, because $F_d^{\smash{u(\{z,z'\})}}(e) =f \notin \overline{\NPC^{u(\{z,z'\})}}$ is definite by Claim \ref{claim:codeword}.(iv) and V2 is satisfied for $\tau _d^1$. This is a contradiction to $t_{s-1}(\tau _d^1) > 0$, because in stage $\hat{d}$, the oracle construction would have preferred $t_{\smash{\hat{d}}}(\tau _d^1) \coloneqq 0$.

\item Suppose $f \in \overline{\NPC^{u(\{z,z'\})}}$. By the choice of $q$, (iv) and $F_d^{\smash{u(\{z\})}}(e)=f \in \overline{\NPC^{u(\{z\})}}$, clause (3) in the definition of $u(\{z,z'\})$ gives $q \in u(\{z,z'\})$, a contradiction.
\end{itemize}

\centerline{Suppose $q \notin u(\{z\})$.}
\noindent Then by the definition of $u(\{z\})$, we either have $F_d^{\smash{u(\{z\})}}(e) \not = f$ or $F_d^{\smash{u(\{z\})}}(e) = f \notin \overline{\NPC^{u(\{z\})}}$.
\begin{itemize}
\item Suppose $F_d^{\smash{u(\{z\})}}(e) \not = f$. By (iv), $F_d^{\smash{u(\{z,z'\})}}(e) \not = f$. Hence, $q \notin u(\{z,z'\})$ by definition, a contradiction.

\item Suppose $F_d^{\smash{u(\{z\})}}(e) = f \notin \overline{\NPC^{u(\{z\})}}$. Then $q$ fulfills all requirements of case (i) in Claim~\ref{claim:contradictingConstruction}. Since $q \in Q_2$ and thus $q <_{\text{lex}} c(a,x,y)$, $q$ would be the $\leq _{\text{lex}}$-minimal word chosen in the beginning of the proof of Claim~\ref{claim:contradictingConstruction}, a contradiction.
\end{itemize}
\end{claimproof}

Claim \ref{claim:equalanswers} gives $F_a^{\smash{u(\{z,z'\})}}(x) = F_a^{\smash{u(\{z\})}}(x) = y$ and $y \notin \overline{\NPC^{u(\{z,z'\})}}$, because all oracle queries $Q_2$ of the fixed computation paths are answered the same relative to $u(\{z,z'\})$. Recall that $\hat{a}$ is the stage that treated the task $a$. By Claim~\ref{claim:u(X)}.(ii), $u(\{z,z'\})$ is $t_{s-1}$-valid and since $\hat{a} < s$, also $t_{\hat{a}-1}$-valid. Then $u(\{z,z'\})$ is $t_{\hat{a}-1} \cup \{\tau _a^1 \mapsto 0\}$-valid, because $F_a^{\smash{u(\{z,z'\})}}(x) = y \notin \overline{\NPC^{u(\{z,z'\})}}$ is definite and thus V2 is satisfied for $\tau_a^1$. This is a contradiction to $t_{s-1}(\tau _a^1) > 0$, because in stage $\hat{a}$, the oracle construction would have preferred $t_{\hat{a}}(\tau_a^1) \coloneqq 0$. 
\medskip

To (ii): Similarly to (i), we first show that the oracle queries of $G_b^{w_s}(x')$ and $G_b^{u(\{z\})}(x')$ are answered equally. Then we will use this to derive a contradiction to the construction of $t_{s-1}$. Recall that $w_s = u(\emptyset)$.
\begin{claim}\label{claim:equalanswers2}
For all $q \in Q_1$ it holds that $q \in w_s$ if and only if $q \in u(\{z\})$.
\end{claim}
\begin{claimproof}
Assume the claim does not hold. Let $q \in Q_1$ be the $\leq _{\text{lex}}$-minimal word with $q \in w_s$ if and only if $q \notin u(\{z\})$. We look at several cases for $q$.

The case $|q| < n$ cannot occur, because by Claim~\ref{claim:u(X)}.(i), $w_s^{<n} = u(\{z\})^{<n}$. Let $|q| = n$. Then only $q = z$ is possible. But since $z$ is chosen as $z \notin Q_1$, this contradicts the premise $q \in Q_1$.

Let $|q| > n$. Either $q \in w_s$ or $q \in u(\{z\})$. By the definition of $w_s$ and $u(\{z\})$, $q$ has to be a code word $q = c(d,e,f)$ with $0<t_{s-1}(\tau _d^1) \leq q < |w_s|$. Then by essentially the same arguments as presented in case (i) the following holds:
\begin{enumerate}[(i)]
\item $Q(F_d^{w_s}(e)) \subseteq Q_1$. \hfill (since $q \in Q_1$ and nodes dir.\ reachable from $q$ in $\mathcal{G}^{w_s}$ are in $Q_1$)
\item $\ell (Q(F_d^{w_s}(e))) < |q|$. \hfill (by Claim~\ref{claim:codeword}.(iii))
\item For all $q' \in Q(F_d^{w_s}(e))$ it holds that $q' \in w_s$ if and only if $q' \in u(\{z\})$.\\
(by (i), (ii) and the $\leq _{\text{lex}}$-minimality of $q$)
\item $F_d^{u(\{z\})}(e)=F_d^{w_s}(e)$. \hfill (by (iii))
\end{enumerate}

\centerline{Suppose $q \in w_s$.}
\noindent Then $F_d^{w_s}(e)=f \in \overline{\NPC^{w_s}}$ by the definition of $w_s$. We make another case distinction on whether $f$ is inside or outside of $\overline{\NPC^{u(\{z\})}}$.
\begin{itemize}
\item Suppose $f \notin \overline{\NPC^{u(\{z\})}}$. Then the code word $c(d,e,f)$ satisfies the requirements of case (i) in Claim~\ref{claim:contradictingConstruction}, which is a contradiction to being in case (ii).

\item Suppose $f \in \overline{\NPC^{u(\{z\})}}$. By the choice of $q$, (iv) and $F_d^{w_s}(e)=f \in \overline{\NPC^{w_s}}$, clause (3) in the definition of $u(\{z\})$ gives $q \in u(\{z\})$, a contradiction.
\end{itemize}

\centerline{Suppose $q \notin w_s$.}
\noindent Then by the definition of $w_s$, we either have $F_d^{w_s}(e) \not = f$ or $F_d^{w_s}(e) = f \notin \overline{\NPC^{w_s}}$.
\begin{itemize}
\item Suppose $F_d^{w_s}(e) \not = f$. By (iv), $F_d^{u(\{z\})}(e) \not = f$. Hence, $q \notin u(\{z\})$ by definition, a contradiction.

\item Suppose $F_d^{w_s}(e) = f \notin \overline{\NPC^{w_s}}$. Let $\hat{d}$ be the stage in the oracle construction that treated the task $d$. It holds that $\hat{d} < s$ and thus, $w_s$ is $t_{\smash{\hat{d}-1}}$-valid. Then $w_s$ is also $t_{\smash{\hat{d}-1}} \cup \{\tau _d^1 \mapsto 0\}$-valid, because $F_d^{w_s}(e)=f \notin \overline{\NPC^{w_s}}$ is definite and V2 is satisfied for $\tau _d^1$. This is a contradiction to $t_{s-1}(\tau _d^1) > 0$, because in stage $\hat{d}$, the oracle construction would have preferred $t_{\smash{\hat{d}}}(\tau _d^1) \coloneqq 0$.
\end{itemize}
\end{claimproof}

Claim~\ref{claim:equalanswers2} gives $G_b^{\smash{u(\{z\})}}(x') = G_b^{w_s}(x') = \codecoDP{k_1}{k_2}{0^n}{{p_k(n)}}$, because all oracle queries $Q_1$ of the computation $G_b^{w_s}(x')$ are answered the same relative to $u(\{z\})$. With this we can derive a contradiction to the construction of $t_{s-1}$. By Claim~\ref{claim:u(X)}.(iii), $u(\{z\})$ is $t_{\hat{s}-1}$-valid. By Claim~\ref{claim:influencing}.(ii), $\codecoDP{k_1}{k_2}{0^n}{{p_k(n)}} \notin \overline{\DPC^{u(\{z\})}}$. Then $u(\{z\})$ is $t_{\hat{s}-1} \cup \{\tau _b^2 \mapsto 0\}$-valid, because $G_b^{\smash{u(\{z\})}}(x') = \codecoDP{k_1}{k_2}{0^n}{{p_k(n)}} \notin \overline{\DPC^{u(\{z\})}}$ is definite and thus V4 is satisfied for $\tau _b^2$. This is a contradiction to $t_{s-1}(\tau _b^2)>0$.
\end{claimproof}

Claim~\ref{claim:contradictingConstruction} shows that the assumption that $G_b^{w_s}$ has a proof $x'$ for $\codecoDP{k_1}{k_2}{0^n}{{p_k(n)}}$ leads to a different requirements list $t_{s-1}$ from the one received by the oracle construction, which is a contradiction. This proves the proposition.
\end{proof}

We have completed the proofs showing that the oracle construction can be performed as desired. The following theorem confirms the desired properties of $O \coloneqq \bigcup _{i\in \N}w_i$. Remember that $|w_0| < |w_1| < \dots $ is unbounded, hence for any $z$ there is a sufficiently large $s$ such that $|w_s|>z$. Remember that $(w_s,t_s)$ is a valid pair for all $s \in \N$ and all tasks initially in $T$ were treated or removed in some stage.
\begin{theorem}\label{theorem:oracleconstruction}
Relative to $O \coloneqq \bigcup _{i \in \N}w_i$, the following holds:
\begin{enumerate}[(i)]
\item $\overline{\NPC^O}$ has p-optimal proof systems.
\item $\overline{\DPC^O}$ has no optimal proof systems.
\end{enumerate}
\end{theorem}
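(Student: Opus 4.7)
Both statements follow from the invariants V1--V6 of the construction, applied to $O = \bigcup_s w_s$. The main obstacle is bridging the partial-oracle language of the construction with the actual oracle $O$; this succeeds precisely because V2, V4, and V6 are formulated in terms of \emph{definite} computations, whose answers are preserved under every extension of $w_s$ and hence in particular relative to $O$.

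For (i), the plan is to construct an explicit p-optimal proof system $h^O$ for $\overline{\NPC^O}$. Fix any proof system $h_0^O \in \FP^O$ for $\overline{\NPC^O}$; one exists because $\overline{\NPC^O}$ is nonempty and decidable relative to $O$. Set $h^O(\langle 0, w \rangle) \coloneqq h_0^O(w)$, and on an input of the form $\langle 1, v \rangle$ let $h^O(\langle 1, v \rangle) \coloneqq y$ if $v = c(a,x,y)$ is a code word and $v \in O$, and a default element of $\overline{\NPC^O}$ otherwise. By V1 the second case outputs only elements of $\overline{\NPC^O}$, and $h_0^O$ witnesses $\overline{\NPC^O} \subseteq \img(h^O)$. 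To show p-optimality, consider an arbitrary proof system $F_a$ for $\overline{\NPC^O}$ and the stage $s$ that treated $\tau_a^1$. The choice $t_s(\tau_a^1) = 0$ would, by V2 and definiteness, produce some $x$ with $F_a^O(x) \notin \overline{\NPC^O}$, contradicting the assumption on $F_a$; hence $m \coloneqq t_s(\tau_a^1) > 0$, and V3 applied to sufficiently large $w_t \sqsubseteq O$ yields $c(a,x,F_a^O(x)) \in O$ for every $x$ with $c(a,x,F_a^O(x)) \geq m$. For such $x$, the map $\pi(x) \coloneqq \langle 1, c(a,x,F_a^O(x)) \rangle$ is a polynomial-time p-simulation, and the finitely many exceptional $x$ are absorbed into a hard-coded lookup table inside $\pi$.

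For (ii), suppose for contradiction that $G_b$ is an optimal proof system for $\overline{\DPC^O}$, and consider the stage $\hat s$ that treated $\tau_b^2$. As in (i), $t_{\hat s}(\tau_b^2) = 0$ would contradict the assumption via V4, so $m \coloneqq t_{\hat s}(\tau_b^2) > 0$. Then V5 persists in all later stages, so $\card{\Sigma^n \cap O} \neq 1$ for every $n \in H_m$, and Claim~\ref{claim:witnessfunction} makes $z_m^O$ a proof system for $\overline{\DPC^O}$. By optimality, there exist a polynomial $p$ and a total $\pi$ with $|\pi(x)| \leq p(|x|)$ and $z_m^O(x) = G_b^O(\pi(x))$ for all $x$. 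Pick $i$ with $p_i \geq p$ pointwise. Because $t_{\hat s}(\tau_b^2) > 0$, the tasks $\tau_{b,j}^2$ for $j \in \N$ are never removed from $T$, so $\tau_{b,i}^2$ is treated at some stage $s > \hat s$, after which V6 yields an $x$ with $z_m^{w_s}(x) \neq G_b^{w_s}(\pi(x))$ and both computations definite; definiteness then passes to $O$, contradicting $z_m^O = G_b^O \circ \pi$.
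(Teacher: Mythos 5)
Your proposal is correct and follows essentially the same approach as the paper's proof: for (i) you build $h$ from code words (using V1 for soundness, V2 to force $t(\tau_a^1)>0$, V3 plus definiteness to show the needed code words land in $O$, and a finite lookup table for the small exceptions, which is in effect what the paper does via $h'^{-1}$ combined with the finiteness of $\{x : c(a,x,f(x)) < m\}$); for (ii) you derive $m>0$ from V4, invoke V5's persistence and Claim~\ref{claim:witnessfunction} to obtain $z_m^O$, and then use V6 on the stage treating $\tau_{b,i}^2$ to defeat any simulation function bounded by $p_i$. The differences (pairing-based encoding of $h$ versus bit-prefix encoding, explicit lookup table versus $h'^{-1}$) are cosmetic.
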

\begin{proof}
To (i): Let $h'$ be an arbitrary proof system for $\overline{\NPC^O}$. Then the following proof system $h$ is a p-optimal proof system for $\overline{\NPC^O}$:
\begin{equation*}
   h(z) \coloneqq
    \begin{cases*}
      y & if $z = 0c(a,x,y)$ and $c(a,x,y) \in O$\\
      h'(z')  & if $z=1z'$\\
      h'(z)        &  else
    \end{cases*}
\end{equation*}
\begin{claim}
The function $h$ is a proof system for $\overline{\NPC^O}$.
\end{claim}
\begin{claimproof}
It holds that $h \in \FP^O$, because $h' \in \FP^O$ and the distinction of the cases can be done in polynomial time, because identifying a word as a code word can be done in polynomial time. 

Since $h'$ is a proof system for $\overline{\NPC^O}$ and $\img (h) \supseteq \img (h')$, we have $\img(h) \supseteq \overline{\NPC^O}$. The second and the third line in the definition of $h$ provide only elements inside $\overline{\NPC^O}$. If $c(a,x,y) \in O$, then $c(a,x,y) \in w_s$ for some $s \in \N$ with $c(a,x,y) < |w_s|$, because $O$ is an extension of $w_s$. Recall that $w_s$ is $t_s$-valid. By V1, $y \in \overline{\NPC^{w_s}}$ is definite and thus $y \in \overline{\NPC^O}$. Hence, the first line also outputs only elements inside $\overline{\NPC^O}$ which shows $\img(h) \subseteq \overline{\NPC^O}$.
\end{claimproof}

Now it remains to show that $h$ is p-optimal. Let $f$ be some arbitrary proof system for $\overline{\NPC^O}$. Then there is some Turing transducer $F_a^O$ which computes $f$. Let $\hat{a}$ be the stage that treated the task $\tau _a^1$. Then $t_{\hat{a}}(\tau _a^1) = m > 0$, because if $t_{\hat{a}}(\tau _a^1) = 0$, then by V2, $F_a^O$ would be no proof system for $\overline{\NPC^O}$. Let $h'^{-1}$ be the inverse function of $h'$. Consider the following function:
\begin{equation*}
    \pi(x) \coloneqq
    \begin{cases*}
      0c(a,x,f(x)) & if $m \leq c(a,x,f(x))$ \\
      1h'^{-1}(f(x))  & else
    \end{cases*}.
\end{equation*}
By $f(x) \in \img(h')$, $h'^{-1}(f(x))$ is always defined. By Claim~\ref{claim:codeword}.(ii), and the set $\{x \in \Sigma^* \mid c(a,x,f(x))<m\}$ being finite, and $f \in \FP^O$, we have $\pi \in \FP^O$. We now show $h(\pi(x))=f(x)$ for all $x \in \Sigma^*$. If $c(a,x,f(x)) < m$, then
\[h(\pi(x)) = h(1h'^{-1}(f(x))) = h'(h'^{-1}(f(x))) = f(x).\]
If $c(a,x,f(x)) \geq m$, then $c(a,x,f(x)) < |w_s|$ for some $s > \hat{a}$. Recall that $w_s$ is $t_s$-valid. By Claim~\ref{claim:codeword}.(iv), $F_a^{w_s}(x)=y$ is definite, thus $F_a^{w_s}(x)=F_a^O(x)=f(x)$. By V3, $F_a^{w_s}(x)=f(x)$ implies $c(a,x,f(x)) \in w_s$. Since $O$ is an extension of $w_s$, we have $c(a,x,f(x)) \in O$. This gives $h(\pi(x)) = h(0c(a,x,f(x))) = f(x)$.

This shows $f \leq^{p,O} h$ with $\pi$ as simulation function. Since $f$ was an arbitrary proof system for $\overline{\NPC^O}$, $h$ is a p-optimal proof system for $\overline{\NPC^O}$.
\medskip

To (ii): Assume otherwise that there is an optimal proof system $g$ for $\overline{\DPC^O}$. Let $G_b^O$ be the Turing transducer computing $g$. Let $s$ be the stage in the oracle construction that treated the task $\tau _b^2$ and let $m \coloneqq t_s(\tau _b^2)$. 

If $m = 0$, then V4 states that for some $x \in \Sigma ^*$, we have $G_b^{w_{s}}(x) \notin \overline{\DPC^{w_{s}}}$ definitely. Since $O$ is an extension of $w_s$, this also holds relative to $O$. Thus, $G_b^O$ is no proof system for $\overline{\DPC^O}$, a contradiction.

If $m > 0$, then consider $z_m^O$. V5 holds for $O$, because otherwise $w_{s'}$ would not be $t_{s'}$-valid for a sufficiently large $s' \in \N$. Thus, $\card{O^{=n}} \not = 1$ holds for all $n \in H_m$. Together with Claim~\ref{claim:witnessfunction}, we get that $z_m^O$ is a proof system for $\overline{\DPC^O}$. Let $G_b^O$ simulate $z_m^O$ via $\pi$. Then there is some polynomial $p_i$ bounding the output length of $\pi$, i.e., $|\pi(x)| \leq p_i(|x|)$ for all $x \in \Sigma ^*$. Let $\hat{s}$ be the stage in the oracle construction that treated the task $\tau _{b,i}^2$ and thus $\tau _{b,i}^2 \in \dom (t_{\hat{s}})$. Such a stage exists, because $t_s(\tau _b^2) = m > 0$. Consider the requirement V6 for the task $\tau _{b,i}^2$. From this requirement follows some $x \in \Sigma^*$ such that $z_m^{w_{\hat{s}}}(x) \not = G_b^{w_{\hat{s}}}(\pi (x))$ with $z_m^{w_{\hat{s}}}$ and $G_b^{w_{\hat{s}}}(\pi (x))$ being definite. Since $O$ is an extension of $w_{\hat{s}}$, this also holds relative to $O$. Thus, $z_m^O$ is not simulated by $G_b^O$ via $\pi$, a contradiction.
\end{proof}
\begin{corollary}\label{cor:dpnottocodp}
Relative to $O \coloneqq \bigcup _{i\in \N} w_i$, the following holds:
\begin{itemize}
\item $\coNP$ has p-optimal proof systems.
\item $\coDP$ has no optimal proof systems.
\item $\DP$ has optimal proof systems.
\item A translation of optimal proof systems from $\DP$ to $\coDP$ cannot be proved with relativizable proof techniques.
\end{itemize}
\end{corollary}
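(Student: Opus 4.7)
The plan is to derive each of the four bullets from Theorem~\ref{theorem:oracleconstruction} together with the closure propositions of Section~\ref{sec:prelim}, all relativized to $O$. Since Theorem~\ref{theorem:oracleconstruction} has already done the heavy lifting, each bullet reduces to a short invocation of a general transfer or closure result, and I do not expect a real obstacle; the entire argument is essentially a bookkeeping step.

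For the $\coNP$ bullet, I would observe that $\overline{\NPC^O}$ is $\leqmp$-complete for $\coNP^O$. Theorem~\ref{theorem:oracleconstruction}.(i) supplies a p-optimal proof system for $\overline{\NPC^O}$, and the relativized Corollary~\ref{cor:closure reducibility} then transfers p-optimality to every nonempty set of $\coNP^O$. For the $\coDP$ bullet, the observation is simply that $\overline{\DPC^O}\in\coDP^O$, so Theorem~\ref{theorem:oracleconstruction}.(ii) already exhibits a concrete set in $\coDP^O$ that has no optimal proof system.

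For the $\DP$ bullet, the plan is to combine Messner's relativizable result that every nonempty set in $\NP$ has an (even p-)optimal proof system (which gives this for $\NP^O$) with the statement for $\coNP^O$ just established. Since $\DP = \NP \land \coNP$, a single application of the relativized Corollary~\ref{cor:closure logic and} then yields optimal proof systems for every nonempty set in $\DP^O$.

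Finally, for the non-translation bullet, I would argue by contradiction: if the implication ``$\DP$ has optimal proof systems $\Rightarrow \coDP$ has optimal proof systems'' admitted a relativizable proof, then it would hold relative to $O$; combined with the third bullet this would force $\coDP^O$ to have optimal proof systems, contradicting the second bullet. The step most worth double-checking is the invocation of Messner's result in the third bullet, where one should verify that the construction indeed relativizes; every other step is immediate from the propositions already proved.
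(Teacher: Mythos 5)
Your proof is correct and is exactly the intended chain of reasoning (the paper states the corollary without proof, so it is implicitly regarded as immediate from Theorem~\ref{theorem:oracleconstruction} plus the closure results of Section~\ref{sec:prelim}). One small inaccuracy worth flagging: you assert that Messner's result gives every nonempty set in $\NP$ an ``(even p-)optimal'' proof system. Messner's theorem (as the paper states it in the introduction) gives p-optimal proof systems for all nonempty sets in $\P$ but only \emph{optimal} proof systems for all nonempty sets in $\NP$; whether $\SAT$ has a p-optimal proof system is precisely one of the open conjectures the paper is about. This does not break your argument, since for the third bullet you only need optimality and you correctly conclude ``optimal proof systems for every nonempty set in $\DP^O$'' via Corollary~\ref{cor:closure logic and}; but the parenthetical should be removed. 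Everything else, including the observation that the canonical verification-based proof system for $\NP$ sets relativizes, the use of $\leqmp$-completeness of $\overline{\NPC^O}$ for $\coNP^O$ with Corollary~\ref{cor:closure reducibility}, the membership $\overline{\DPC^O}\in\coDP^O$, and the final contradiction argument, is exactly right.
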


\section{Conclusion}
We summarize all results to obtain the equivalence classes from Figure \ref{fig:results}. First observe that (p-)optimal proof systems always translate from a class $\mathcal{C}$ to $\mathcal{D}$ when $\mathcal{C} \subseteq \mathcal{D}$ (respective solid arrows are omitted in Figure \ref{fig:conclusion}). We start with the equivalence classes for p-optimal proof systems (see Figure \ref{fig:conclusion}, left, solid arrows). P-optimal proof systems translate as follows: 
\begin{itemize}
\item from $\NP \cup \coNP$ to $\DP$ by $\NP \cup \coNP \supseteq \NP$, $\NP \cup \coNP \supseteq \coNP$, $\NP \land \coNP = \DP$, and Corollary \ref{cor:closure logic and} following from Köbler, Messner, and Torán \cite{kmt03}.
\item from $\DP$ to $\coDP$ by Corollary \ref{cor:dptocodp}.
\item from $\coBH_k$ to $\coBH_{k+1}$ for $k \geq 2$ by Corollary \ref{cor:closure logic and} following from Köbler, Messner, and Torán \cite{kmt03} and the following inclusions:
\begin{align*}
    \mathrm{coBH}_k \land \mathrm{coBH}_k \supseteq&\ \mathrm{coDP} \land
    \mathrm{coBH}_k = (\mathrm{NP} \lor \mathrm{coNP}) \land \mathrm{coBH}_k\\
     \supseteq&\ \begin{cases}\mathrm{coNP}\land
        \mathrm{coBH}_{k} = \mathrm{coBH}_{k+1} & \text{if } k \text{ is even}\\
    \mathrm{NP} \lor \mathrm{coBH}_{k} = \mathrm{coBH}_{k+1}
    &\text{else}\end{cases}
\end{align*}
\end{itemize}
Next, we derive the equivalence classes for optimal proof systems (see Figure \ref{fig:conclusion}, right, solid arrows). Optimal proof systems translate as follows:
\begin{itemize}
\item from $\coNP$ to $\NP \cup \coNP$ by the fact that $\NP$ has optimal proof systems.
\item from $\NP \cup \coNP$ to $\DP$ by Corollary \ref{cor:closure logic and} following from Köbler, Messner, and Torán \cite{kmt03}.
\item from $\coBH_k$ to $\coBH_{k+1}$ for $k \geq 2$ by the same argument used for p-optimal proof systems.
\end{itemize}
The resulting equivalence classes for (p-)optimal proof systems are different relative to oracles $A$, $B$, $O$ with the following properties (see Figure \ref{fig:conclusion}, left and right, dashed arrows):
\begin{itemize}
\item $\NP^A$ has p-optimal proof systems and $\coNP^A$ has no optimal proof systems \cite{kha22}.
\item $\coNP^B$ has p-optimal proof systems and $\NP^B$ has no p-optimal proof systems \cite{kha22}.
\item $\coNP^O$ has p-optimal proof systems and $\coDP^O$ has no optimal proof systems (Cor.\ \ref{cor:dpnottocodp}).
\end{itemize}
\ifblock{
\begin{figure}[h]
    \begin{centering}
        \includegraphics[width=0.8\textwidth]{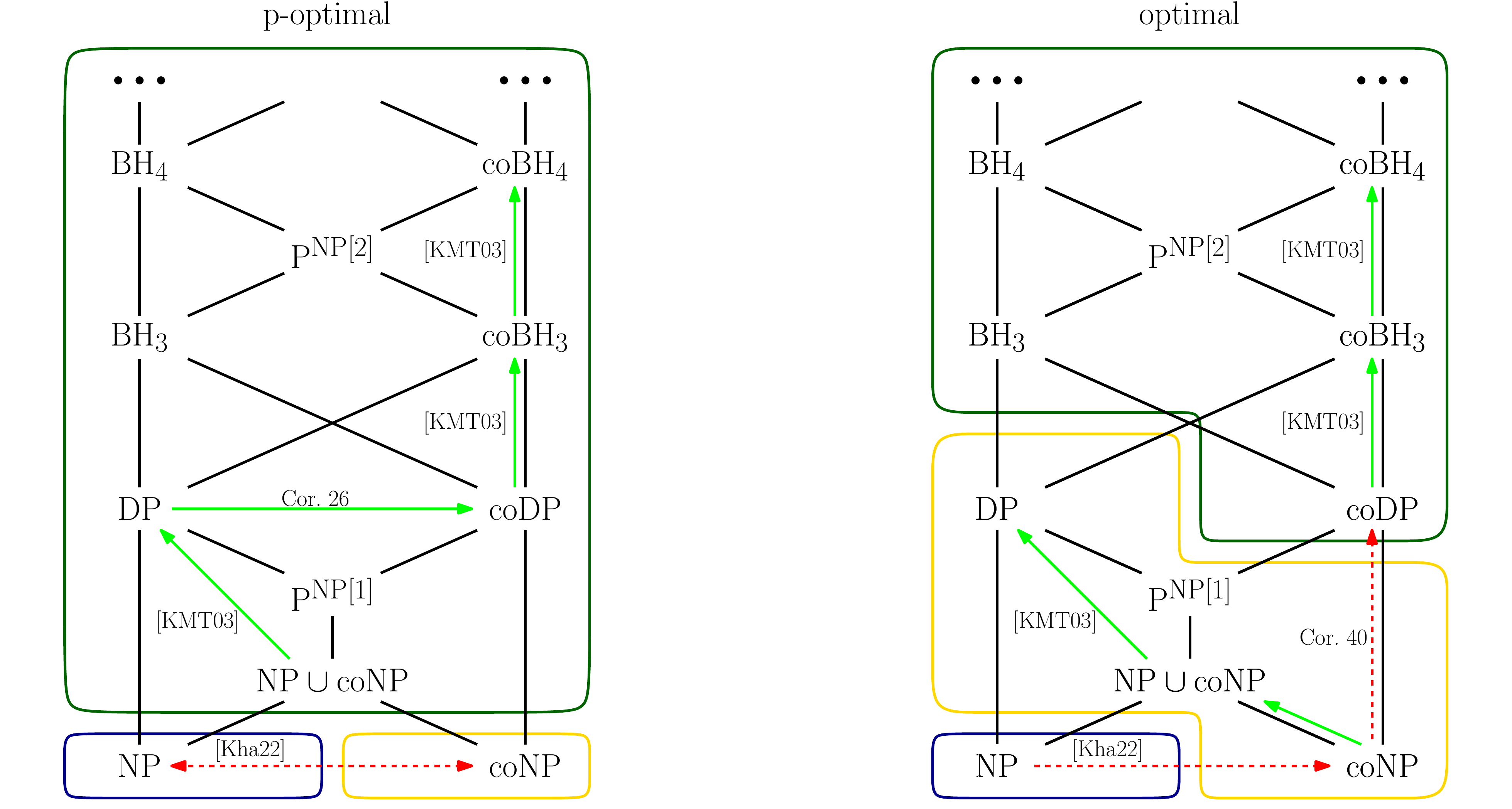}
          \caption{Equivalence classes for p-optimal proof systems (left) and optimal proof systems (right)
          in the Boolean hierarchy over
          $\NP$ and the bounded query hierarchy over $\NP$. Green solid arrows from
          $A$ to $B$ mean that (p-)optimal proof systems for $A$ imply
          (p-)optimal proof systems for $B$. Red dashed arrows from $A$ to $B$ mean
          that there exists an oracle $Q$ relative to which $A^Q$ has (p-)optimal proof
          systems and $B^Q$ has no (p-)optimal proof systems. Note that green solid arrows pointing downwards are omitted, since those are trivial and only the minimum number of required red dashed arrows to separate all equivalence classes are drawn.}
            \label{fig:conclusion}
    \end{centering}
\end{figure}}{
\begin{figure}[h]
    \begin{centering}
        \includegraphics[width=0.8\textwidth]{ipe/conclusion-MFCS.pdf}
          \caption{Equivalence classes for p-optimal proof systems (left) and optimal proof systems (right)
          in the Boolean hierarchy over
          $\NP$ and the bounded query hierarchy over $\NP$. Green solid arrows from
          $A$ to $B$ mean that (p-)optimal proof systems for $A$ imply
          (p-)optimal proof systems for $B$. Red dashed arrows from $A$ to $B$ mean
          that there exists an oracle $Q$ relative to which $A^Q$ has (p-)optimal proof
          systems and $B^Q$ has no (p-)optimal proof systems. Note that green solid arrows pointing downwards are omitted, since those are trivial and only the minimum number of required red dashed arrows to separate all equivalence classes are drawn.}
            \label{fig:conclusion}
    \end{centering}
\end{figure}
}

Oracle $A$ rules out translations from $\NP$ to any other class in Figure \ref{fig:conclusion} for optimal and p-optimal proof systems. Oracle $B$ rules out translations from $\coNP$ to $\NP$ and thus also to $\NP \cup \coNP$ for p-optimal proof systems. Oracle $O$ rules out translations from $\coNP$ to $\coDP$ for optimal proof systems. 

We obtain the following connection to a conjecture studied by Pudlák \cite{pud17}.
\begin{corollary}
The following statements are equivalent:
\begin{itemize}
\item $\BH$ has no p-optimal proof system.
\item $\TAUT$ has no p-optimal proof systems or $\SAT$ has no p-optimal proof systems (i.e., $\mathsf{CON} \lor \mathsf{SAT}$ in Pudlák's notation).
\end{itemize}
\end{corollary}
\begin{proof}
Figure \ref{fig:conclusion} shows that $\NP \cup \coNP$ and $\BH$ are equivalent with respect to p-optimal proof systems. Hence, $\BH$ has no p-optimal proof systems if and only if $\NP \cup \coNP$ has no p-optimal proof systems. The latter holds if and only if $\SAT$ has no p-optimal proof systems or $\TAUT$ has no p-optimal proof systems.
\end{proof}

\bibliography{paper}
%Note: Always check bibliography label spacing, because lipics.cls has been changed

\end{document}